\DeclareFontFamily{U}{mathx}{\hyphenchar\font45}
\DeclareFontShape{U}{mathx}{m}{n}{
      <5> <6> <7> <8> <9> <10>
      <10.95> <12> <14.4> <17.28> <20.74> <24.88>
      mathx10
      }{}
\DeclareSymbolFont{mathx}{U}{mathx}{m}{n}
\DeclareMathAccent{\widecheck}{\mathalpha}{mathx}{"71}
\DeclareMathAccent{\widecheck}{\mathalpha}{mathx}{"71}
\startlocaldefs \numberwithin{equation}{section}
\theoremstyle{it}
\newtheorem{thm}{Theorem}[section]
\newtheorem{lemma}{Lemma}[section]
\newtheorem{ass}{Assumption}[section]
\newtheorem{pro}{Proposition}[section]
\newtheorem{rem}{Remark}
\newtheorem{alg}{Algorithm}[section]
\def\bftau{\mathbb{\pmb{\tau}}}
\def\ptheta{\mathbb{\pmb{\theta}}}
\begin{document}

\begin{frontmatter}
\title{Adaptive inference for a semiparametric generalized autoregressive conditional heteroskedasticity model}

\begin{aug}
\author{\fnms{Feiyu} \snm{Jiang}\thanksref{m1}\ead[label=e1]{jfy16@mails.tsinghua.edu.cn}},
\author{\fnms{Dong} \snm{Li}\thanksref{m2}\ead[label=e2]{malidong@tsinghua.edu.cn}}
\and
\author{\fnms{Ke} \snm{Zhu}\thanksref{m3,t1}\ead[label=e3]{mazhuke@hku.hk}
\ead[label=u1, url]{http://www.foo.com}}

\affiliation{Tsinghua University\thanksmark{m1}\thanksmark{m2} and University of Hong Kong\thanksmark{m3}}

\thankstext{t1}{Correspondence to: Department of Statistics \& Actuarial Science, University of Hong Kong, Pokfulam Road, Hong Kong.
E-mail address: mazhuke@hku.hk}

\address{
Center for Statistical Science\\
Department of Industry Engineering\\
Tsinghua University\\
Beijing 100084, China\\
\printead{e1}\\
\phantom{E-mail:\ }\printead*{e2}
}

\address
{Department of Statistics\\ \quad \& Actuarial Science\\
University of Hong Kong\\
Hong Kong\\
\printead{e3}
}
\end{aug}

\begin{abstract}
This paper considers a semiparametric generalized autoregressive conditional heteroskedasticity (S-GARCH) model. For this model, we first estimate the time-varying long run component for unconditional variance by the kernel estimator, and then estimate the non-time-varying parameters in GARCH-type short run component by the quasi maximum likelihood estimator (QMLE). We show that the QMLE is asymptotically normal with the parametric convergence rate. Next, we construct a Lagrange multiplier test for linear parameter constraint and a portmanteau test for model checking, and obtain their asymptotic null distributions. Our entire statistical inference procedure works for the non-stationary data with two important features: first, our QMLE and two tests are adaptive to the unknown form of the long run component; second, our QMLE and two tests share the same efficiency and testing power as those in variance targeting method when the S-GARCH model is stationary.

\noindent {\bf JEL Classification}: C12, C14, C58.
\end{abstract}


\begin{keyword}
\kwd{Adaptive inference; Lagrange multiplier test; Portmanteau test; QMLE; Semiparametric BEKK model; Semiparametric GARCH model} 
\end{keyword}

\end{frontmatter}

	\section{Introduction}
	
	Since the seminal work of \citet{Engle:1982} and \citet{Bollerslev:1986}, the generalized autoregressive conditional heteroskedasticity (GARCH) model is perhaps the most influential one to capture and forecast the volatility of economic and financial return data.
	However, the GARCH model is often used under the stationarity assumption.
	Due to business cycle, technological progress, preference change and policy switch,
	the underlying structure of data may change over time (see \citet{Hansen:2001}). Hence,
	a non-stationary GARCH model with time-varying parameters
	seems more appropriate to fit the return data in applications;
	see, for example,
	\citet{MS:2004}, \citet{SG:2005},
    \citet{ER:2008},
	\citet{FSR:2008}, \citet{PR:2014},
	\citet{Truquet:2017} and the references therein.
	
	In this paper, we consider a
	semiparametric GARCH (S-GARCH) model of order $(p, q)$
	\begin{flalign}
	\label{semi_model}	y_t=&\sqrt{\tau_t}u_t\text{    with    } \tau_t=\tau(t/T),\\
	\label{garch_model}u_t=&\sqrt{g_t}\eta_t\text{    and    }
	g_t=\omega_0+\sum_{i=1}^q\alpha_{i0} u_{t-i}^2+\sum_{j=1}^{p}\beta_{j0} g_{t-j},
	\end{flalign}
	for $t=1,..., T$, where $\tau(x)$ is a positive smoothing deterministic function with unknown form on the interval $[0,1]$, $u_t$ is a covariance stationary GARCH$(p, q)$ process with
	$\omega_0>0$, $\alpha_{i0}\geq0$ and $\beta_{j0}\geq0$, and $\{\eta_t\}$ is a sequence of independent and identically distributed (i.i.d) random variables with $E\eta_t^2=1$.
	The specification that $\tau_t$ is a function of ratio $t/T$ rather than time $t$ is initiated
	by \citet{Robinson:1989}, and since then, it has become a common scaling scheme in the time series literature;
	see, for example, \citet{DR:2006},
    \citet{CT:2007},
     \citet{XP:2008},
	\citet{ZW:2009}, \citet{ZhangW:2012}, \citet{ZS:2013},
	and \citet{Zhu:2019} to name just a few.
	In (\ref{semi_model})--(\ref{garch_model}), the smooth long run component $\tau_t$ is to depict time-varying parameters in volatility, and the GARCH-type short run component $u_t$ is to capture the temporal dependence.

	By using different specified forms of $\tau(x)$, the S-GARCH model nests
	many often used models, including, for example, the standard GARCH model in \citet{Bollerslev:1986}, the spline-GARCH model in \citet{ER:2008}, and the smooth-transition GARCH model in \citet{AT:2013}.
	The statistical inference for these models has been well studied. However,
	when the specification of $\tau(x)$ is unspecified,  the statistical inference for the S-GARCH model has been less attempted.
	For $p=q=1$, \citet{HL:2010} considered the estimation for the S-GARCH model.
	For $p=0$ (i.e., $\beta_{j0}\equiv0$),
	\citet{PR:2014} constructed a
	score test to
	check the nullity of all $\alpha_{i0}$, and \citet{Truquet:2017} later proposed a projection-based estimation and
	a related Wald test to detect the
	nullity of some of $\alpha_{i0}$. For the general S-GARCH model, the statistical inference methodologies, including estimation, testing and model
	checking, are not available in the literature.

	In this paper, we provide an entire inference procedure for the S-GARCH model to fill this gap. First, we
	give a two-step estimation for the model: the function $\tau(x)$ is estimated by the kernel estimator at step one, and
	the unknown parameter vector in the parametric process $u_t$ is estimated by the quasi maximum likelihood estimator (QMLE) at step two.
	Although the nonparametric estimator at step one has a slower convergence rate, we show that the QMLE at step two is asymptotically normal with a parametric convergence rate. Moreover, we
	construct a new Lagrange multiplier (LM) test for detecting the linear parameter constraint, and
	propose a new portmanteau test for model checking.
	The asymptotic null distributions of
	the LM and portmanteau tests are established. Since our entire inference methodologies allow for unspecified form of $\tau(x)$ and higher
	order $(p, q)$, they alleviate the potential risk of model-misspecification, leading to a broad application scope to handle the non-stationary data.
Finally, we extend the two-step estimation to a multivariate semiparametric BEKK (S-BEKK) model, and
establish the asymptotic normality of the corresponding QMLE.


	Our two-step estimation was previously adopted by \citet{HL:2010} to study the multivariate S-BEKK(1, 1) model.
	For the univariate S-GARCH model, we find a much simpler expression for the asymptotic variance of the QMLE,
	making the related inference methodologies easy-to-implement. Meanwhile,
	we find that the asymptotic variance of the QMLE is adaptive to the
	unknown form of $\tau(x)$. Consequently, the efficiency of the QMLE and the power of its related LM and portmanteau tests
	are invariant regardless of the form of $\tau(x)$.
    However, we can show that the QMLE of the multivariate S-BEKK model no longer enjoys such an adaptiveness feature as in the univariate S-GARCH model.
	Our two-step estimation also shares the similar idea as the variance targeting (VT) estimation
	in \citet{FHZ:2011}, which is only applicable for the stationary S-GARCH model (i.e., $\tau(x)\equiv \tau_0$).
	The difference is that our first step estimator of $\tau(x)$ is non-parametric, while
	the first step estimator of $\tau_0$
	in the VT method is parametric. It turns out that our method requires more involved proof techniques. Interestingly, when the S-GARCH model is stationary,
	our QMLE is asymptotically as efficient as the QMLE in the second step estimation of the VT method,
	although the first step estimator of our method has a slower convergence rate than that of the VT method.
	On the contrary, when the S-GARCH is non-stationary, our QMLE is still valid with the same efficiency as the stationary case due to its adaptiveness feature, while
	the QMLE in the VT method is not applicable any more.

	The remainder of the paper is organized as follows. Section 2 presents the two-step estimation procedure and establishes its related
	asymptotics.
	Section 3 gives a LM test for the linear parameter constraint.
	Section 4 introduces a portmanteau test and obtains its limiting null distribution.
	Section 5 makes a comparison with other estimation methods. Section 6 extends the two-step estimation into the multivariate S-BEKK model.
	Simulation results are reported in Section 7, and applications are given in Section 8.
	Concluding remarks are offered in Section 9.
	Proofs of all theorems are relegated to the Appendix.

	\section{Two-step estimation}
	Let $\theta=(\alpha_{1},...,\alpha_{q},\beta_{1},...,\beta_{p})'\in\Theta$ be the parameter vector in model (\ref{garch_model}), and
	$\theta_0=(\alpha_{10},...,\alpha_{q0},\beta_{10},...,\beta_{p0})'\in\Theta$ be its true value, where
	$\Theta\subset \mathbb{R}_+^{p+q}$ is the parameter space, and $\mathbb{R}_+=(0,\infty)$.
	This section gives a two-step estimation procedure for the S-GARCH model in (\ref{semi_model})--(\ref{garch_model}).
	Our procedure first estimates the nonparametric function $\tau(x)$ in (\ref{semi_model}), and then estimates
	the parameter vector $\theta_0$ in (\ref{garch_model}).
	
	\subsection{Estimation of $\tau(x)$}
	This subsection provides a (Nadaraya-Watson) kernel estimator of $\tau(x)$. To this end, we first need an assumption
	for the identification of $\tau_t$.
	\begin{ass}\label{ident_tau}
		$(\mathrm{i})$  $\sum_{i=1}^{q}\alpha_i+\sum_{j=1}^{p}\beta_j<1$; $(\mathrm{ii})$  $\omega=1-\sum_{i=1}^{q}\alpha_i-\sum_{j=1}^{p}\beta_j$.
	\end{ass}
	
	\noindent Assumption \ref{ident_tau}(i) is equivalent to the covariance stationarity of model (\ref{garch_model}),
	and  Assumption \ref{ident_tau}(ii) is to ensure $Eu_t^2=1$. Under  Assumption \ref{ident_tau}, we have
	$$
	y_t^2=\tau(t/T)+\tau(t/T)(u_t^2-1):=\tau(t/T)+v_t,
	$$	
	where $v_t:=\tau(t/T)(u_t^2-1)$ is a zero-mean process. In other words, $y_t^2$ can be rewritten as a standard non-parametric regression problem with a time-varying mean. Following  Hafner and Linton (2010), it is reasonable to estimate $\tau(x)$ by
	\begin{flalign*}
	\widetilde{\tau}(x)=\frac{\sum_{s=1}^{T}K_h\big(x-\frac{s}{T}\big)y_s^2}{\sum_{s=1}^{T}K_h\big(x-\frac{s}{T}\big)},
	\end{flalign*}
	where $K_h(\cdot)=K(\cdot/h)/h$ with $K(\cdot)$ being a kernel function and $h$ being a bandwidth. Since
	$(1/T)\sum_{s=1}^{T}K_h(x-s/T)=1+O(1/(Th))$ under mild conditions, it is more convenient to estimate
	$\tau(x)$ by
	\begin{flalign}\label{est_tau}
	\widehat{\tau}(x)=\frac{1}{T}{\sum_{s=1}^{T}K_h\Big(x-\frac{s}{T}\Big)y_s^2}.
	\end{flalign}

	To obtain the asymptotic distribution of $\widehat{\tau}(x)$, the following assumptions are needed.
	\begin{ass}\label{ass_tau}
		$(\mathrm{i})$ $\tau:[0,1]\to \mathbb{R}_{+}$ is twice continuously differentiable;
		$(\mathrm{ii})$  $0<\underline{\tau}\leq \inf_{x\in[0,1]}\tau(x)\leq \sup_{x\in[0,1]}\tau(x)\leq \overline{\tau}$, where $\underline{\tau}$ and $\overline{\tau}$ are two positive constants.
	\end{ass}

	\begin{ass}\label{ass_kernel}
		$(\mathrm{i})$ $K:[-1,1]\to \mathbb{R}_+$ is symmetric about zero, bounded and Lipschitz continuous with $\int_{-1}^{1}K(x)dx=1$ and $C_r=\int_{-1}^{1}x^rK(x)dx$; $(\mathrm{ii})$ $h\to0$ and $Th\to\infty$ as $T\to \infty$.
	\end{ass}

	\begin{ass}\label{ass_ut}
		$Eu_{t}^4<\infty$.
	\end{ass}
	
	\noindent Assumption \ref{ass_tau}(i) imposes a smoothness condition on $\tau(x)$, and similar conditions have been
	used in \citet{DR:2006}, \citet{HL:2010}, and \citet{CH:2016}.
	Assumption \ref{ass_tau}(ii) is in line with the condition that
	the intercept term in the standard GARCH model has positive lower and upper bounds.
	Assumption \ref{ass_kernel}(i) holds for many often used kernels, and the bounded support condition on
	$K(x)$ is just to simplify analysis. Assumption \ref{ass_kernel}(ii) requires that $h$ converges to zero at a slower rate than $T^{-1}$,
	and later a more restrictive $h$ is needed for the asymptotics of the estimator of $\theta_0$.
	Assumption \ref{ass_ut} is stronger than Assumption \ref{ident_tau}(i), and it is used to ensure that the
	asymptotic variance of $\widehat{\tau}(x)$ is well defined.

	Let $z_t=u_t^2-1$. The asymptotic normality of $\widehat{\tau}(x)$ is given below.
	
	\begin{thm}\label{thm_kernel}
		Suppose Assumptions \ref{ident_tau}--\ref{ass_ut} hold. Then, for any $x \in (0,1)$,
			$$\sqrt{Th}\big(\widehat{\tau}(x)-\tau(x)-h^2b(x)\big)\to_{\mathcal{L}}N(0,V(x))\mbox{ as }T\to\infty,$$
		where `$\to_{\mathcal{L}}$' stands for the convergence in distribution,
		$$b(x)=\frac{C_2}{2}\frac{\partial^2 \tau(x)}{\partial x^2}\mbox{ and }V(x)=\tau^2(x)\Big\{\int_{-1}^{1} K^2(x)dx\Big\}\sum_{j=-\infty}^{\infty}E(z_tz_{t-j}).$$
	\end{thm}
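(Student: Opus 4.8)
The plan is to follow the classical Nadaraya--Watson route: split $\widehat{\tau}(x)-\tau(x)$ into a deterministic bias term and a mean-zero stochastic term, and then prove a central limit theorem for the latter. Under Assumption~\ref{ident_tau} we have $y_t^2=\tau(t/T)+\tau(t/T)z_t$ with $z_t=u_t^2-1$ and $Ez_t=0$, so by (\ref{est_tau}),
$$
\widehat{\tau}(x)-\tau(x)=\underbrace{\frac{1}{T}\sum_{s=1}^{T}K_h\Big(x-\tfrac{s}{T}\Big)\big(\tau(s/T)-\tau(x)\big)}_{=:B_T(x)}+\underbrace{\frac{1}{T}\sum_{s=1}^{T}K_h\Big(x-\tfrac{s}{T}\Big)\tau(s/T)z_s}_{=:S_T(x)}.
$$
First I would dispose of $B_T(x)$. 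Replacing the average by the integral $\int_0^1 K_h(x-u)(\tau(u)-\tau(x))\,du$ with a Riemann-sum error of order $(Th)^{-1}$ (using boundedness and Lipschitz continuity of $K$), substituting $u=x-hv$, and expanding $\tau$ to second order --- legitimate by Assumption~\ref{ass_tau}(i) and, for $x\in(0,1)$ with $h$ small, by the fact that the effective support is contained in $(0,1)$ --- the first-order term drops out because $K$ is symmetric ($C_1=0$), leaving $B_T(x)=h^2b(x)+o(h^2)+O((Th)^{-1})$; hence $\sqrt{Th}\,(B_T(x)-h^2b(x))\to0$ and it suffices to show $\sqrt{Th}\,S_T(x)\to_{\mathcal{L}}N(0,V(x))$.

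For the limiting variance, set $\gamma_z(j)=E(z_tz_{t-j})$ and note
$$
\mathrm{Var}\big(\sqrt{Th}\,S_T(x)\big)=\frac{h}{T}\sum_{s=1}^{T}\sum_{s'=1}^{T}K_h\Big(x-\tfrac{s}{T}\Big)K_h\Big(x-\tfrac{s'}{T}\Big)\tau(s/T)\tau(s'/T)\gamma_z(s-s').
$$
The crucial input is that $\{z_t\}=\{u_t^2-1\}$ is stationary with absolutely summable autocovariances: under Assumptions~\ref{ident_tau} and \ref{ass_ut}, $u_t^2$ obeys an ARMA-type recursion driven by a finite-variance white noise, so $\gamma_z(j)$ decays geometrically and $\sum_j|\gamma_z(j)|<\infty$. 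Writing $s'=s-j$, truncating the lag sum at a slowly growing $L$ (the tail being killed by $\sum_{|j|>L}|\gamma_z(j)|\to0$), using the Lipschitz property of $K$ to replace $K_h(x-(s-j)/T)$ by $K_h(x-s/T)$ and the continuity of $\tau$ to replace $\tau(\cdot)$ by $\tau(x)$ on the window $|s/T-x|\le h$, and the approximation $\frac{1}{T}\sum_{s}K_h^2(x-s/T)\to h^{-1}\int_{-1}^{1}K^2(v)\,dv$, one obtains $\mathrm{Var}(\sqrt{Th}\,S_T(x))\to V(x)$.

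The heart of the proof is the central limit theorem for $\sqrt{Th}\,S_T(x)=\sum_{s=1}^{T}c_{T,s}z_s$, a weighted sum of a dependent array with triangular weights $c_{T,s}=\sqrt{h/T}\,K_h(x-s/T)\tau(s/T)$ satisfying $\max_s|c_{T,s}|=O((Th)^{-1/2})\to0$ and $\sum_s c_{T,s}^2=O(1)$. I would establish it via a Bernstein big-block/small-block argument: partition $\{1,\dots,T\}$ into alternating big blocks of length $\ell_T$ and small blocks of length $m_T$ with $m_T\to\infty$, $m_T/\ell_T\to0$ and $\ell_T/(Th)\to0$; show the small-block contribution is $o_p(1)$ from the variance bound above restricted to those indices; couple the big-block partial sums to independent copies, the coupling error being a sum of mixing coefficients that is negligible thanks to the exponential mixing rate of the stationary GARCH process; and apply the Lindeberg--Feller theorem to the resulting independent summands, the Lindeberg condition following from $\max_s|c_{T,s}|\to0$ and $Ez_t^2<\infty$. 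Together with the bias analysis and Slutsky's theorem this gives the stated result. I expect the blocking/coupling step --- reconciling the block lengths with the $(Th)^{-1/2}$ scaling and invoking the mixing (or, alternatively, a martingale-approximation) properties of the nonlinear functional $u_t^2-1$ of the GARCH process --- to be the main technical obstacle; the bias and variance computations are comparatively routine.
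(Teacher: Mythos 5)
Your decomposition into bias plus the stochastic term, the bias expansion using the symmetry of $K$, and the variance calculation (with absolute summability of $\gamma_z(j)=E(z_tz_{t-j})$ obtained from the ARMA-type representation of $u_t^2$, which indeed needs only Assumption \ref{ident_tau} and $Eu_t^4<\infty$) are all sound and follow the standard kernel-regression route. The gap is in the step you yourself identify as the heart of the proof: the central limit theorem. Your blocking/coupling argument leans on ``the exponential mixing rate of the stationary GARCH process,'' but Theorem \ref{thm_kernel} assumes only Assumptions \ref{ident_tau}--\ref{ass_ut}. Mixing of a GARCH process is not a consequence of i.i.d.\ innovations plus moment conditions; it requires a smoothness condition on the law of $\eta_t$, and the paper itself is careful about this: $\beta$-mixing with exponential decay is only asserted (Lemma \ref{lem_mixing}) under the additional Assumption \ref{ass_eta}(i) that $\eta_t$ has a continuous, almost surely positive density, together with Assumption \ref{ass_garch}. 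None of these is among the hypotheses of Theorem \ref{thm_kernel}, so the Bernstein big-block/small-block step, as written, invokes a property that is not available.

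The repair is the alternative you mention only parenthetically, and it should be the main argument rather than a fallback: use the ARMA representation $u_t^2=\omega_0+\sum_{i}(\alpha_{i0}+\beta_{i0})u_{t-i}^2-\sum_{j}\beta_{j0}g_{t-j}(\eta_{t-j}^2-1)+g_t(\eta_t^2-1)$ (the same device the paper uses to pass from $\sum_t z_t$ to $\sum_t g_t(\eta_t^2-1)$ in the proof of Theorem \ref{thm_garch}) to write $z_t=\sum_{k\ge0}c_k\nu_{t-k}$ with geometrically decaying $c_k$ and martingale differences $\nu_t=g_t(\eta_t^2-1)$, which are square-integrable precisely under $Eu_t^4<\infty$. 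Because the kernel weights $c_{T,s}$ vary slowly (Lipschitz $K$, bandwidth window of length $2Th$), a Beveridge--Nelson-type shift of the weights across the finitely summable filter produces an $o_p(1)$ error, and the CLT then follows from the martingale central limit theorem (or Lindeberg for martingale difference arrays) applied to $\sum_s c_{T,s}\nu_s$, with the limiting variance collapsing to the stated long-run variance $\tau^2(x)\{\int K^2\}\sum_j\gamma_z(j)$. One further small caveat, shared with the theorem statement itself: your conclusion $\sqrt{Th}\,(B_T(x)-h^2b(x))\to0$ uses $\sqrt{Th}\,o(h^2)\to0$, which under Assumption \ref{ass_kernel}(ii) alone is not automatic unless $Th^5=O(1)$ (or $\tau''$ is H\"older); it is worth flagging the implicit restriction rather than passing over it.
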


	Based on $\widehat{\tau}(x)$ in (\ref{est_tau}), we estimate $\tau_t$ by $\widehat{\tau}_t=\widehat{\tau}(t/T)$.
	In practice, $\widehat{\tau}_t$ may have the boundary problem. To circumvent this problem, we follow \citet{CH:2016} to adopt the
	reflection method proposed by \citet{HW:1991}. That is, we generate pseudo data $y_t=y_{-t}$ for $-[Th]\leq t\leq -1$ and $y_t=y_{2T-t}$ for $T+1\leq t\leq T+[Th]$, and then modify $\widehat{\tau}_t$ as
	\begin{flalign}\label{boundary_tau}
	\widehat{\tau}_t=\frac{1}{T}\sum_{s=t-[Th]}^{t+[Th]}K_h\Big(\frac{t-s}{T}\Big)y_s^2.
	\end{flalign}
	Intuitively, the reflection method makes
	the boundary points behave similarly as the interior ones. Similar to \citet{CH:2016},  it can be seen that the reflection method
	gives a bias term of order $O(h^2)$, and hence it does not affect the asymptotics of the estimator of $\theta_0$.
	Although $\widehat{\tau}_t$ in (\ref{boundary_tau}) is used for numerical calculations,
	our proofs below will be based on $\widehat{\tau}_t=\widehat{\tau}(t/T)$ in (\ref{est_tau}) to ease the presentation.
	
	\subsection{Estimation of $\theta_0$}
	
	This subsection considers the QMLE of $\theta_0$.
	Based on Assumption \ref{ident_tau}(ii),  we write the parametric $g_t$ in (\ref{garch_model}) as
	\begin{equation}\label{ideal_g}
	g_t(\theta)=\Big(1-\sum_{i=1}^{q}\alpha_i-\sum_{j=1}^{p}\beta_j\Big)+\sum_{i=1}^{q}\alpha_iu_{t-i}^2+\sum_{j=1}^{p}\beta_jg_{t-j}(\theta).
	\end{equation}
	By assuming that $\eta_{t}\sim N(0, 1)$, the log-likelihood function (multiplied
	by -2 and ignoring constants) of $\{y_t\}$ is
	\begin{equation} \label{ideal_llf}
	L_T(\theta)=\sum_{t=1}^{T}l_t(\theta)\quad \text{with}\quad l_t(\theta)=\frac{u_t^2}{g_t(\theta)}+\log g_t(\theta).
	\end{equation}
	Unfortunately, $L_T(\theta)$ is infeasible for computation, since $\{u_t\}$ are unobservable. Thus, we have to replace
	$\{u_t\}$ by $\{\widehat{u}_{t}\}$, and
	consider the following feasible log-likelihood function
	\begin{equation}\label{f_mle}
	\widehat{L}_T(\theta)=\sum_{t=1}^{T}\widehat{l}_t(\theta)\quad \text{with}\quad \widehat{l}_t(\theta)=\frac{\widehat{u}_t^2}{\widehat{g}_t(\theta)}+\log\widehat{g}_t(\theta),
	\end{equation}
	where $\widehat{u}_{t}=y_{t}/\sqrt{\widehat{\tau}_{t}}$, and $\widehat{g}_t(\theta)$ is computed recursively by
	\begin{equation}\label{hatg}
	\widehat{g}_t(\theta)=\Big(1-\sum_{i=1}^{q}\alpha_i-\sum_{j=1}^{p}\beta_j\Big)
	+\sum_{i=1}^{q}\alpha_i\widehat{u}_{t-i}^2+\sum_{j=1}^{p}\beta_j\widehat{g}_{t-j}(\theta)
	\end{equation}
	with given constant initial values $$\widehat{u}_0=u_0,...,\, \widehat{u}_{1-q}=u_{q-1},\, \widehat{g}_0(\theta)=g_0,...,\, \widehat{g}_{1-p}(\theta)=g_{1-p}.$$
	Based on $\widehat{L}_T(\theta)$ in (\ref{f_mle}), our QMLE of $\theta_0$ is defined as
	$$
	\widehat{\theta}_T=\arg\min_{\theta\in\Theta}\widehat{L}_T(\theta).
	$$
	
	To establish the asymptotics of $\widehat{\theta}_{T}$, denote $\mathcal{A}_{\theta}(z)=\sum_{i=1}^{q}\alpha_iz^i$ and $\mathcal{B}_{\theta}(z)=1-\sum_{i=1}^{p}\beta_iz^i$ with the convention $\mathcal{A}_{\theta}(z)=0$ if $q=0$ and $\mathcal{B}_{\theta}(z)=1$ if $p=0$. The following additional assumptions are imposed.
	
	\begin{ass}\label{ass_garch}
		$(\mathrm{i})$ $\Theta$ is compact;
		$(\mathrm{ii})$ if $p>0$, the polynomials $\mathcal{A}_{\theta_0}(z)$ and $\mathcal{B}_{\theta_0}(z)$  have no common roots, $\mathcal{A}_{\theta_0}(1)\neq 0$, and $\alpha_{q0}+\beta_{p0}\neq 0$; $(\mathrm{iii})$  $\theta_0$ is an interior point of $\Theta$.
	\end{ass}

	\begin{ass}\label{ass_u}
		$E|u_t|^{4(1+\delta_0)}<\infty$ for some $\delta_0>0$.
	\end{ass}

	\begin{ass}\label{ass_eta}
		$(\mathrm{i})$ $\eta_t$ has a continuous and almost surely positive density on $\mathbb{R}$ with $E\eta_t^2=1$; $(\mathrm{ii})$  $E|\eta_t|^{4+4/\delta_0+\delta_1}<\infty$ for some $\delta_1>0$, where  $\delta_0>0$ is defined as in Assumption \ref{ass_u}.
	\end{ass}
	
	\begin{ass}\label{ass_bandwidth}
	 $h=c_hT^{-\lambda_h}$ for some $1/4<\lambda_h<1/2$ and $0<c_h<\infty$.
	\end{ass}

	We offer some remarks on the aforementioned assumptions. Assumption \ref{ass_garch} is regular, and it has been used in
	\citet{HK:2003} and \citet{FZ:2004} to study the QMLE for the stationary GARCH model.
	Assumption \ref{ass_u} is stronger than Assumption \ref{ass_ut}, which is needed for
	the variance target estimator in \citet{FHZ:2011} but not for the QMLE in
	\citet{FZ:2004}.
	Assumption \ref{ass_eta}(i) gives the identification condition for $\theta_0$ based on the QMLE, and
	ensures that the GARCH process $u_t$ is $\beta$-mixing (see \citet{CC:2002}).
	Assumption \ref{ass_eta}(ii) is stronger than the condition $E\eta_t^4<\infty$, which is necessary to derive
	the asymptotic normality of the QMLE for the stationary GARCH model (see \citet{HY:2003}).
	We resort to the stronger conditions of $u_t$ and $\eta_t$ in  Assumptions \ref{ass_u} and \ref{ass_eta}(ii)
	due to the existence of $\tau(x)$ in the S-GARCH model.
	Note that if $\eta_t$ has a light tail (for example, $\eta_t\sim N(0, 1)$), Assumption \ref{ass_eta}(ii) holds for a small value of
	$\delta_0$, and $u_t$ (or the data $y_t$) in Assumption \ref{ass_u} is thus allowed to be heavy-tailed.
	Assumption \ref{ass_bandwidth} requires a more restrictive condition on the bandwidth $h$ than Assumption \ref{ass_kernel}(ii), and
	similar conditions have been adopted by \citet{HL:2010}, \citet{PR:2014}, and \citet{Truquet:2017}. The reason is
	because an undersmoothing $h$ is needed to make the estimation bias from $\widehat{\tau}_t$ negligible
	so that the $\sqrt{T}$-convergence of $\widehat{\theta}_{T}$ holds.

	Denote  $\kappa=E\eta_{t}^{4}$, $g_t=g_{t}(\theta_0)$, $\psi_t=\psi_t(\theta_0)$
	with $\psi_t(\theta)=\{\partial g_t(\theta)/\partial \theta\}/g_t(\theta)$, and
	\begin{flalign}\label{J}
	J_{1}&=E(\psi_t\psi_t'),\,\,\, J_{2}=E(g_t^2)E\big(\psi_t/g_t\big)E\big(\psi_t'/g_t\big).
	\end{flalign}
	Now, we are ready to give the asymptotics  of $\widehat{\theta}_T$ in the following theorem.
	
	\begin{thm}\label{thm_garch}
		Suppose Assumptions \ref{ident_tau}--\ref{ass_kernel}, \ref{ass_garch}(i)--(ii), and \ref{ass_u}--\ref{ass_eta} hold. Then,
		
		$\mathrm{(i)}$ $\widehat{\theta}_{T}\to_{p} \theta_0$ as $T\to\infty$;	
		
		$\mathrm{(ii)}$ furthermore, if Assumption \ref{ass_garch}(iii) holds and Assumption \ref{ass_kernel}(ii) is replaced by Assumption \ref{ass_bandwidth},
		$$
		\sqrt{T}(\widehat{\theta}_T-\theta_0)\to_{\mathcal{L}} N(0,\Sigma)\mbox{ as }T\to\infty,
		$$
		where $\Sigma=(\kappa-1)J_1^{-1}(J_1+J_2)J_{1}^{-1}$, and $J_1$ and $J_2$ are defined in (\ref{J}).
	\end{thm}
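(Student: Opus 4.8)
The plan is the standard two-step M-estimation route: establish consistency by comparing the feasible criterion $\widehat L_T$ with the infeasible one $L_T$ built from the true $\{u_t\}$, and then establish asymptotic normality by a mean-value expansion of the feasible score, the only genuinely new ingredient being the extra variance injected by the first-stage kernel estimator $\widehat\tau$.

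For part (i): write $\widehat u_t^2=y_t^2/\widehat\tau_t=u_t^2\,\tau_t/\widehat\tau_t$, so that $\sup_{1\le t\le T}|\widehat\tau_t/\tau_t-1|=o_p(1)$ (a uniform kernel LLN under Assumptions~\ref{ass_tau}--\ref{ass_kernel}) gives $\sup_{\theta\in\Theta}T^{-1}\sum_{t=1}^T|\widehat u_t^2-u_t^2|=o_p(1)$; propagating this through $\widehat g_t(\theta)-g_t(\theta)=\sum_i\alpha_i(\widehat u_{t-i}^2-u_{t-i}^2)+\sum_j\beta_j(\widehat g_{t-j}(\theta)-g_{t-j}(\theta))$ — using $\inf_\Theta(1-\sum_i\alpha_i-\sum_j\beta_j)>0$, the uniform geometric decay of the $\mathcal B_\theta$-lag polynomial on the compact $\Theta$, negligibility of the fixed initial values, and $Eu_t^4<\infty$ — yields $\sup_{\theta\in\Theta}T^{-1}|\widehat L_T(\theta)-L_T(\theta)|=o_p(1)$. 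Combined with the classical stationary-GARCH QMLE argument that $T^{-1}L_T(\theta)\to_p E\,l_t(\theta)$ uniformly on $\Theta$ with unique minimizer $\theta_0$ (identification from Assumptions~\ref{ass_garch}(ii) and \ref{ass_eta}(i), ergodicity of $\{u_t\}$), this gives $\widehat\theta_T\to_p\theta_0$.

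For part (ii): Assumption~\ref{ass_garch}(iii) and (i) give $\sqrt T(\widehat\theta_T-\theta_0)=-\big[T^{-1}\partial^2_{\theta\theta'}\widehat L_T(\theta^*)\big]^{-1}T^{-1/2}\partial_\theta\widehat L_T(\theta_0)$ with $\theta^*$ between $\widehat\theta_T$ and $\theta_0$. Arguing as in (i) for the first and second derivatives (now requiring $E|u_t|^{4(1+\delta_0)}<\infty$ to absorb one extra $\mathcal B_\theta$-recursion), plus a uniform LLN for the infeasible Hessian and $\theta^*\to_p\theta_0$, gives $T^{-1}\partial^2_{\theta\theta'}\widehat L_T(\theta^*)\to_p E\,\partial^2_{\theta\theta'}l_t(\theta_0)=J_1$ (at $\theta_0$, $\partial^2_{\theta\theta'}l_t=(2\eta_t^2-1)\psi_t\psi_t'+(1-\eta_t^2)g_t^{-1}\partial^2_{\theta\theta'}g_t$, with $\psi_t,g_t$ measurable with respect to the past and $\eta_t$ independent of it). The heart is the limit of $T^{-1/2}\partial_\theta\widehat L_T(\theta_0)$. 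Since $\partial_\theta l_t(\theta_0)=(1-\eta_t^2)\psi_t$, I expand $\partial_\theta\widehat l_t(\theta_0)-\partial_\theta l_t(\theta_0)$ to first order in $e_t:=(\widehat\tau_t-\tau_t)/\tau_t$: with $\widehat u_t^2-u_t^2\approx-u_t^2e_t$, $u_t^2/g_t=\eta_t^2$ and the ARCH$(\infty)$ form $g_t=(1-\Phi)+\sum_{k\ge1}\phi_k u_{t-k}^2$, $\Phi:=\mathcal A_{\theta_0}(1)/\mathcal B_{\theta_0}(1)$, the leading term reduces to $\eta_t^2 e_t(1-\Phi)\psi_t/g_t$, while all other contributions to $\partial_\theta\widehat L_T(\theta_0)-\partial_\theta L_T(\theta_0)$ (notably the sum of $(1-\eta_t^2)(\widehat\psi_t-\psi_t)$) are $o_p(T^{1/2})$. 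Replacing $\eta_t^2$ by its mean $1$, discarding the $O(h^2)$ bias of $\widehat\tau_t$ by undersmoothing ($\sqrt T h^2\to0$ under Assumption~\ref{ass_bandwidth}), and interchanging summation in the stochastic part $e_t\approx\tau_t^{-1}T^{-1}\sum_s K_h((t-s)/T)\tau_s z_s$ followed by a kernel LLN, one obtains the asymptotic linear representation $T^{-1/2}\partial_\theta\widehat L_T(\theta_0)=T^{-1/2}\sum_{t=1}^T(1-\eta_t^2)\psi_t+(1-\Phi)E(\psi_t/g_t)\,T^{-1/2}\sum_{t=1}^T z_t+o_p(1)$.

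Finally, $\big(T^{-1/2}\sum_t(1-\eta_t^2)\psi_t,\ T^{-1/2}\sum_t z_t\big)$ is jointly asymptotically normal (both are functionals of the stationary, geometrically $\beta$-mixing $\{u_t\}$ — mixing via Assumption~\ref{ass_eta}(i) and \citet{CC:2002} — the first a martingale-difference sum with covariance $(\kappa-1)J_1$, the second a short-memory sum whose long-run variance $\sum_j E(z_0z_j)$ is finite under Assumption~\ref{ass_u}; a Cram\'er--Wold device plus a CLT for $\beta$-mixing sequences gives the joint limit). Two identities then deliver the stated $\Sigma$. First, the constraint $\omega=1-\sum_i\alpha_i-\sum_j\beta_j$ forces $E\,g_t(\theta)\equiv1$ on $\Theta$, hence $E\,\partial_\theta g_t=E(\psi_t g_t)=0$, which makes the limiting cross-covariance of the two sums vanish. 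Second, the ARMA representation $\mathcal C(L)u_t^2=\omega_0+\mathcal B_{\theta_0}(L)\nu_t$ with $\mathcal C(z)=\mathcal B_{\theta_0}(z)-\mathcal A_{\theta_0}(z)$ and $\nu_t=g_t(\eta_t^2-1)$, together with $E\nu_t^2=(\kappa-1)E(g_t^2)$, $\mathcal C(1)=\omega_0$ and $\mathcal B_{\theta_0}(1)/\mathcal C(1)=(1-\Phi)^{-1}$, gives $\sum_j E(z_0z_j)=(\kappa-1)E(g_t^2)/(1-\Phi)^2$, so the first-stage contribution to the score variance is $(1-\Phi)^2\{\sum_j E(z_0z_j)\}E(\psi_t/g_t)E(\psi_t'/g_t)=(\kappa-1)J_2$. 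Hence $T^{-1/2}\partial_\theta\widehat L_T(\theta_0)\to_{\mathcal L}N(0,(\kappa-1)(J_1+J_2))$ and $\sqrt T(\widehat\theta_T-\theta_0)\to_{\mathcal L}N(0,(\kappa-1)J_1^{-1}(J_1+J_2)J_1^{-1})=N(0,\Sigma)$. The main obstacle is the uniform control of the remainders in the score expansion: the terms $u_{t-k}^2 e_{t-k}$ carry only $4(1+\delta_0)$ moments, $e_t$ must be bounded uniformly in $t$ (kernel boundary behaviour, and the discrepancy between $e_{t-k}$ and $e_t$ must be handled), and each remainder shown to be $o_p(T^{1/2})$ — this is exactly what forces Assumptions~\ref{ass_u}, \ref{ass_eta}(ii) and the undersmoothing rate~\ref{ass_bandwidth}; by contrast the joint CLT and the two algebraic identities, although essential for the clean adaptive form of $\Sigma$, are routine once the linear representation is in hand.
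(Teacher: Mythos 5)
Your proposal is correct and follows essentially the same route as the paper's proof: the same first-order expansion of the feasible score in the relative kernel error $e_t=(\widehat\tau_t-\tau_t)/\tau_t$, the same lag replacement $e_{t-k}\approx e_t$ and ARCH($\infty$) intercept identity producing the factor $\omega_0/\gamma_0=1-\Phi$, hence exactly the paper's linear representation (\ref{first_de}), with the remainder control (the paper's Proposition \ref{keypro} and Lemmas \ref{lem_diff}--\ref{pro_u9}) correctly flagged as the technical core, and the same identities ($E(\partial g_t(\theta_0)/\partial\theta)=0$ and the ARMA representation of $u_t^2$) delivering $\Sigma$. The only cosmetic difference is the final step, where you apply a joint mixing CLT to $\big(T^{-1/2}\sum_t(1-\eta_t^2)\psi_t,\ T^{-1/2}\sum_t z_t\big)$ with the cross-covariance annihilated by $E(g_t\psi_t)=0$, whereas the paper first substitutes $T^{-1/2}\sum_t z_t=(\gamma_0/\omega_0)T^{-1/2}\sum_t g_t(\eta_t^2-1)+o_p(1)$ and uses a single martingale CLT as in (\ref{bahadur}); the resulting variance algebra is identical.
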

	
	\begin{rem}\label{rem_2}
		We can simply estimate $\Sigma$ by its sample version $\widehat{\Sigma}_{T}$, where
		\begin{flalign}\label{est_Sigma}
		\widehat{\Sigma}_{T}=(\widehat{\kappa}_{T}-1)\widehat{J}_{1T}^{-1}(\widehat{J}_{1T}+\widehat{J}_{2T})\widehat{J}_{1T}^{-1}
		\end{flalign}
		with
		\begin{flalign}\label{residual_eta}
		\begin{split}		 \widehat{\kappa}_{T}=\frac{1}{T}\sum_{t=1}^{T}\widehat{\eta}_t^4,\,\,\widehat{J}_{1T}=\frac{1}{T}\sum_{t=1}^{T}\widehat{\psi}_t\widehat{\psi}_t'
		\mbox{ and }\,\,
		\widehat{J}_{2T}=\Big(\frac{1}{T}\sum_{t=1}^{T}\widehat{g}_t^2\Big)\Big(\frac{1}{T}\sum_{t=1}^{T}\frac{\widehat{\psi}_t}{\widehat{g}_t}\Big)
		\Big(\frac{1}{T}\sum_{t=1}^{T}\frac{\widehat{\psi}_t'}{\widehat{g}_t}\Big).
		\end{split}
		\end{flalign}
		Here, $\widehat{\eta}_{t}=\widehat{\eta}_{t}(\widehat{\theta}_{T})$ with $\widehat{\eta}_{t}(\theta)=\widehat{u}_{t}/\sqrt{\widehat{g}_{t}(\theta)}$,
		$\widehat{\psi}_{t}=\widehat{\psi}_{t}(\widehat{\theta}_{T})$ with $\widehat{\psi}_{t}(\theta)=\{\partial \widehat{g}_t(\theta)/\partial \theta\}/\widehat{g}_t(\theta)$, and $\widehat{g}_{t}=\widehat{g}_{t}(\widehat{\theta}_{T})$. Under the conditions of Theorem \ref{thm_garch}, we have $\widehat{\Sigma}_{T}\to_{p}\Sigma$ as $T\to\infty$.
	\end{rem}

	Interestingly, the preceding theorem shows that
	the asymptotic variance of
	$\widehat{\theta}_T$ is independent of $\tau(x)$. Following
	the viewpoint of
	\citet{Robinson:1987}, it means that $\widehat{\theta}_T$ is adaptive to
	the unknown form of $\tau(x)$.
	This adaptiveness feature
	ensures that the efficiency of $\widehat{\theta}_T$ and the power of its related
	tests are unchanged regardless of the form of $\tau(x)$.

	\section{The LM test}
	
	Since \citet{Engle:1982} and \citet{Bollerslev:1986}, testing for the nullity of the parameters in the GARCH model is important in applications.
	This problem can be further generalized to consider the following linear constraint hypothesis
	\begin{equation}\label{null}
	\mathbb{H}_0: R\theta_0=r,
	\end{equation}
	where $R$ is a given $d\times (p+q)$ matrix of rank $d$, and $r$ is a given $d\times 1$ constant vector.
	In this section, we construct a Lagrange multiplier (LM) test statistic $LM_T$ for $\mathbb{H}_0$, where
	$$LM_T=\frac{1}{T}\frac{\partial\widehat{L}_T(\widehat{\theta}_{T|0})}{\partial\theta'}
	\widehat{J}_{1T|0}^{-1}R'\big(R\widehat{\Sigma}_{T|0}R'\big)^{-1}R\widehat{J}_{1T|0}^{-1}
	\frac{\partial\widehat{L}_T(\widehat{\theta}_{T|0})}{\partial\theta}.$$
	Here, $\widehat{\theta}_{T|0}$ is the constrained QMLE of $\theta_0$ under $\mathbb{H}_0$, and
	$\widehat{J}_{1T|0}$ and $\widehat{\Sigma}_{T|0}$ are defined in the same way as
	$\widehat{J}_{1T}$ and $\widehat{\Sigma}_{T}$, respectively, with $\widehat{\theta}_{T}$ replaced by $\widehat{\theta}_{T|0}$.
	The following theorem gives the limiting null distribution of $LM_T$.
	
	\begin{thm}\label{LMtest}
		Suppose the conditions in Theorem \ref{thm_garch}(i) hold, with Assumption \ref{ass_kernel}(ii) replaced by Assumption \ref{ass_bandwidth}. Then, under $\mathbb{H}_0$,
		$$
		LM_T\to_{\mathcal{L}}\chi^2_d\mbox{ as }T\to\infty,
		$$
		where $\chi^2_{d}$ is the chi-square distribution with the degrees of freedom $d$.
	\end{thm}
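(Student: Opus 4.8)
The plan is to follow the classical three-statistics argument (Wald/LR/LM equivalence) adapted to the present two-step setting, reducing everything to the asymptotic expansion of the score and the constrained estimator that is already implicit in the proof of Theorem \ref{thm_garch}. First I would establish the key expansion for the \emph{infeasible} score based on the true $u_t$: under the conditions of Theorem \ref{thm_garch}, one has
$$
\frac{1}{\sqrt{T}}\frac{\partial\widehat L_T(\theta_0)}{\partial\theta}
=\frac{1}{\sqrt{T}}\sum_{t=1}^{T}(1-\eta_t^2)\psi_t+o_p(1)\to_{\mathcal L}N\big(0,(\kappa-1)(J_1+J_2)\big),
$$
where the extra $J_2$ term (relative to the stationary QMLE) comes from the first-step kernel estimation error $\widehat\tau_t-\tau_t$; this is exactly the variance inflation that produces $\Sigma=(\kappa-1)J_1^{-1}(J_1+J_2)J_1^{-1}$ in Theorem \ref{thm_garch}(ii), and Assumption \ref{ass_bandwidth} (undersmoothing) is what kills the $O(\sqrt T h^2)$ bias contribution. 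I would also record the Hessian convergence $T^{-1}\partial^2\widehat L_T(\theta_0)/\partial\theta\partial\theta'\to_p 2J_1$, again borrowed from the proof of Theorem \ref{thm_garch}.

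Next I would analyze the constrained estimator $\widehat\theta_{T|0}$. A standard Lagrangian argument under $\mathbb H_0:R\theta_0=r$ gives the joint linearization
$$
\sqrt{T}(\widehat\theta_{T|0}-\theta_0)
=-\Big(J_1^{-1}-J_1^{-1}R'(RJ_1^{-1}R')^{-1}RJ_1^{-1}\Big)\,
\frac{1}{2\sqrt{T}}\frac{\partial\widehat L_T(\theta_0)}{\partial\theta}+o_p(1),
$$
and correspondingly
$$
\frac{1}{\sqrt{T}}\frac{\partial\widehat L_T(\widehat\theta_{T|0})}{\partial\theta}
=\Big(I-2J_1\cdot\tfrac12 J_1^{-1}+\cdots\Big)\frac{1}{\sqrt{T}}\frac{\partial\widehat L_T(\theta_0)}{\partial\theta}+o_p(1),
$$
which after simplification shows that $T^{-1/2}\partial\widehat L_T(\widehat\theta_{T|0})/\partial\theta$ equals $R'(RJ_1^{-1}R')^{-1}RJ_1^{-1}$ times $T^{-1/2}\partial\widehat L_T(\theta_0)/\partial\theta$ up to $o_p(1)$; in particular the constrained score lives (asymptotically) in the column space of $R'$. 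Combining this with the consistency of the plug-in estimators, $\widehat J_{1T|0}\to_p J_1$ and $\widehat\Sigma_{T|0}\to_p\Sigma$ (by Remark \ref{rem_2} applied at $\widehat\theta_{T|0}$, together with $\widehat\theta_{T|0}\to_p\theta_0$ under $\mathbb H_0$), the quadratic form $LM_T$ becomes, up to $o_p(1)$,
$$
LM_T=\Big(\tfrac{1}{\sqrt T}S_T\Big)'J_1^{-1}R'\big(R\Sigma R'\big)^{-1}RJ_1^{-1}\Big(\tfrac{1}{\sqrt T}S_T\Big),
\qquad S_T:=\frac{\partial\widehat L_T(\theta_0)}{\partial\theta}.
$$

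To finish I would combine the two facts: $T^{-1/2}S_T\to_{\mathcal L}N(0,(\kappa-1)(J_1+J_2))$, and $RJ_1^{-1}\cdot T^{-1/2}S_T$ is precisely (asymptotically) $(\kappa-1)^{-1}\cdot(R\Sigma R')$-normalized, since $R\Sigma R'=(\kappa-1)RJ_1^{-1}(J_1+J_2)J_1^{-1}R'$ is exactly the asymptotic covariance of $RJ_1^{-1}T^{-1/2}S_T$. Hence $(R\Sigma R')^{-1/2}RJ_1^{-1}T^{-1/2}S_T\to_{\mathcal L}N(0,I_d)$, and because the constrained-score expansion above forces the whole quadratic form to coincide asymptotically with $\|(R\Sigma R')^{-1/2}RJ_1^{-1}T^{-1/2}S_T\|^2$, the continuous mapping theorem yields $LM_T\to_{\mathcal L}\chi^2_d$. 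The main obstacle is the first step: justifying that replacing $\{u_t\}$ by $\{\widehat u_t=y_t/\sqrt{\widehat\tau_t}\}$ (and $g_t$ by $\widehat g_t$) perturbs the score only by the single additional Gaussian term generating $J_2$, with all remaining remainders $o_p(1)$ under Assumptions \ref{ass_u}, \ref{ass_eta} and the undersmoothing bandwidth \ref{ass_bandwidth}; this is the delicate stochastic-equicontinuity/uniform-expansion argument, but it is already carried out in the proof of Theorem \ref{thm_garch} and can be invoked here. The constrained-estimator linearization and the plug-in consistency are then routine, mirroring the classical LM theory for GARCH in \citet{FZ:2004}.
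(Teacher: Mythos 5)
Your overall route is the one the paper takes: expand the feasible score at $\theta_0$ using the kernel-estimation lemmas already developed for Theorem \ref{thm_garch}, linearize the constrained QMLE under $\mathbb{H}_0$, conclude that $T^{-1/2}\partial\widehat L_T(\widehat\theta_{T|0})/\partial\theta = R'(RJ_1^{-1}R')^{-1}RJ_1^{-1}\,T^{-1/2}\partial\widehat L_T(\theta_0)/\partial\theta + o_p(1)$, and combine this with $\widehat J_{1T|0}\to_p J_1$, $\widehat\Sigma_{T|0}\to_p\Sigma$ and the CLT for the score to obtain $\chi^2_d$; the footnote in Section 3 confirms the paper's argument runs through exactly this projected constrained score. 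So the structure is right and the genuinely hard analytic work (replacing $u_t$ by $\widehat u_t$) is correctly delegated to the proof of Theorem \ref{thm_garch}.

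Three points need repair. First, your displayed score expansion is internally inconsistent: $T^{-1/2}\sum_{t}(1-\eta_t^2)\psi_t$ has asymptotic variance $(\kappa-1)J_1$, not $(\kappa-1)(J_1+J_2)$. The correct expansion is the paper's (\ref{first_de})--(\ref{bahadur}): the score equals $-T^{-1/2}\sum_{t}(\eta_t^2-1)\{\psi_t-E(\psi_t/g_t)g_t\}+o_p(1)$, and it is the extra term $E(\psi_t/g_t)g_t(\eta_t^2-1)$ (the $z_t$-correction created by $\widehat\tau_t-\tau_t$) that generates $J_2$; your final quadratic-form computation must use this influence function, not the bare $(1-\eta_t^2)\psi_t$. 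Second, under the paper's normalization ($\widehat L_T$ is already $-2$ times the log-likelihood) the Hessian limit is $J_1$, not $2J_1$ (see (\ref{second_de})); your compensating factor $1/2$ in the constrained linearization makes the two slips cancel, so the representation of the constrained score comes out correct, but the intermediate displays are wrong as stated. Third, Theorem \ref{LMtest} assumes only the conditions of Theorem \ref{thm_garch}(i), so interiority of $\theta_0$ (Assumption \ref{ass_garch}(iii)) is not imposed, and the paper stresses that the chi-square limit persists when some coefficients are zero under $\mathbb{H}_0$. Your Lagrangian linearization of $\widehat\theta_{T|0}$ is the textbook interior-point argument: it covers constraints that pin the null coefficients at zero (a regular problem in the reduced parameter space), but it does not by itself justify the boundary cases the theorem is meant to cover, so that step needs either an explicit qualification or an argument built directly on the asymptotic normality of $RJ_1^{-1}T^{-1/2}\partial\widehat L_T(\widehat\theta_{T|0})/\partial\theta$ without presupposing the standard $\sqrt T$-linearization of the constrained estimator.
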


	Based on Theorem \ref{LMtest}, we can set the rejection region of $LM_{T}$ at level $\alpha$ as
	$\{LM_T>\chi_d^2(\alpha)\},$
	where  $\chi_d^2(\alpha)$ is the  $\alpha$-upper percentile of $\chi_d^2$.
	
	As $\widehat{\theta}_{T}$, our $LM_T$ has the adaptiveness feature, and it has a much broader application scope than the existing LM tests.
	Specifically, the LM test in \citet{Bollerslev:1986} is only applicable for
	the stationary GARCH model, but our $LM_T$ has the superior ability to tackle the non-stationary S-GARCH model.
	For the case of $p=0$, the score test in \citet{PR:2014} can detect the null
	hypothesis that all $\alpha_{i0}$ are zeros, and the Wald test in \citet{Truquet:2017} can
	check the null hypothesis that some of $\alpha_{i0}$ are zeros.
	However, it seems non-trivial to extend these two tests for the general null hypotheses in (3.1), although the score test in
\citet{PR:2014} can be extended to detect the null hypothesis that all $\alpha_{i0}$ and $\beta_{j0}$ are zeros.
	Besides $LM_T$, the Wald and likelihood ratio (LR) tests could also be constructed for $\mathbb{H}_0$.
	When some of $\alpha_{i0}$ or $\beta_{j0}$ are allowed to be zeros under $\mathbb{H}_0$, the Wald and LR tests
	render non-standard limiting null distributions (see \citet{FZ:2009} for general discussions), which have to be simulated
	by the bootstrap method. In contrast, $LM_T$ always has the standard chi-square limiting null distribution, even when
all of the null coefficients are not pinned down in $\mathbb{H}_0$\footnote{Following the arguments in \citet{FZ:2007}, our QMLE $\widehat{\theta}_{T}$ can not be asymptotically normal if $\theta_0$ lies on the boundary of $\Theta$ (i.e.,
 some of $\alpha_{i0}$ or $\beta_{j0}$ are zeros). Since the Wald (including $t$) and LR tests depend on $\widehat{\theta}_{T}$, they can not have the standard chi-square limiting null distribution any more if
$\theta_0$ lies on the boundary of $\Theta$ under $\mathbb{H}_0$. Unlike Wald and LR tests, the limiting distribution of our LM test
$LM_{T}$ depends on the one of
$(RJ_1^{-1}R')^{-1}RJ_1^{-1}\frac{1}{\sqrt{T}}\frac{\partial\widehat{L}_T(\widehat{\theta}_{T|0})}{\partial\theta}$,
which
is always asymptotically normal no matter whether $\theta_0$ lies on the boundary of $\Theta$ or not. Hence, it turns out that
$LM_T$ always has the standard chi-square limiting null distribution. For more discussions on this context, we refer to \citet{Pedersen:2017} and \citet{JLZ:2020}.}.
For practical convenience, we thus only focus on the LM test in this paper, and the consideration of Wald and LR tests is left for future study.
	
	\section{Portmanteau test}

	Since \citet{LB:1978}, the portmanteau test and its variants have been a common tool for checking the model adequacy in time series analysis.
	For the stationary GARCH model,  \citet{LM:1994} proposed a portmanteau test for model checking. However, their test is invalid for the non-stationary S-GARCH model.
	In this section, we follow the idea of \citet{LM:1994} to construct a new portmanteau test to check the adequacy of S-GARCH model,
	and our test seems to be the first formal try in the context of semiparametric time series analysis.

	Let $\widehat{\eta}_{t}$ be the model residual defined as in (\ref{residual_eta}). The idea of our portmanteau test is based
	on the fact that $\{\eta_{t}^2\}$ is a sequence of uncorrelated random variables under
	(\ref{semi_model})--(\ref{garch_model}). Hence, if the S-GARCH model is correctly specified, it is expected that the
	sample autocorrelation function of $\{\widehat{\eta}_{t}^2\}$ at lag $k$, denoted by $\widehat{\rho}_{T,k}$, is
	close to zero, where
	$$\widehat{\rho}_{T,k}=\frac{\sum_{t=k+1}^{T}\big(\widehat{\eta}^2_{t}
		-\overline{\widehat{\eta}^2}\big)\big(\widehat{\eta}^2_{t-k}-\overline{\widehat{\eta}^2}\big)}{\sum_{t=1}^{T}\big(\widehat{\eta}^2_{t}
		-\overline{\widehat{\eta}^2}\big)^2}$$
	with $\overline{\widehat{\eta}^2}$ being the sample mean of $\{\widehat{\eta}_{t}^2\}$.
	Let $\widehat{\rho}_T=(\widehat{\rho}_{T,1},...,\widehat{\rho}_{T,\ell})'$ for an integer $\ell\geq1$,
	and
	\begin{flalign}
	\Sigma_{P1}&=(I_{\ell}, -H,-DJ_1^{-1})\in \mathbb{R}^{\ell\times(\ell+1+p+q)},\label{4_1}\\
	\Sigma_{P2}&=
	\left(\begin{matrix}
	(\kappa-1)I_{\ell}& F & D-FE\big(\frac{\psi_t'}{g_t}\big)\\
	*&Eg_t^2&-Eg_t^2E\big(\frac{\psi_t'}{g_t}\big)\\
	*&*&J_1+J_2
	\end{matrix}\right)\in \mathbb{R}^{(\ell+1+p+q)\times(\ell+1+p+q)}\label{4_2}
	\end{flalign}
	be a symmetric matrix, where $D=(D_1',..., D_\ell')'$ with $D_k=E\{(\eta_{t-k}^2-1)\psi_t'\}$,
	$H=(H_1,..., H_\ell)'$ with $H_k=E\{g_t^{-1}(\eta_{t-k}^2-1)\}$, and $F=(F_1,...,F_{\ell})'$ with $F_k=E\{g_t(\eta_{t-k}^2-1)\}$.
	To facilitate our portmanteau test, we need
	the limiting distribution of $\widehat{\rho}_T$ below.
	
	\begin{thm}\label{thm_port}
		Suppose the conditions in Theorem \ref{thm_garch}(ii) hold. Then, if the S-GARCH model in  (\ref{semi_model})--(\ref{garch_model}) is
		correctly specified,
		$$
		\sqrt{T}\widehat{\rho}_T\to_{\mathcal{L}} N(0,\Sigma_{P})\mbox{ as }T\to\infty,
		$$
		where $\Sigma_{P}=(\kappa-1)^{-1}\Sigma_{P1}\Sigma_{P2}\Sigma_{P1}'$, and $\Sigma_{P1}$ and $\Sigma_{P2}$ are defined in (\ref{4_1})--(\ref{4_2}).
	\end{thm}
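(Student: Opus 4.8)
The plan is to adapt the argument of \citet{LM:1994} for the stationary‑GARCH portmanteau test, inserting two further layers of approximation — one for the nonparametric pilot $\widehat{\tau}_t$ and one for the QMLE $\widehat{\theta}_T$. First I would reduce the residual autocorrelation to a residual autocovariance. Expanding $\widehat{\eta}_t^2=\widehat{u}_t^2/\widehat{g}_t(\widehat{\theta}_T)$ about the infeasible $\eta_t^2=u_t^2/g_t$ shows $\overline{\widehat{\eta}^2}-1=O_p(T^{-1/2})$, so the sample‑mean centering changes the numerator of $\widehat{\rho}_{T,k}$ only by $O_p(1)$, while the denominator converges in probability to $E(\eta_t^2-1)^2=\kappa-1$ (the same uniform‑consistency arguments that give $\widehat{\Sigma}_T\to_p\Sigma$ in Remark \ref{rem_2}). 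Hence $\sqrt{T}\,\widehat{\rho}_{T,k}=(\kappa-1)^{-1}\chi_{T,k}+o_p(1)$, where $\chi_{T,k}=T^{-1/2}\sum_{t=k+1}^T(\widehat{\eta}_t^2-1)(\widehat{\eta}_{t-k}^2-1)$.

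Next I would linearize $\chi_{T,k}$. The error $\widehat{\eta}_t^2-\eta_t^2$ splits into a parametric part $-\eta_t^2\psi_t'(\widehat{\theta}_T-\theta_0)$ and a nonparametric part governed by $\delta_t:=(\widehat{\tau}_t-\tau_t)/\tau_t$. Substituting the parametric part into the time‑$t$ (leading) factor of $\chi_{T,k}$ and using $E[\eta_t^2(\eta_{t-k}^2-1)\psi_t']=D_k$ (since $\eta_t$ is independent of $\mathcal{F}_{t-1}$) produces $-D_k\sqrt{T}(\widehat{\theta}_T-\theta_0)$, while substituting it into the time‑$(t-k)$ (lagged) factor gives $o_p(1)$ because $E[(\eta_t^2-1)\eta_{t-k}^2\psi_{t-k}]=0$. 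The decisive feature is the nonparametric part: the perturbation $\widehat{u}_t^2\approx u_t^2(1-\delta_t)$ of the numerator is nearly cancelled by the induced perturbation of $\widehat{g}_t$, because $g_t=c+\sum_{n\ge1}\lambda_n u_{t-n}^2$ is affine in the lagged $u^2$'s (with $c=\omega_0/(1-\sum_j\beta_{j0})$ and $\sum_n\lambda_n=1-c$) and the kernel makes $\delta_{t-n}\approx\delta_t$ for the geometrically light $\lambda_n$; only $\widehat{\eta}_t^2-\eta_t^2\approx-c\,g_t^{-1}\eta_t^2\delta_t$ survives. This is exactly the cancellation behind the adaptiveness of $\widehat{\theta}_T$ in Theorem \ref{thm_garch}, and it is what produces the $g_t^{-1}$ in $H_k=E\{g_t^{-1}(\eta_{t-k}^2-1)\}$.

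With $\delta_t\approx T^{-1}\sum_s K_h((t-s)/T)(u_s^2-1)$, I would then delocalize the kernel smoothing: swapping the order of summation and using that $T^{-1}\sum_s K_h((t-s)/T)\{g_t^{-1}\eta_t^2(\eta_{t-k}^2-1)\}$ concentrates at its mean $H_k$ reduces the nonparametric contribution to $\chi_{T,k}$ to $-cH_k\,T^{-1/2}\sum_s(u_s^2-1)$. Since $u_t^2-1=\Lambda(L)^{-1}\{g_t(\eta_t^2-1)\}$ with $\Lambda(L)=1-\sum_n\lambda_n L^n$ and $\Lambda(1)=c$, and $\{g_t(\eta_t^2-1)\}$ is a martingale difference (hence white noise) with variance $(\kappa-1)Eg_t^2$, the standard partial‑sum decomposition for linear processes gives $c\,T^{-1/2}\sum_s(u_s^2-1)=\gamma_T+o_p(1)$ with $\gamma_T:=T^{-1/2}\sum_s g_s(\eta_s^2-1)$, of asymptotic variance $(\kappa-1)Eg_t^2$. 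Together with $\sqrt{T}(\widehat{\theta}_T-\theta_0)=J_1^{-1}b_T+o_p(1)$ from the proof of Theorem \ref{thm_garch}, where $b_T:=-T^{-1/2}\partial\widehat{L}_T(\theta_0)/\partial\theta=T^{-1/2}\sum_t\psi_t(\eta_t^2-1)-E(\psi_t/g_t)\,\gamma_T+o_p(1)$, I obtain
$$\sqrt{T}\,\widehat{\rho}_T=(\kappa-1)^{-1}\,\Sigma_{P1}\,\big(a_T',\,\gamma_T,\,b_T'\big)'+o_p(1),\qquad a_T:=\Big(T^{-1/2}\textstyle\sum_t(\eta_t^2-1)(\eta_{t-k}^2-1)\Big)_{k=1}^{\ell}.$$
A martingale/mixing central limit theorem (Cram\'er--Wold device, using the $\beta$‑mixing of $\{u_t\}$ from Assumption \ref{ass_eta}(i) and the moments in Assumptions \ref{ass_u}--\ref{ass_eta}) gives $(a_T',\gamma_T,b_T')'\to_{\mathcal{L}}N(0,(\kappa-1)\Sigma_{P2})$; the blocks of $(\kappa-1)\Sigma_{P2}$ are routine second‑moment computations, e.g. $\mathrm{Var}(a_{T,k})=(\kappa-1)^2$, $\mathrm{Var}(\gamma_T)=(\kappa-1)Eg_t^2$, $\mathrm{Var}(b_T)=(\kappa-1)(J_1+J_2)$ (using $E[\partial g_t/\partial\theta]=0$), $\mathrm{Cov}(a_{T,k},\gamma_T)=(\kappa-1)F_k$, $\mathrm{Cov}(a_{T,k},b_T)=(\kappa-1)(D_k-F_kE(\psi_t'/g_t))$, and $\mathrm{Cov}(\gamma_T,b_T)=-(\kappa-1)Eg_t^2E(\psi_t'/g_t)$. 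Slutsky's theorem then yields $\sqrt{T}\,\widehat{\rho}_T\to_{\mathcal{L}}N(0,(\kappa-1)^{-1}\Sigma_{P1}\Sigma_{P2}\Sigma_{P1}')$.

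The main obstacle is the delocalization step together with making the above cancellation rigorous: one has to show that the numerous double and triple sums weighted by $K_h$ arising from $\widehat{\tau}$ interacting with the lagged residuals and with the $\widehat{g}_t$ recursion are $o_p(1)$ after multiplication by $\sqrt{T}$. Here the strengthened moment conditions (Assumptions \ref{ass_u}--\ref{ass_eta}) bound the variances of these sums, and — crucially — the undersmoothing bandwidth (Assumption \ref{ass_bandwidth}) kills the $O(h^2)$ smoothing bias and the $O((Th)^{-1})$ edge terms at the $\sqrt{T}$ scale; this is where the argument departs most from the stationary proof of \citet{LM:1994}. Once the linear representation is in place, the joint CLT and the identification of $\Sigma_{P2}$ are lengthy but routine, paralleling the computations already needed for Theorems \ref{thm_kernel} and \ref{thm_garch}.
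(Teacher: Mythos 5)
Your proposal takes essentially the same route as the paper's proof: you reduce $\sqrt{T}\widehat{\rho}_{T,k}$ to the residual autocovariance, split it into true--error cross terms, show the lagged-error and error-times-error pieces are negligible, and identify the surviving contribution as $-D_k\sqrt{T}(\widehat{\theta}_T-\theta_0)-\frac{\omega_0}{\gamma_0}H_k\,T^{-1/2}\sum_t z_t$, which after the ARMA identity for $z_t$ and the Bahadur expansion of $\widehat{\theta}_T$ gives exactly the paper's linear representation and the martingale CLT with covariance $\Sigma_{P2}$. The kernel-delocalization bounds you defer as ``lengthy but routine'' are precisely what Proposition \ref{keypro} and Lemmas \ref{r2}--\ref{r4} supply, so the decomposition and all intermediate quantities coincide with the paper's.
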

	
	As in Remark \ref{rem_2}, $\Sigma_{P}$ can be consistently estimated by its sample version $\widehat{\Sigma}_{P}$. Based on
	$\widehat{\Sigma}_{P}$, our portmanteau test statistic is defined as
	$$Q_{T}(\ell)=T\widehat{\rho}_T'\widehat{\Sigma}_P^{-1}\widehat{\rho}_T.$$
	If the S-GARCH model is correctly specified,   $Q_{T}(\ell)\to_{\mathcal{L}}\chi^2_{\ell}$ as $T\to\infty$ by Theorem \ref{thm_port}.
	So, if the value of $Q_{T}(\ell)$ is larger than $\chi_\ell^2(\alpha)$, the fitted S-GARCH model is inadequate at level $\alpha$. Otherwise, it is adequate at level $\alpha$. In practice, the choice of lag $\ell$ depends on the frequency of the series, and one often chooses
	$\ell$ to be $O(\log(T))$, delivering 6, 9 or 12 for a moderate $T$.
	We shall hightlight that $Q_{T}(\ell)$ also has the adaptiveness feature as $LM_T$, and it is essential to detect the adequacy of the short run GARCH component $u_t$ but not the long run component $\tau_t$, since the form of $\tau_t$ is unspecified in the S-GARCH model.\footnote{To detect whether $\tau_t$ is a constant over time
(i.e., $y_t$ follows a standard GARCH model), one can use the strict stationarity test in \citet{Hong:2017} to check the variance stationarity
of $y_t$.}

	\section{Comparisons with other estimation methods}\label{VT}
	
	This section compares our two-step estimation method with the
	three-step estimation method in \citet{HL:2010} and the
	variance targeting (VT) estimation method in \citet{FHZ:2011}.
	
	\subsection{Comparison with three-step estimation method}
	
	Our two-step estimation method is the same as the first two
	estimation steps in \citet{HL:2010}, where they gave the following asymptotic normality
	result for the S-GARCH($1, 1$) model
	$$\sqrt{T}(\widehat{\theta}_T-\theta_0)\to_{\mathcal{L}} N(0,\Sigma_{\dag})\mbox{ as }T\to\infty,
	$$
	where $\Sigma_{\dag}=J_1^{-1}[(\kappa-1)J_1+J_3+J_4+J_4']J_1^{-1}$ with
	$J_3=(M-E\psi_t)(M-E\psi_t)'Z_1$, $J_4=Z_2(M-E\psi_t)'$,
	\begin{flalign*}
	M=\sum_{j=0}^{\infty}\alpha_{10}\beta_{10}^{j}E\Big(\frac{u^2_{t-j-1}\psi_t}{g_t}\Big),\,\,
	Z_{1}=\sum_{j=-\infty}^{\infty}E(z_tz_{t-j})\,\,\mbox{ and }\,\,Z_{2}=\sum_{j=0}^{\infty}E\big\{z_{t}(\eta_{t-j}^2-1)\psi_{t-j}\big\}.
	\end{flalign*}
	Indeed, we can show that $\Sigma_{\dag}$ and $\Sigma$ are equivalent.
	Since $\Sigma_{\dag}$  involves three infinite summations $M$, $Z_1$ and $Z_2$,
a consistent estimator for $\Sigma_{\dag}$ then involves laborious tuning and smoothing.
On the contrary, our $\Sigma$ has a much simpler expression, and it
	can be directly estimated as shown in Remark \ref{rem_2}.
	
	In \citet{HL:2010}, they further proposed an updated estimator at step three, and claimed
	this updated estimator can achieve the semiparametric efficiency bound when $\eta_t\sim N(0, 1)$.
	Following their idea, we can also update our estimator $\widehat{\theta}_{T}$ to $\widecheck{\theta}_{T}$ at step three.
Specifically, we first update the nonparametric part estimator $\widehat{\tau}_t$ to $\widecheck{\tau}_{t}=\widecheck{\tau}(t/T)$, where
\begin{flalign*}
\widecheck{\tau}(x)=\widehat{\tau}(x)-\Big[\frac{1}{T}\sum_{t=1}^{T}K_h\Big(x-\frac{t}{T}\Big)\frac{\partial^2 \widehat{l}_t(\widehat{\tau},\widehat{\theta}_T)}{\partial\tau^2}\Big]^{-1}\Big[\frac{1}{T}\sum_{t=1}^{T}K_h\Big(x-\frac{t}{T}\Big)\frac{\partial \widehat{l}_t(\widehat{\tau},\widehat{\theta}_T)}{\partial\tau}\Big]
\end{flalign*}
with
$$
\widehat{l}_t({\tau},\widehat{\theta}_T)=\log\widehat{g}_t(\widehat{\theta}_T)+\log(\tau)+\frac{y_t^2}{{\tau}\widehat{g}_t(\widehat{\theta}_T)}.
$$
Then, based on $\widecheck{u}_t^2=y_t^2/\widecheck{\tau}_t$ and some given initial values,  we calculate
$$
\widecheck{g}_t(\theta)=1-\sum_{i=1}^{q}\alpha_i-\sum_{j=1}^{p}\beta_j+\sum_{i=1}^{q}\alpha_i\widecheck{u}_{t-i}^2
+\sum_{j=1}^{p}\beta_j\widecheck{g}_{t-j}(\theta),
$$
and update the parametric part estimator $\widehat{\theta}_{T}$ to $\widecheck{\theta}_{T}$ as follows:
$$
\widecheck{\theta}_T=\widehat{\theta}_T-
\Big[\frac{\partial^2\widecheck{L}_T^*(\widehat{\theta}_T)}{\partial\theta\partial\theta'}\Big]^{-1}
\frac{\partial\widecheck{L}_T^*(\widehat{\theta}_T)}{\partial\theta},
$$
where
\begin{flalign*}
\frac{\partial\widecheck{L}_T^*(\widehat{\theta}_T)}{\partial\theta}=&\sum_{t=1}^{T}\Big[\widecheck{g}_t(\widehat{\theta}_T)^{-1}\frac{\partial\widecheck{g}_t(\widehat{\theta}_T)}{\partial\theta}-\widecheck{G}_T(\widehat{\theta}_T)\Big](1-\widecheck{\eta}_t(\widehat{\theta}_T)^2),\\
\frac{\partial^2\widecheck{L}_T^*(\widehat{\theta}_T)}{\partial\theta\partial\theta'}=&\sum_{t=1}^{T}\Big[\widecheck{g}_t(\widehat{\theta}_T)^{-2}\frac{\partial\widecheck{g}_t(\widehat{\theta}_T)}{\partial\theta}\frac{\partial\widecheck{g}_t(\widehat{\theta}_T)}{\partial\theta'}-\widecheck{G}_T(\widehat{\theta}_T)\widecheck{G}(\widehat{\theta}_T)'\Big]
\end{flalign*}
with $\widecheck{G}_T(\theta)=\frac{1}{T}\sum_{t=1}^{T}\widecheck{g}_t(\theta)^{-1}\frac{\partial\widecheck{g}_t(\theta)}{\partial\theta}$.
	Below, we give the limiting distribution of $\widecheck{\theta}_T$.

	 \begin{thm}\label{thm_improve}
	Suppose the conditions in Theorem \ref{thm_garch}(ii) hold. Then,
	$$
	\sqrt{T}(\widecheck{\theta}_T-\theta_0)\to_{\mathcal{L}} N(0,\Sigma^*)\mbox{ as }T\to\infty,
	$$
	where $\Sigma^{*}=(\kappa-1)J_1^{*-1}(J_1^*+J_2^*)J_{1}^{*-1}$ with \begin{flalign*}
J_1^*=&E\{(\psi_t-E\psi_t)(\psi_t-E\psi_t)'\},\\
J_2^*=&\frac{\omega_0^2}{\gamma_0^2}\Big[Eg_t^{-1}E\psi_t-Eg_t^{-1}\psi_t\Big]\Big[Eg_t^{-1}E\psi_t-Eg_t^{-1}\psi_t\Big]',
	\end{flalign*} and $\omega_0=1-\sum_{i=1}^{q}\alpha_{i0}-\sum_{j=1}^{p}\beta_{j0}$ and $\gamma_0=1-\sum_{j=1}^{p}\beta_{j0}$.
	\end{thm}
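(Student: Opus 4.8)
The plan is to read $\widecheck{\theta}_T$ as a one-step Fisher-scoring update of the $\sqrt{T}$-consistent estimator $\widehat{\theta}_T$ of Theorem \ref{thm_garch}(ii), and to exploit the classical fact that such an update carries any $\sqrt{T}$-consistent initial value to the asymptotic law dictated by the score and the limiting Jacobian. Write $S_T(\theta)=\frac{1}{T}\frac{\partial\widecheck{L}_T^*(\theta)}{\partial\theta}$ and $H_T(\theta)=\frac{1}{T}\frac{\partial^2\widecheck{L}_T^*(\theta)}{\partial\theta\partial\theta'}$ for the score and the Gauss--Newton Jacobian displayed just before the theorem, and let $\widetilde{H}_T(\theta)=\partial S_T(\theta)/\partial\theta'$ be the exact Jacobian of $S_T$. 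By definition $\widecheck{\theta}_T-\theta_0=(\widehat{\theta}_T-\theta_0)-H_T(\widehat{\theta}_T)^{-1}S_T(\widehat{\theta}_T)$. A mean-value expansion gives $S_T(\widehat{\theta}_T)=S_T(\theta_0)+\widetilde{H}_T(\theta_0)(\widehat{\theta}_T-\theta_0)+o_p(T^{-1/2})$, with the remainder controlled by a uniform law of large numbers for the (geometrically decaying) arrays built from $\widecheck{g}_t$ and its $\theta$-derivatives, handled exactly as in the proof of Theorem \ref{thm_garch}. Next, replacing $\widecheck{u}_t^2,\widecheck{g}_t$ by $u_t^2,g_t$ --- the perturbations are $o_p(1)$ uniformly in $t$, since uniform consistency of $\widecheck{\tau}_t$ is inherited from that of $\widehat{\tau}_t$ together with $\sqrt{T}$-consistency of $\widehat{\theta}_T$ --- and using a law of large numbers for the stationary, ergodic process $u_t$ (which is $\beta$-mixing by Assumption \ref{ass_eta}(i)), one gets $\frac{1}{T}\sum_t\widecheck{g}_t^{-2}\frac{\partial\widecheck{g}_t}{\partial\theta}\frac{\partial\widecheck{g}_t}{\partial\theta'}\to_p E(\psi_t\psi_t')$ and $\widecheck{G}_T\to_p E\psi_t$, so $H_T(\theta_0)\to_p E(\psi_t\psi_t')-E\psi_t E\psi_t'=J_1^*$; the terms in $\widetilde{H}_T$ that differ from $H_T$ pair the martingale difference $1-\eta_t^2$ with $\mathcal{F}_{t-1}$-measurable factors and involve $\eta_t^2$ only through its mean $1$, so $\widetilde{H}_T(\theta_0)\to_p J_1^*$ as well. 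Plugging these into the definition, the $(\widehat{\theta}_T-\theta_0)$ terms cancel and we are left with
$$
\sqrt{T}(\widecheck{\theta}_T-\theta_0)=-J_1^{*-1}\,\frac{1}{\sqrt{T}}\frac{\partial\widecheck{L}_T^*(\theta_0)}{\partial\theta}+o_p(1).
$$
Thus the whole problem reduces to proving $\frac{1}{\sqrt{T}}\frac{\partial\widecheck{L}_T^*(\theta_0)}{\partial\theta}\to_{\mathcal{L}}N(0,(\kappa-1)(J_1^*+J_2^*))$.

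This central limit theorem is the crux. The first ingredient is an asymptotic linear representation of the updated nonparametric estimator: linearizing the local Fisher-scoring step defining $\widecheck{\tau}(x)$ around the true value, and using that $\frac{\partial\widehat{l}_t(\tau,\widehat{\theta}_T)}{\partial\tau}$ has conditional mean zero at the truth while the local information $\frac{1}{T}\sum_t K_h(x-\frac{t}{T})\frac{\partial^2\widehat{l}_t}{\partial\tau^2}$ converges to $\tau(x)^{-2}$, one obtains
$$
\frac{\widecheck{\tau}(x)}{\tau(x)}-1=\frac{1}{T}\sum_{s=1}^{T}K_h\Big(x-\frac{s}{T}\Big)\varphi_s+O_p(h^2)+o_p\big((Th)^{-1/2}\big)
$$
for an explicit $\mathcal{F}_s$-measurable sequence $\varphi_s$; the $\sqrt{T}$-consistency of $\widehat{\theta}_T$ absorbs the error of the parametric plug-in $\widehat{g}_t(\widehat{\theta}_T)$, and the $O_p(h^2)$ bias (including the reflection-method bias) is negligible after the $\sqrt{T}$-scaling below precisely because Assumption \ref{ass_bandwidth} forces $\sqrt{T}h^2\to0$. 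Feeding $\widecheck{u}_t^2=y_t^2/\widecheck{\tau}_t=u_t^2(1-(\widecheck{\tau}_t/\tau_t-1)+\cdots)$ into the recursion for $\widecheck{g}_t(\theta_0)$ and into $\widecheck{\eta}_t(\theta_0)^2=\widecheck{u}_t^2/\widecheck{g}_t(\theta_0)$, and linearizing in $\widecheck{\tau}_t/\tau_t-1$ throughout, one reaches a decomposition
$$
\frac{1}{\sqrt{T}}\frac{\partial\widecheck{L}_T^*(\theta_0)}{\partial\theta}=\frac{1}{\sqrt{T}}\sum_{t=1}^{T}(\psi_t-E\psi_t)(1-\eta_t^2)+\frac{1}{\sqrt{T}}\sum_{t=1}^{T}c_t\,\zeta_t+o_p(1),
$$
where $\zeta_t=\widecheck{\tau}_t/\tau_t-1$ is individually of order $(Th)^{-1/2}$ but has long-range correlation over a window of width $\sim Th$, and $c_t$ is a sequence whose average has a well-defined limit $\bar{c}$. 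Interchanging the order of summation, using $\frac{1}{T}\sum_t K_h(\frac{t}{T}-\frac{s}{T})=1+o(1)$, and showing that the predictable components of $\varphi_s$ are absorbed into the parametric martingale (and that the diagonal $s=t$ terms are $o_p(1)$ after $\sqrt{T}$-scaling), the second sum collapses to $\bar{c}\cdot\frac{1}{\sqrt{T}}\sum_t(1-\eta_t^2)+o_p(1)$ with $\bar{c}=\frac{\omega_0}{\gamma_0}\big(Eg_t^{-1}\cdot E\psi_t-E(g_t^{-1}\psi_t)\big)$; the factor $\omega_0/\gamma_0$ traces to the decomposition $g_t=\frac{\omega_0}{\gamma_0}+\sum_{k\ge0}b_k\sum_{i=1}^{q}\alpha_{i0}u_{t-k-i}^2$, with $b_k$ the coefficients of $1/\mathcal{B}_{\theta_0}(z)$, which isolates the part of $g_t$ that responds to a multiplicative perturbation of the $u_{t-i}^2$'s. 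Finally, since $\psi_t$ is $\mathcal{F}_{t-1}$-measurable and $E(1-\eta_t^2)=0$, the array $\{(\psi_t-E\psi_t+\bar{c})(1-\eta_t^2)\}$ is a stationary square-integrable martingale difference sequence under Assumptions \ref{ass_u}--\ref{ass_eta}, and its two pieces are mutually uncorrelated because $E[(\psi_t-E\psi_t)(1-\eta_t^2)^2]=E(\psi_t-E\psi_t)\,E(1-\eta_t^2)^2=0$; the martingale central limit theorem then gives $N\big(0,(\kappa-1)\{E[(\psi_t-E\psi_t)(\psi_t-E\psi_t)']+\bar{c}\bar{c}'\}\big)=N(0,(\kappa-1)(J_1^*+J_2^*))$. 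Combining with the reduction above yields $\sqrt{T}(\widecheck{\theta}_T-\theta_0)\to_{\mathcal{L}}N(0,J_1^{*-1}(\kappa-1)(J_1^*+J_2^*)J_1^{*-1})=N(0,\Sigma^*)$.

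The main obstacle is this last paragraph: propagating the nonparametric error $\widecheck{\tau}_t/\tau_t-1$ --- which is individually negligible but accumulates to an $O_p(1)$ contribution to the $\sqrt{T}$-scaled score owing to its window-width-$Th$ correlation --- through the infinite-memory recursion defining $\widecheck{g}_t$, verifying that all the non-innovation-driven and diagonal pieces either cancel, are absorbed into the parametric martingale, or vanish after $\sqrt{T}$-scaling, and pinning down $\bar{c}$ (hence the $\omega_0^2/\gamma_0^2$ factor in $J_2^*$) exactly. The undersmoothing in Assumption \ref{ass_bandwidth} and the higher-moment Assumptions \ref{ass_u}--\ref{ass_eta} are exactly what make the remainder and bias terms disappear; everything else rests on the $\beta$-mixing law of large numbers, the martingale central limit theorem, and the geometric bounds on the $\theta$-derivatives of $g_t$ already established for Theorem \ref{thm_garch}.
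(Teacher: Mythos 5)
There is a genuine gap, and it sits exactly at the step you yourself flag as "the crux": the claim that the contribution of $\widecheck{\tau}_t/\tau_t-1$ to the $\sqrt{T}$-scaled score collapses to $\bar{c}\cdot T^{-1/2}\sum_t(1-\eta_t^2)$ with $\bar{c}=\frac{\omega_0}{\gamma_0}\{Eg_t^{-1}E\psi_t-E(g_t^{-1}\psi_t)\}$. You assert this rather than derive it, and the two justifications you offer do not hold at the relevant scale. First, you dismiss the plug-in error of $\widehat{\theta}_T$ inside $\widecheck{\tau}$ as being "absorbed" into an $o_p((Th)^{-1/2})$ remainder. That is a pointwise statement: $T^{-1/2}\ll (Th)^{-1/2}$ indeed, but after aggregating, a systematic $O_p(T^{-1/2})$ component of $\widecheck{\tau}_t/\tau_t-1$ contributes a term of the form $E\psi_t'\,\sqrt{T}(\widehat{\theta}_T-\theta_0)=O_p(1)$ to $T^{-1/2}\sum_t(\widecheck{\tau}_t/\tau_t-1)$, hence an $O_p(1)$, non-degenerate contribution to the score that is correlated with the leading martingale; it cannot be discarded on order grounds. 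Second, your linearization implicitly assumes the pilot error $\widehat{\tau}$ cancels to first order in the local one-step update of $\tau$. It does not: the estimating function $\widehat{l}_t(\tau,\widehat{\theta}_T)$ depends on the pilot not only through its explicit $\tau$-argument but also through $\widehat{g}_t(\widehat{\theta}_T)$, which is built from $\widehat{u}_s^2=y_s^2/\widehat{\tau}_s$; consequently the local information $\tau(x)^{-2}$ used in the Newton step is not the full derivative of the estimating equation with respect to the pilot error, and a component proportional to $(1-\frac{\omega_0}{\gamma_0}Eg_t^{-1})(\widehat{\tau}_t/\tau_t-1)$ survives in $\widecheck{\tau}_t/\tau_t-1$. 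Its aggregate is an $O_p(1)$ multiple of $T^{-1/2}\sum_t z_t$, i.e.\ of $T^{-1/2}\sum_t g_t(\eta_t^2-1)$, again correlated with everything else and of exactly the same order as the $J_2^*$ effect you are trying to pin down.

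Because both of these omitted pieces enter at $O_p(1)$ and are correlated with $(\psi_t-E\psi_t)(1-\eta_t^2)$, the identification of the extra variance as $(\kappa-1)J_2^*$ is not established by your argument; whether these terms cancel, renormalize $\bar{c}$, or must be carried into the limit is precisely the delicate bookkeeping that the paper performs (in the supplement, using the same machinery as Theorem \ref{thm_garch}: the $U_i$-type decomposition, the local-constancy bound of Lemma \ref{lem_difftau}, the ARMA representation (\ref{z_t}), and Proposition \ref{keypro} for the kernel-average cross terms). As written, your $\bar{c}$ is effectively reverse-engineered from the statement of $J_2^*$ rather than obtained from the expansion. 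The surrounding structure of your proposal is sound — the one-step reduction $\sqrt{T}(\widecheck{\theta}_T-\theta_0)=-J_1^{*-1}T^{-1/2}\partial\widecheck{L}_T^*(\theta_0)/\partial\theta+o_p(1)$, the convergence of both the Gauss--Newton matrix and the exact Jacobian to $J_1^*$, and the final martingale CLT with the zero cross-covariance $E[(\psi_t-E\psi_t)(1-\eta_t^2)^2]=0$ — but the central limit theorem for the score, and in particular the exact $O_p(1)$ effect of the updated nonparametric estimator, is the theorem's real content and is missing.
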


	The preceding theorem shows that $\widecheck{\theta}_T$ can not
	achieve the semiparametric efficiency bound as $J_2^*$ is positive definite. Hence, it seems unnecessary to consider the 	third estimation step in \citet{HL:2010}.
	Note that the above updating procedure  was
	also given by \citet{BKRW:1993}, in which they showed the updated estimator can achieve the semiparametric efficiency bound when the data are independent. However, when the data are dependent,
	their conclusion may not be true as demonstrated by Theorem \ref{thm_improve}. The failure of $\widecheck{\theta}_T$
	in our case possibly results from the violation of the following condition
	\begin{flalign}\label{bkrw_cond}
	\frac{1}{\sqrt{T}}\Big\{\frac{\partial\widecheck{L}_T^*(\widehat{\theta}_T)}{\partial\theta}- \frac{\partial{L}_T^*(\widehat{\theta}_T)}{\partial\theta}\Big\}=o_p(1),
	\end{flalign}
	where $\frac{\partial L_T^*(\theta)}{\partial\theta}$ is defined in the same way as $\frac{\partial\widecheck{L}_T^*(\theta)}{\partial\theta}$  with $\widecheck{u}_{t}$ and $\widecheck{g}_{t}({\theta})$ replaced by $u_t$  and ${g}_{t}({\theta})$, respectively.
	In \citet{BKRW:1993}, a  condition  similar to  (\ref{bkrw_cond}) was proved for the independent data. However,
	their technical treatment does not work in our time series setting. This is  because in the updating procedure at step three, the process $\widecheck{g}_t$ utilizes the information before and after time period $t$, so that they are not independent of $\{u_s^2\}_{s\neq t}$.
	
	\subsection{Comparison with VT estimation method}
	
	Our two-step estimation method also
	has a linkage to the VT estimation method in \citet{FHZ:2011}, and this aspect has not been
	explored before.
	The VT method is designed for
	the following covariance stationary GARCH($p, q$) model
	\begin{flalign}
	\label{vt_model}	
	\begin{split}
	&y_t=\sqrt{h_t}\eta_t\\
	&\text{with    } h_t=\tau_0\Big(1-\sum_{i=1}^{q}\alpha_{i0}-\sum_{j=1}^{p}\beta_{j0}\Big)+\sum_{i=1}^q\alpha_{i0} y_{t-i}^2+\sum_{j=1}^{p}\beta_{j0} h_{t-j},
	\end{split}
	\end{flalign}
	where $\tau_0$ is a positive parameter, and $\alpha_{i0}$, $\beta_{j0}$ and $\eta_t$ are defined as before.
	Indeed, model (\ref{vt_model}) is just our stationary S-GARCH model, and it is also
	an alternative reparametrization version of the conventional covariance stationary GARCH model. Since $Ey_t^2=\tau_0$ under model (\ref{vt_model}),
	the VT method first estimates $\tau_0$
	by $\overline{\tau}_{T}$, and then estimates $\theta_0$ by the QMLE $\overline{\theta}_{T}$, where
	\begin{flalign}\label{vt_qmle}
	\begin{split}
	\overline{\tau}_{T}=\frac{1}{T}\sum_{t=1}^{T} y_{t}^2\mbox{ and }
	\overline{\theta}_T=\arg\min_{\theta\in\Theta}\overline{L}_T(\theta)
	\mbox{ with }\overline{L}_T(\theta)=\sum_{t=1}^{T}\frac{\overline{u}_t^2}{\overline{g}_t(\theta)}+\log\overline{g}_t(\theta).
	\end{split}
	\end{flalign}
	Here, $\overline{u}_{t}=y_{t}/\sqrt{\overline{\tau}_T}$, and $\overline{g}_{t}(\theta)$ is defined in the same way as
	$\widehat{g}_t(\theta)$ in (\ref{hatg}) with $\widehat{u}_{t}$ replaced by $\overline{u}_{t}$.
	Clearly, the difference of two methods is that our method estimates the unknown function $\tau(x)$ nonparametrically, while
	the VT method estimates the unknown constant parameter $\tau_0$ by the sample mean of $y_{t}^2$.
	It turns out that two methods require different technical treatments and
	give different application scopes.
	From a statistical point of view, the proof techniques for VT method
	rely on the facts that the objective function $\overline{L}_T(\theta)$ is differential around $\tau_0$ and the first step estimator $\overline{\tau}_T$ is $\sqrt{T}$-consistent. However, neither of these facts holds for our method, and we thus
	need develop new proof techniques based on more restrictive conditions for $u_t$ and $\eta_t$.
	From a practical point of view,  our method works for the either stationary or non-stationary
	S-GARCH model, while the VT method does only for the stationary S-GARCH model.
	Hence, our method has a much broader application scope than the VT method.
	
	By revisiting Theorem 2.1 in \citet{FHZ:2011}, we further find
	that the asymptotic variance of $\overline{\theta}_T$ is the same as the one of $\widehat{\theta}_{T}$ in
	Theorem \ref{thm_garch}.
	That is,  our QMLE $\widehat{\theta}_{T}$ and the QMLE $\overline{\theta}_T$ in the VT method
	have the same asymptotic efficiency, although our first step estimator has a slower convergence rate $\sqrt{Th}$ than the parametric convergence rate $\sqrt{T}$. This novel feature has not been discovered in the literature, and it
	makes our two-step method more attractive than the VT method, since our QMLE does not suffer any efficiency loss for the stationary S-GARCH model, and at the same time, our QMLE can still work with the same efficiency (due to the adaptiveness
	feature) for the non-stationary S-GARCH model. As expected, similar features also hold for our tests $LM_{T}$ and $Q_{T}(\ell)$, and these findings will be
	further illustrated by simulation studies.
	
\section{Extension to multivariate S-BEKK model}\label{sec:BEKK}
In this section, we extend the two-step estimation for the S-GARCH model to the multivariate semiparametric BEKK (S-BEKK) model.
Let $\{\mathbf{y}_t\}_{t=1}^{T}$ be a sequence of random vectors with dimension $N\geq 1$.
Assume $\mathbf{y}_t$ satisfies the following S-BEKK model
\begin{flalign}
\label{ym}\mathbf{y}_t=&\bftau_{t}^{1/2}\mathbf{u}_t~\mbox{with}~\bftau_t=\bftau(t/T),\\
\label{BEKKm}\mathbf{u}_t=&\mathbf{g}_t^{1/2}\pmb{\eta}_t~\mbox{and}~ \mathbf{g}_t=W_0+\sum_{i=1}^{q}A_{i0}\mathbf{u}_{t-i}\mathbf{u}_{t-i}'A_{i0}'+\sum_{j=1}^{p}B_{j0}\mathbf{g}_{t-j}B_{j0}',
\end{flalign}
for $t=1,..., T$, where $\bftau(x)\in\mathbb{R}^{N\times N}$ is a positively definite,  smoothing and  deterministic matrix with unknown form on the interval $[0,1]$, $\mathbf{u}_t$ is a covariance stationary BEKK$(p, q)$ process parameterized by $N\times N$ matrices $A_{i0}$, $i=1,...,q$, $B_{j0}$, $j=1,...,p$ and  $W_0:=I_N-\sum_{i=1}^{q}A_{i0}A_{i0}'-\sum_{j=1}^{p}B_{j0}B_{j0}'$,  and  $\{\pmb{\eta}_t\}$ is a sequence of i.i.d random vectors satisfying $E\pmb{\eta}_t\pmb{\eta}_t'=I_N$. Clearly, our S-BEKK model reduces to the standard
BEKK model in \citet{EK:1995} when $\bftau(x)$ is a constant matrix, and it includes the first-order S-BEKK model in \citet{HL:2010}
as a special case.

Let $\mathrm{tr}(A)$ and $\mathrm{det}(A)$ be the trace and determinant of a matrix $A$, respectively,
$\mathrm{vec}(A)$ be the vectorization of a matrix $A$ by stacking its columns, $A\otimes B$ be the Kronecker product between two matrices $A$ and $B$, and $A^{\otimes 2}=A\otimes A$. Denote $\pmb{\theta}=(\mathrm{vec}(A_1)',...,\mathrm{vec}(A_q)',\mathrm{vec}(B_1)',$ $...,\mathrm{vec}(B_p)')'\in\pmb{\Theta}$ be the unknown parameter of $\mathbf{u}_t$, and $\pmb{\Theta}\subset \mathbb{R}^{\mathrm{dim}(\ptheta)} $ be the parameter space, where
 $\mathrm{dim}(\ptheta)$ stands for the dimension of $\pmb{\theta}$.
Similar to the S-GARCH model, we consider the two-step estimation for the S-BEKK model. At step one,
we estimate $\bftau_{t}$ by $\widehat{\bftau}_{t}=\widehat{\bftau}(t/T)$, where
\begin{flalign*}
\widehat{\bftau}(x)=\frac{1}{T}{\sum_{s=1}^{T}K_h\Big(x-\frac{s}{T}\Big)\mathbf{y}_s\mathbf{y}_s'}.
\end{flalign*}
At step two, we consider the QMLE of $\pmb{\theta}_0$ given by
$
\widehat{\pmb{\theta}}_T=\arg\min_{\pmb{\theta}\in\pmb{\Theta}}\widehat{\mathbf{L}}_T(\pmb{\theta}),
$
where
$$
\widehat{\mathbf{L}}_T(\pmb{\theta})=\sum_{t=1}^{T}\widehat{\mathbf{l}}_t(\pmb{\theta})\quad\mbox{with}\quad \widehat{\mathbf{l}}_t(\pmb{\theta})=
\mathrm{tr}\big(\widehat{\mathbf{g}}_t(\pmb{\theta})^{-1}\widehat{\mathbf{u}}_t\widehat{\mathbf{u}}_t'\big)
+\log\det\big(\widehat{\mathbf{g}}_t(\pmb{\theta})\big).
$$
Here, $\widehat{\mathbf{g}}_t(\pmb{\theta})$ is calculated recursively by
$$
\widehat{\mathbf{g}}_t(\pmb{\theta})=I_N-\sum_{i=1}^{q}A_{i}A_{i}'-\sum_{j=1}^{p}B_{j}B_{j}'
+\sum_{i=1}^{q}A_{i}\widehat{\mathbf{u}}_{t-i}\widehat{\mathbf{u}}_{t-i}'A_{i}'+\sum_{j=1}^{p}B_{j}\widehat{\mathbf{g}}_{t-j}(\pmb{\theta})B_{j}'
$$
with $\widehat{\mathbf{u}}_t=\widehat{\bftau}_t^{-1/2}\mathbf{y}_t$, $t=1,...,T$, and some given constant initial values  $\widehat{\mathbf{u}}_0=\mathbf{u}_0,..., \widehat{\mathbf{u}}_{1-q}=\mathbf{u}_{1-q}$,
$\widehat{\mathbf{g}}_0(\pmb{\theta})=\mathbf{g}_0,..., \widehat{\mathbf{g}}_{1-p}(\pmb{\theta})=\mathbf{g}_{1-p}$.

To give the asymptotic distribution of $\widehat{\pmb{\theta}}_T$, we need the following notations.
Let $\mathcal{A}_i=A_i^{\otimes 2}$ for $i=1,...,q$, $\mathcal{B}_j=B_j^{\otimes 2}$ for $j=1,..., p$,  and $\mathbb{B}^k(1:N^2,1:N^2)$ be the upper-left $N^2\times N^2$ submatrix of $\mathbb{B}^k$, where
$$
\mathbb{B}=\left(\begin{matrix}
\mathcal{B}_1&\mathcal{B}_2&\cdots&\mathcal{B}_p\\
I_{N^2}&0&\cdots&0\\
0&I_{N^2}&\cdots&0\\
\vdots&\ddots&\ddots&\vdots\\
0&\cdots&I_{N^2}&0
\end{matrix}\right).
$$
Furthermore, let  $\pmb{\xi}_t=\mathrm{vec}(\pmb{\eta}_t\pmb{\eta}_t'-I_N)$, $\mathbf{\Upsilon}(x)=[\bftau(x)^{-1/4}\otimes \bftau(x)^{1/4}]$, $
\mathbf{\Omega}_0=I_{N^2}-\sum_{i=1}^{q}\mathcal{A}_{i0}-\sum_{j=1}^{p}\mathcal{B}_{j0}$, $\mathbf{\Gamma}_0=I_{N^2}-\sum_{j=1}^{p}\mathcal{B}_{j0}$, $\mathbf{Q}_t=(\mathbf{Q}_{t,1}',...,\mathbf{Q}_{t,\mathrm{dim}(\ptheta)}')'$, $\mathbf{N}=(\mathbf{N}_1',...,\mathbf{N}_{\mathrm{dim}(\ptheta)}')'$ and $\mathbf{M}=(\mathbf{M}_1',...,\mathbf{M}_{\mathrm{dim}(\ptheta)}')'$, where
\begin{flalign*}
\mathbf{Q}_{t,m}=&\mathrm{vec}\Big(\frac{\partial\mathbf{g}_t}{\partial\ptheta_m}\Big)'(\mathbf{g}_t^{-1/2})^{\otimes2},\quad\quad\quad \mathbf{N}_m=E\Big[\mathrm{vec}\Big(\frac{\partial\mathbf{g}_t}{\partial\ptheta_m}\Big)'(\mathbf{g}_t^{-1}\otimes I_N)\Big],\\
\mathbf{M}_m=&E\Big[\mathrm{vec}\Big(\frac{\partial {\mathbf{g}}_t}{\partial\ptheta_m}\Big)'(\mathbf{g}_t^{-1})^{\otimes2}\mathbf{T}_t\Big],\quad\mathbf{T}_t=\sum_{k=0}^{\infty}\mathbb{B}_0^k(1:N^2,1:N^2)\Big(\sum_{i=1}^{q}\mathcal{A}_{i0}
\Big)[I_N\otimes\mathbf{u}_{t-k-i}\mathbf{u}_{t-k-i}'].
\end{flalign*}
The following theorem establishes the asymptotic normality of $\widehat{\pmb{\theta}}_T$.

\begin{thm}\label{thm_BEKK}
	Suppose Assumptions C.1--C.7 in \citet{JLZ:2019} hold. Then,
	$$
	\sqrt{T}(\widehat{\ptheta}_T-\ptheta_0)\rightarrow_{\mathcal{L}} N(0,\mathbf{\Sigma})\mbox{ as }T\to\infty,
	$$
	where $\mathbf{\Sigma}=\big(E[\mathbf{Q}_t\mathbf{Q}_t']\big)^{-1}\big[\mathbf{J}_1+\mathbf{J}_2+\mathbf{J}_3+\mathbf{J}_3'\big]\big(E[\mathbf{Q}_t\mathbf{Q}_t']\big)^{-1}$ with
	\begin{flalign*}
	\mathbf{J}_1=&E\Big[\mathbf{Q}_t\mathrm{Var}(\pmb{\xi}_t)\mathbf{Q}_t'\Big],\\
	\mathbf{J}_2=&[\mathbf{M}-\mathbf{N}]\Big\{\int_{0}^{1}\mathbf{\Upsilon}(x)\mathbf{\Omega}_0^{-1}\mathbf{\Gamma}_0
E\Big[(\mathbf{g}_t^{1/2})^{\otimes2}\mathrm{Var}(\pmb{\xi}_t)(\mathbf{g}_t^{1/2})^{\otimes2}\Big]\mathbf{\Gamma}_0'\mathbf{\Omega}_0^{'-1}
\mathbf{\Upsilon}(x)dx\Big\}[\mathbf{M}-\mathbf{N}]',\\
		\mathbf{J}_3=&E\Big[\mathbf{Q}_t\mathrm{Var}(\pmb{\xi}_t)(\mathbf{g}_t^{1/2})^{\otimes2}\Big]
\Big\{\mathbf{\Gamma}_0'\mathbf{\Omega}_0^{'-1}\int_{0}^{1}\mathbf{\Upsilon}(x)dx\Big\}[\mathbf{M}-\mathbf{N}]'.
	\end{flalign*}
\end{thm}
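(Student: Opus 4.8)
The plan is to follow the two-step QMLE template used for Theorem~\ref{thm_garch}, but now carrying the Kronecker/vectorization algebra of the BEKK recursion and, most importantly, the non-commutative matrix square root $\widehat{\bftau}_t^{-1/2}$. I would first record the consistency step $\widehat{\ptheta}_T\to_p\ptheta_0$, which is needed for the mean-value expansion below: (a) the constant initial values $\widehat{\mathbf{u}}_0,\dots,\widehat{\mathbf{u}}_{1-q},\widehat{\mathbf{g}}_0(\ptheta),\dots,\widehat{\mathbf{g}}_{1-p}(\ptheta)$ contribute a term that is $o_p(1)$ uniformly on $\pmb{\Theta}$ because the companion matrix $\mathbb{B}$ has spectral radius below one under the stationarity part of the assumptions; (b) replacing $\bftau_t$ by $\widehat{\bftau}_t$ alters $\frac1T\widehat{\mathbf{L}}_T(\ptheta)$ by $o_p(1)$ uniformly, which follows from the matrix analogue of Theorem~\ref{thm_kernel}, namely $\sup_x\|\widehat{\bftau}(x)-\bftau(x)\|\to_p0$ on compact subsets of $(0,1)$, together with the uniform positive-definiteness of $\bftau(x)$; (c) a uniform law of large numbers for the infeasible criterion $\frac1T\mathbf{L}_T(\ptheta)$, using that $\mathbf{u}_t$ is stationary and $\beta$-mixing; and (d) identification of $\ptheta_0$ as the unique minimizer of the limit criterion. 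Applying the same three ingredients to second derivatives gives $\frac1T\frac{\partial^2\widehat{\mathbf{L}}_T(\ptheta^*)}{\partial\ptheta\partial\ptheta'}\to_p E[\mathbf{Q}_t\mathbf{Q}_t']$ for any $\ptheta^*$ between $\widehat{\ptheta}_T$ and $\ptheta_0$; here $\mathbf{Q}_t$ is precisely the infeasible score direction, since writing $\mathbf{u}_t\mathbf{u}_t'-\mathbf{g}_t=\mathbf{g}_t^{1/2}(\pmb{\eta}_t\pmb{\eta}_t'-I_N)\mathbf{g}_t^{1/2}$ and vectorizing yields $\frac{\partial\mathbf{l}_t(\ptheta_0)}{\partial\ptheta}=-\mathbf{Q}_t\pmb{\xi}_t$.

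The heart of the proof is the stochastic expansion of the feasible score at $\ptheta_0$,
$$\frac{1}{\sqrt T}\frac{\partial\widehat{\mathbf{L}}_T(\ptheta_0)}{\partial\ptheta}=\underbrace{-\frac{1}{\sqrt T}\sum_{t=1}^{T}\mathbf{Q}_t\pmb{\xi}_t}_{\text{infeasible score}}+\text{(correction from }\widehat{\bftau}_t\text{)}+o_p(1).$$
To obtain the correction I would linearize both $\widehat{\mathbf{u}}_t\widehat{\mathbf{u}}_t'=\widehat{\bftau}_t^{-1/2}\bftau_t^{1/2}(\mathbf{u}_t\mathbf{u}_t')\bftau_t^{1/2}\widehat{\bftau}_t^{-1/2}$ and $\widehat{\mathbf{g}}_t(\ptheta_0)$ in the increment $\widehat{\bftau}_t-\bftau_t$. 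The derivative of the matrix square root, after vectorization, is what produces the factor $\mathbf{\Upsilon}(x)=\bftau(x)^{-1/4}\otimes\bftau(x)^{1/4}$; the dependence of $\widehat{\mathbf{g}}_t(\ptheta_0)$ on $\{\widehat{\mathbf{u}}_{t-i}\}$ propagates through $\widehat{\mathbf{g}}_t-\mathbf{g}_t=\sum_iA_{i0}(\widehat{\mathbf{u}}_{t-i}\widehat{\mathbf{u}}_{t-i}'-\mathbf{u}_{t-i}\mathbf{u}_{t-i}')A_{i0}'+\sum_jB_{j0}(\widehat{\mathbf{g}}_{t-j}-\mathbf{g}_{t-j})B_{j0}'$, and unrolling this recursion with the companion matrix $\mathbb{B}_0$ generates exactly the series defining $\mathbf{T}_t$. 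Collecting the direct effect (through $\widehat{\mathbf{u}}_t\widehat{\mathbf{u}}_t'$ in $\widehat{\mathbf{l}}_t$, which produces $\mathbf{N}$) and the indirect effect (through $\widehat{\mathbf{g}}_t$, which produces $\mathbf{M}$) gives the vector $\mathbf{M}-\mathbf{N}$ multiplying a linear functional of $\widehat{\bftau}_t-\bftau_t$, with $\mathbf{\Omega}_0$ and $\mathbf{\Gamma}_0$ entering via $\sum_{k\ge0}\mathbb{B}_0^k(1:N^2,1:N^2)=\mathbf{\Gamma}_0^{-1}$ and the unconditional-variance identity $\mathrm{vec}(W_0)=\mathbf{\Omega}_0\,\mathrm{vec}(I_N)$. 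I would then substitute the kernel expansion $\widehat{\bftau}_t-\bftau_t=\frac1T\sum_sK_h(\frac{t-s}{T})(\mathbf{y}_s\mathbf{y}_s'-\bftau_s)+O_p(h^2)+(\text{boundary})$ — the bias $O_p(h^2)$ and the reflection-boundary term are $o_p(T^{-1/2})$ precisely because $1/4<\lambda_h<1/2$ — and exchange the $t$- and $s$-summations: the inner average $\frac1T\sum_tK_h(\frac{t-s}{T})(\cdot)$ of the stationary $\mathbf{Q}$-type quantities concentrates at its expectation while $\bftau_t\approx\bftau(s/T)$ over the $O(Th)$-wide window, which is what converts these sums into the integrals $\int_0^1\mathbf{\Upsilon}(x)\cdots dx$ and $\int_0^1\mathbf{\Upsilon}(x)\,dx$ and leaves a single sum $\frac{1}{\sqrt T}\sum_s\mathbf{w}(s/T)\,\mathrm{vec}(\mathbf{u}_s\mathbf{u}_s'-I_N)$ with deterministic matrix weights.

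It then remains to apply a central limit theorem for stationary $\beta$-mixing sequences — the mixing coming from the density assumption on $\pmb{\eta}_t$, the requisite finite moments of $\mathbf{u}_t$ from the maintained assumptions — to $\frac1{\sqrt T}\sum_t\mathbf{D}_t$, where $\mathbf{D}_t$ combines $-\mathbf{Q}_t\pmb{\xi}_t$ with the correction term built from $\mathrm{vec}(\mathbf{u}_t\mathbf{u}_t'-I_N)$ and $\mathbf{g}_t$-dependent weights; its limiting covariance decomposes as $\mathbf{J}_1+\mathbf{J}_2+\mathbf{J}_3+\mathbf{J}_3'$, namely the variance of the infeasible score ($\mathbf{J}_1$), the variance of the correction ($\mathbf{J}_2$, where $\mathbf{M}-\mathbf{N}$ and $\mathbf{\Upsilon}$ enter), and their cross-covariance ($\mathbf{J}_3+\mathbf{J}_3'$). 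Combining with the Hessian limit through the mean-value identity $0=\frac{1}{\sqrt T}\frac{\partial\widehat{\mathbf{L}}_T(\widehat{\ptheta}_T)}{\partial\ptheta}$ and solving for $\sqrt T(\widehat{\ptheta}_T-\ptheta_0)$ yields $N(0,\mathbf{\Sigma})$ with the stated sandwich form. I expect the main obstacle to be the correction step: controlling the remainder of the matrix-square-root linearization uniformly over $t$ (the $\widehat{\bftau}_t$ are random and only $\sqrt{Th}$-close to $\bftau_t$, so one needs sharp bounds on $\|\widehat{\bftau}_t^{-1/2}-\bftau_t^{-1/2}\|$ and on the quadratic remainder), tracking how the $\widehat{\bftau}$-error propagates through the non-linear BEKK recursion to recover precisely $\mathbf{T}_t$, and rigorously justifying the sum-exchange plus kernel law of large numbers that produces the $\bftau$-dependent integrals. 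This last point is also the reason why, unlike in the univariate S-GARCH model, $\bftau(x)$ does not cancel out of $\mathbf{\Sigma}$, so $\widehat{\ptheta}_T$ here is not adaptive to the form of $\bftau(x)$.
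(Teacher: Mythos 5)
Your outline follows essentially the same route as the paper's own argument (whose details are deferred to the supplement \citet{JLZ:2019}): a mean-value expansion of the feasible quasi-likelihood, consistency plus the Hessian limit $E[\mathbf{Q}_t\mathbf{Q}_t']$, and a decomposition of the score at $\ptheta_0$ into the infeasible part $-T^{-1/2}\sum_t\mathbf{Q}_t\pmb{\xi}_t$ plus a kernel-estimation correction which, after linearizing $\widehat{\bftau}_t^{-1/2}$, unrolling the BEKK recursion to obtain $\mathbf{T}_t$ (hence $\mathbf{M}-\mathbf{N}$), and exchanging the $t$- and $s$-summations with Proposition~\ref{keypro}-type bounds under Assumption \ref{ass_bandwidth}, reduces to a weighted sum of $\mathrm{vec}(\mathbf{u}_s\mathbf{u}_s'-I_N)$ whose long-run variance produces the $\mathbf{\Omega}_0^{-1}\mathbf{\Gamma}_0$ factors and the integrals of $\mathbf{\Upsilon}(x)$ appearing in $\mathbf{J}_2$ and $\mathbf{J}_3$, exactly mirroring the univariate treatment in the Appendix. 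The only imprecision is your appeal to ``a CLT for stationary $\beta$-mixing sequences'': since the correction carries deterministic weights varying with $s/T$ through $\mathbf{\Upsilon}$, the combined summand is not stationary, and the paper's template instead passes to a martingale-difference (triangular-array) CLT after the VARMA-type reduction of $\mathrm{vec}(\mathbf{u}_t\mathbf{u}_t'-I_N)$, the multivariate analogue of (\ref{z_t}); this is a standard fix that leaves your plan intact.
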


When $p=q=1$, it can be shown that our asymptotic variance-covariance matrix $\mathbf{\Sigma}$ is equivalent to the one obtained in \citet{HL:2010}, but with a relatively simpler expression. Moreover, Theorem \ref{thm_BEKK} indicates that the effect of nonparametric part $\bftau_t$ on
 $\mathbf{\Sigma}$ is reflected by the term $\mathbf{\Upsilon}(x)$ existing in $\mathbf{J}_2$ and $\mathbf{J}_3$.
When $N=1$, we have $\mathbf{\Upsilon}(x)\equiv1$, and hence $\widehat{\ptheta}_T$ has the adaptiveness feature as demonstrated before.
When $N>1$, the form of $\bftau(x)$ has an impact on $\mathbf{\Upsilon}(x)$, except for some special cases
(e.g., $\bftau(x)=\tau(x) I_N$ with $\tau(x)>0$). Therefore, $\widehat{\ptheta}_T$ does not have the adaptiveness feature
 in the multivariate case.

\section{Simulations}
This section gives the simulation studies for the QMLE $\widehat{\theta}_{T}$ and the tests $LM_{T}$ and $Q_{T}(\ell)$. To facilitate it, we first
show how to choose the bandwidth $h$.

\subsection{Choice of bandwidth}\label{bandwidth}

The practical implementation of our entire methodologies needs to choose the bandwidth $h$. The methods in terms of
mean squared error criterion (see, e.g., \citet{HL:2010}) usually yield a bandwidth of order $T^{-1/5}$, which does not satisfy Assumption \ref{ass_bandwidth}. In what follows, we give a two-step cross-validation (CV) procedure to choose $h$ such that
Assumption \ref{ass_bandwidth} is satisfied.

\begin{alg}\label{alg1}
	{(CV bandwidth selection procedure)}
	\begin{enumerate}
		\item
		Set a pilot bandwidth $h_{0}=T^{-\lambda_0}$ with $\lambda_0\in(1/4,1/2)$, and then obtain the pilot estimates $\widehat{\tau}_{t,0}$ and $\widehat{u}_{t,0}$. Choose a pilot GARCH (or ARCH) model for the process $u_t$, and based on
		$\{\widehat{u}_{t,0}\}_{t=1}^{T}$,
		estimate this
		pilot model by the QMLE to get the pilot estimates $\{\widehat{g}_{t,0}\}_{t=1}^{T}$.

		\item
		With $\{\widehat{g}_{t,0}\}_{t=1}^{T}$,
		define a CV criterion as
		$$CV(h)=\sum_{t=1}^{T}\Big\{\frac{y_t^2}{{\widehat{\tau}_{-t}(h)\widehat{g}_{t,0}}}-1\Big\}^2,$$
		where $\widehat{\tau}_{-t}(h)$ is a leave-one-out estimate of $\tau_t$ with respect to the bandwidth $h$, based on all observations except for  $y_t$.
		Select our bandwidth as $h_{cv}=\arg\min_{h\in\mathcal{H}}CV(h)$, where
		$\mathcal{H}=[c_{\min}T^{-\lambda_0},c_{\max}T^{-\lambda_0}]$ with two positive constants
		$c_{\min}$ and $c_{\max}$.
	\end{enumerate}
\end{alg}

Let $\widehat{\mathrm{Var}}(y_t)$ be the sample variance of $\{y_{t}\}_{t=1}^{T}$.
To compute $h_{cv}$ in Algorithm \ref{alg1}, we suggest to choose $\lambda_0=2/7$, $c_{\min}=0.5\widehat{\mathrm{Var}}(y_t)^{\lambda_0}$ and $c_{\max}=3\widehat{\mathrm{Var}}(y_t)^{\lambda_0}$, which will be used and demonstrated with good performance in our simulation studies below.
For the pilot model in
Algorithm \ref{alg1}, it could be taken based on either some prior information or the Bayesian information criterion (BIC).

\subsection{Simulations for the estimation}\label{sec:simu_est}

In this subsection, we examine the finite-sample performance of the QMLE $\widehat{\theta}_{T}$. We generate
1000 replications of sample size $T=2000$ and $4000$ from the following two data generating processes (DGPs)
\begin{flalign*}
&\text{DGP 1 : \,\,The S-ARCH(2) model with }\alpha_{10}=\alpha_{20}=0.3; \\
&\text{DGP 2 : \,\,The S-GARCH(1, 1) model with }\alpha_{10}=0.1 \mbox{ and } \beta_{10}=0.8, 			
\end{flalign*}
where the function $\tau(x)$ is designed as follows
\begin{flalign}
\label{nochange}	\mbox{[No change]}&\quad \tau(x)=1;\\
\label{linearchange}	\mbox{[Linear change]}&\quad \tau(x)=1+2x;\\
\label{cyclicalchange}	\mbox{[Cyclical change]}&\quad \tau(x)=1+\sin(4\pi x)/2,
\end{flalign}
and the error $\eta_t$ follows $N(0,1)$, $st_{10}$, and $st_{5}$. Here, $st_{\nu}$ is the standardized Student-$t$ distribution with unit variance.

For each replication, we compute $\widehat{\theta}_{T}$ by using the Epanechnikov kernel $K(x)=\frac{3}{4}(1-x^2)\mathbf{1}(|x|\leq 1)$ and
choosing the bandwidth $h=h_{cv}$ according to Algorithm \ref{alg1} with the (G)ARCH model in DGP as the pilot model.
Table \ref{estimation} reports the sample bias, sample empirical standard deviation (ESD) and average asymptotic standard deviation (ASD) of $\widehat{\theta}_T$ based on 1000 replications for each DGP, where the ASD is calculated as in Remark \ref{rem_2}.
From Table \ref{estimation}, we find that (i) the biases of $\widehat{\theta}_{T}$ are small in each case; (ii)
regardless of the specification of $\tau(x)$ and the distribution of $\eta_t$, the values of ESD and ASD are close to each other, especially for large $T$; (iii) when the value of $T$ increases, the value of ESD decreases; (iv)
$\widehat{\theta}_{T}$ becomes less efficient with a larger value of ESD as the thickness of $\eta_t$ becomes heavier; (v) the value of ESD is almost invariant with respect to the specification of $\tau(x)$, meaning that $\widehat{\theta}_{T}$ is adaptive as expected.
Under the same settings as in Table \ref{estimation},  we also examine the finite-sample performance of the standard QMLE
in \citet{Bollerslev:1986}, and find that
when $\tau(x)\sim (\ref{nochange})$, the standard QMLE is more efficient than $\widehat{\theta}_{T}$;
but when $\tau(x)\sim (\ref{linearchange})$ or $(\ref{cyclicalchange})$, the standard QMLE suffers from larger bias
and discrepancy between ESD and ASD. For saving the space, these results for the standard QMLE are not reported here. Overall, our QMLE $\widehat{\theta}_{T}$ has a satisfactory performance in all considered cases, and the standard QMLE should not be used for the non-stationary S-GARCH model.

\begin{table}[!h]
	\centering
	\caption{The results $(\times100)$ of $\widehat{\theta}_{T}$ based on DGPs 1--2}	\label{estimation}
	\addtolength{\tabcolsep}{-2pt}  
	\begin{tabular}{cccccccccccccccccccc}
		\hline
		&                & \multicolumn{8}{c}{DGP 1: S-ARCH(2)}                                                                          &  & \multicolumn{8}{c}{DGP 2: S-GARCH(1,1)}                                                                      \\ \cline{3-10} \cline{12-19}
		&               & \multicolumn{2}{c}{$N(0,1)$} &  & \multicolumn{2}{c}{$st_{10}$} &  & \multicolumn{2}{c}{$st_{5}$}  &  & \multicolumn{2}{c}{$N(0,1)$} &  & \multicolumn{2}{c}{$st_{10}$} &  & \multicolumn{2}{c}{$st_{5}$} \\ \cline{3-4} \cline{6-7} \cline{9-10} \cline{12-13} \cline{15-16} \cline{18-19}
		$T$              &      & $\alpha_{10}$      & $\alpha_{20}$     &  & $\alpha_{10}$ & $\alpha_{20}$ &  & $\alpha_{10}$ & $\alpha_{20}$ &  & $\alpha_{10}$      & $\beta_{10}$      &  & $\alpha_{10}$  & $\beta_{10}$ &  & $\alpha_{10}$ & $\beta_{10}$ \\ \hline\\
		&          &      &                    &                   &  &               &                 & \multicolumn{5}{c}{Panel A: $\tau(x)\sim (\ref{nochange})$}                                 &  &                &              &  &               &              \\
		2000   & Bias & -0.63              & -0.72             &  & -1.13         & -1.21         &  & -1.44         & -2.26         &  & -0.11              & -3.45             &  & -0.18          & -3.49        &  & 0.05          & -4.31        \\
		& ESD  & 3.90               & 3.93              &  & 4.73          & 4.79          &  & 7.17          & 7.44          &  & 2.02               & 6.35              &  & 2.30           & 7.13         &  & 3.04          & 8.39         \\
		& ASD  & 3.96               & 3.96              &  & 4.96          & 4.95          &  & 7.60          & 7.44          &  & 2.10               & 5.62              &  & 2.38           & 6.25         &  & 3.19          & 7.85         \\
		4000  & Bias & -0.36              & -0.45             &  & -0.59         & -0.56         &  & -1.35              & -1.58              &  & -0.08              & -1.82             &  & -0.09          & -1.86        &  & -0.14         & -1.92        \\
		&           ESD  & 2.81               & 2.78              &  & 3.37          & 3.38          &  &  5.76             & 5.98              &  & 1.38               & 3.76              &  & 1.56           & 4.04         &  & 2.05          & 4.80         \\
		&           ASD  & 2.85               & 2.64              &  & 3.67          & 3.68          &  &5.69               & 5.69              &  & 1.45               & 3.53              &  & 1.66           & 3.90         &  & 2.22          & 4.89         \\
		&                &                    &                   &  &               &               &  &               &               &  &                    &                   &  &                &              &  &               &              \\
		&          &      &                    &                   &  &               &                 & \multicolumn{5}{c}{Panel B: $\tau(x)\sim (\ref{linearchange})$}                                 &  &                &              &  &               &              \\
		2000   & Bias & -0.30              & -0.36             &  & -0.75         & -0.83         &  &-1.11               & -1.93              &  & 0.12               & -1.99             &  & 0.06           & -2.17        &  & 0.31          & -3.41        \\
		&           ESD  & 3.90               & 3.98              &  & 4.71          & 4.74          &  &  7.12             &   7.34            &  & 2.02               & 5.96              &  & 2.29           & 6.17         &  & 3.19          & 7.99         \\
		&           ASD  & 3.82               & 3.98              &  & 5.01          & 5.00          &  &     7.72          &  7.55              &  & 2.05               & 4.96              &  & 2.34           & 5.57         &  & 3.23          & 7.32         \\
		4000  & Bias & -0.04              & -0.13             &  & -0.28         & -0.23         &  & -1.05              &  -1.25             &  & 0.11               & -0.82             &  & 0.08           & -1.09        &  & 0.05          & -1.36        \\
		&           ESD  & 2.80               & 2.76              &  & 3.31          & 3.36          &  &   5.69            &   5.91            &  & 1.40               & 3.24              &  & 1.57           & 3.64         &  & 2.20          & 4.65         \\
		&           ASD  & 2.85               & 2.85              &  & 3.71          & 3.72          &  &  5.79             &   5.79            &  & 1.42               & 3.21              &  & 1.64           & 3.62         &  & 2.28          & 4.66         \\
		&                &                    &                   &  &               &               &  &               &               &  &                    &                   &  &                &              &  &               &              \\
		&          &      &                    &                   &  &               &                 & \multicolumn{5}{c}{Panel C: $\tau(x)\sim (\ref{cyclicalchange})$}                                 &  &                &              &  &               &              \\
		2000   & Bias & 0.04               & -0.07             &  & -0.37         & -0.49         &  & -0.68              &   -1.47            &  & 0.18               & -1.46             &  & 0.13           & -1.79        &  & 0.36          & -2.53        \\
		&           ESD  & 3.92               & 3.91              &  & 4.71          & 4.70          &  &   6.97            &   7.16            &  & 2.07               & 5.48              &  & 2.32           & 6.05         &  & 2.49          & 5.91         \\
		&           ASD  & 3.97               & 3.97              &  & 4.91          & 4.98          &  &   7.61            & 7.47              &  & 2.03               & 4.78              &  & 2.31           & 5.40         &  & 2.53          & 5.54         \\
		4000   & Bias & 0.27               & 0.19              &  & 0.07          & 0.08          &  & -0.58         & -0.85        &  & 0.19               & -0.23             &  & 0.18           & -0.62        &  & 0.16          & -0.86        \\
		&           ESD  & 2.69               & 2.81              &  & 3.39          & 3.35          &  & 5.65          & 5.82         &  & 1.41               & 3.19              &  & 1.59           & 3.55         &  & 2.28          & 4.68         \\
		&           ASD  & 2.78               & 2.84              &  & 3.66          & 3.67          &  & 5.71          & 5.70          &  & 1.40               & 3.05              &  & 1.62           & 3.47         &  & 2.25          & 4.41         \\
		\hline
	\end{tabular}
\end{table}

\subsection{Simulations for the testing} \label{subtest}

In this subsection, we examine the finite-sample performance of $LM_{T}$ and $Q_{T}(\ell)$. We generate
1000 replications of sample size $T=2000$ and $4000$ from the following two DGPs
\begin{flalign*}
&\text{DGP 3 : \,\,The S-GARCH(1, 2) model with }\alpha_{10}=\beta_{10}=0.3\mbox{ and }\alpha_{20}=0.03k;\\
&\text{DGP 4 : \,\,The S-GARCH(2, 1) model with }\alpha_{10}=\beta_{10}=0.3\mbox{ and }\beta_{20}=0.03k,
\end{flalign*}
where $k=0,1,..., 10$, $\tau(x)$ is designed as in DGPs 1--2, and $\eta_t\sim N(0, 1)$.
For each DGP, the model with respect to $k=0$ is taken as its null model. That is, the S-GARCH($1, 1$) model is the null model for both DGP 3 and DGP 4.

Next, we fit each replication by its related null model, and then apply $LM_{T}$ to detect
the null hypothesis of $k=0$ as well as
$Q_{T}(\ell)$ to check whether this fitted null model is adequate.
Based on 1000 replications, the empirical power of
$LM_{T}$ and $Q_{T}(\ell)$ is plotted in Fig\,\ref{ARCHtest} and Fig\,\ref{GARCHtest} for
DGP  3 and DGP 4, respectively,
where we take the level $\alpha=5\%$ and the lag $\ell=6, 9$, and $12$, and
the sizes of both tests correspond to the results for $k=0$.

\begin{figure}[!h]
	\centering
	\includegraphics[width=15cm,height=8cm]{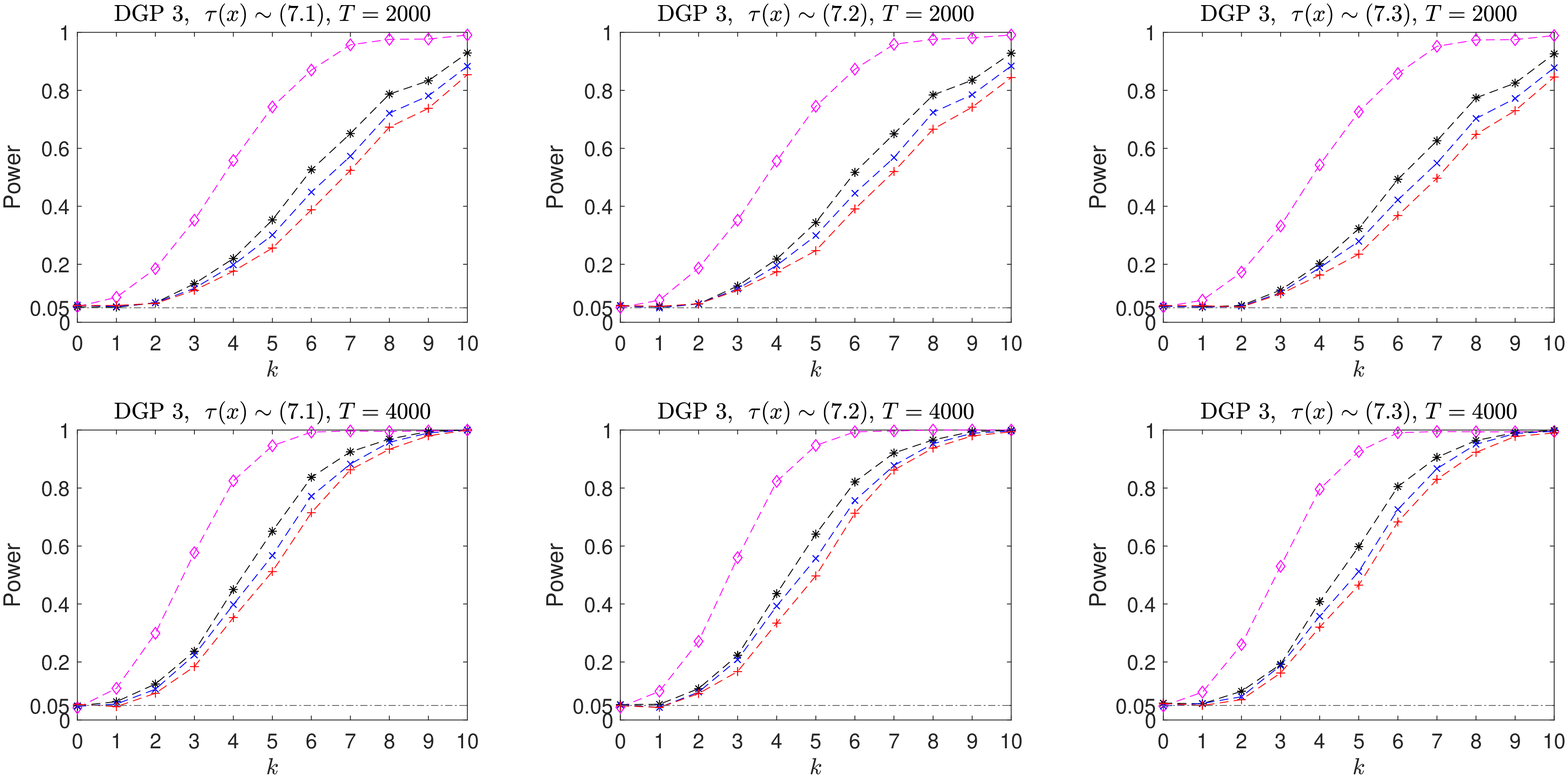}
	\caption{Power across $k$ in DGP 3 for $LM_{T}$ (diamond ``$\diamond$'' marker) and $Q_{T}(\ell)$ with $\ell=6$ (star ``$\ast$'' marker), $\ell=9$ (cross ``$\times$'' marker), and $\ell=12$ (plus ``$+$'' marker).	The horizontal dash-dotted line corresponds to the level $5\%$. Upper Panel: $T=2000$; bottom Panel:  $T=4000$.}
	\label{ARCHtest}
\end{figure}

\begin{figure}[!h]
	\centering
	\includegraphics[width=15cm,height=8cm]{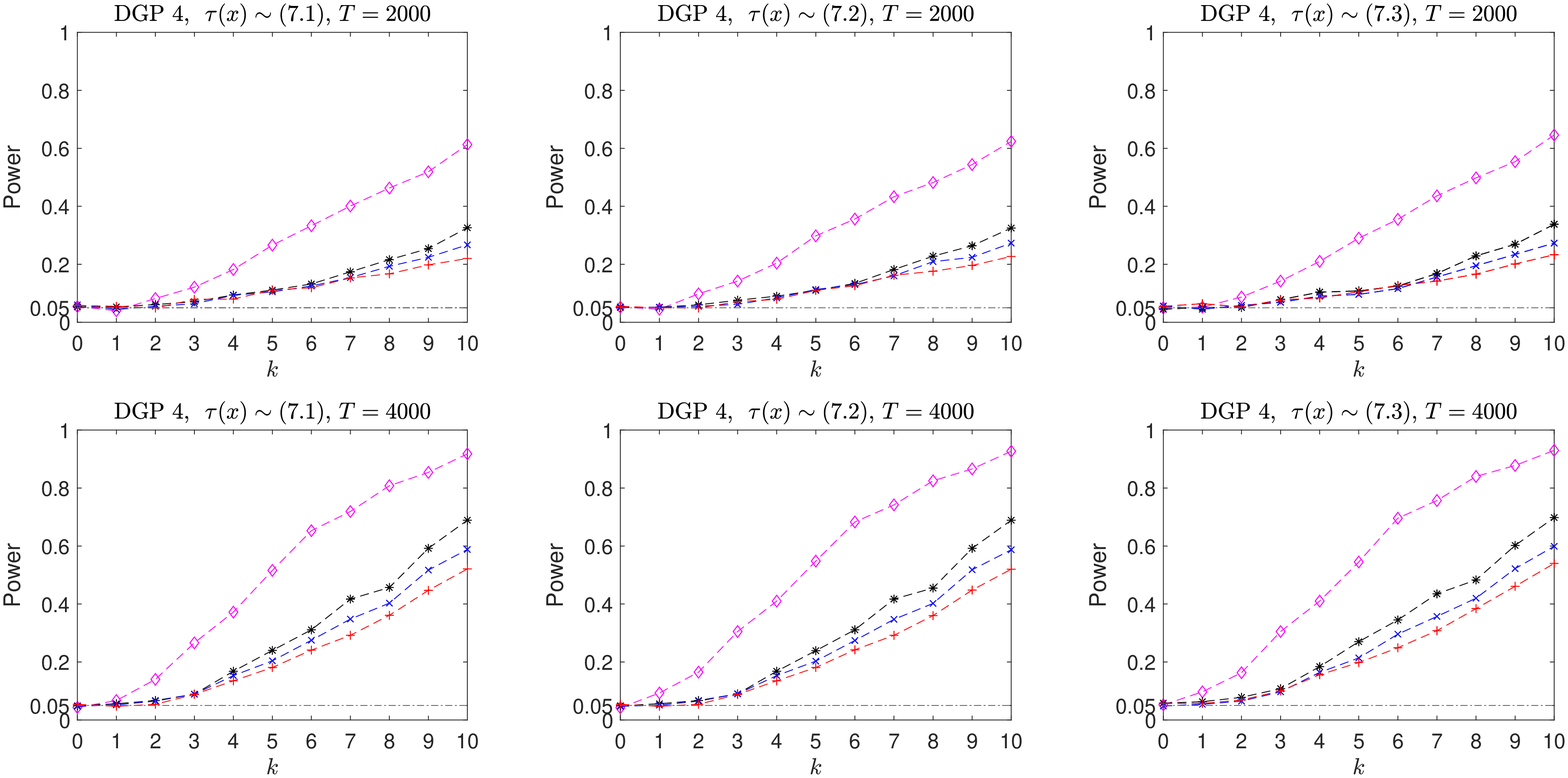}
	\caption{Power across $k$ in DGP 4.
		The descriptions are as for Fig~\ref{ARCHtest}.
}
	\label{GARCHtest}
\end{figure}

From Figs\,\ref{ARCHtest}--\ref{GARCHtest}, we can find that (i) all tests have precise sizes;
(ii) the power of all tests becomes large as the value of $T$ or $k$ increases; (iii) $LM_{T}$ is more powerful than all $Q_{\ell}$, and $Q_6$ is generally more powerful than $Q_9$ and $Q_{12}$;
(iv) all tests are more powerful to detect the mis-specification of ARCH part in DGP  3 than
the mis-specification of GARCH part in DGP 4; (v) all tests are adaptive, since their
power is unaffected by the form of $\tau(x)$.
In summary, all tests have a good performance especially for large $T$.

\subsection{Comparison with three-step estimation method}
In this subsection, we compare the finite-sample performance of $\widehat{\theta}_T$ and the three-step estimator $\widecheck{\theta}_{T}$ by investigating their bias difference and efficiency ratio (componentwisely) defined respectively as
$$
d(\gamma)=(|\mbox{the Bias of }\widehat{\gamma}_T|-|\mbox{the Bias of }\widecheck{\gamma}_T|)\times 100
\mbox{ and }
R(\gamma)=\frac{\mbox{the ESD of }\widehat{\gamma}_{T}}{\mbox{the ESD of }\widecheck{\gamma}_{T}},
$$
where $\gamma$ denotes any entry of $\theta_0$, and the Bias and ESD of each estimator are computed based on 1000 replications.
We calculate the values of  $d(\gamma)$ and $R(\gamma)$ under the same simulation settings as in Subsection \ref{sec:simu_est},
and  only report the results for the case of
$\tau(x)\sim$ (\ref{nochange}) in Table \ref{comparehf} due to the adaptiveness of $\widehat{\theta}_T$ and $\widecheck{\theta}_{T}$.
From Table \ref{comparehf}, we can find that (i) both estimators have a comparable bias performance; (ii)
when $\eta_t\sim N(0, 1)$, $\widecheck{\theta}_T$ is more (or less) efficient than $\widehat{\theta}_T$ in DGP 1 (or DGP 2),
indicating that $\widecheck{\theta}_T$ does not achieve the semiparametric efficiency bound as indicated in Theorem \ref{thm_improve};
(iii) when $\eta_t$ has a heavier distribution (e.g., $\eta_t \sim st_{5}$), $\widehat{\theta}_T$ exhibits more efficiency advantage over
$\widecheck{\theta}_{T}$. In summary, our simulation results suggest that it is unnecessary to
further update $\widehat{\theta}_T$ to $\widecheck{\theta}_T$.

\begin{table}[!h]
	\centering
	\caption{The results of $d(\gamma)$ and $R(\gamma)$ based on DGPs 1--2 with $\tau(x)\sim$ (\ref{nochange})}	\label{comparehf}
	\addtolength{\tabcolsep}{-2pt}  
	\begin{tabular}{ccccccccccccccccccc}
		\hline
		&                  & \multicolumn{8}{c}{DGP 1: S-ARCH(2)}                                                                         &  & \multicolumn{8}{c}{DGP 2: S-GARCH(1,1)}                                                                     \\ \cline{3-10} \cline{12-19}
		&                  & \multicolumn{2}{c}{$N(0,1)$} &  & \multicolumn{2}{c}{$st_{10}$} &  & \multicolumn{2}{c}{$st_5$}    &  & \multicolumn{2}{c}{$N(0,1)$} &  & \multicolumn{2}{c}{$st_{10}$} &  & \multicolumn{2}{c}{$st_5$}   \\ \cline{3-4} \cline{6-7} \cline{9-10} \cline{12-13} \cline{15-16} \cline{18-19}
		$T$    &                  & $\alpha_{10}$      & $\alpha_{20}$     &  & $\alpha_{10}$ & $\alpha_{20}$ &  & $\alpha_{10}$ & $\alpha_{20}$ &  & $\alpha_{10}$      & $\beta_{10}$      &  & $\alpha_{10}$  & $\beta_{10}$ &  & $\alpha_{10}$ & $\beta_{10}$ \\ \hline
		&                  &                    &                   &  &               &               &  &               &               &  &                    &                   &  &                &              &  &               &              \\
		2000 & $d(\gamma)$ & 0.08              & 0.08             &  & -0.23        & -0.22        &  & -0.49       & -0.53        &  & -0.06             & -0.43            &  & -0.05         & -0.47       &  & 0.04         & -0.34       \\
		& $R(\gamma)$ & 1.01              & 1.02             &  & 1.01         & 0.99         &  & 0.46         & 0.48         &  & 0.99              & 0.94             &  & 0.99          & 0.95        &  & 0.89         & 0.91        \\
		4000 & $d(\gamma)$ & 0.05              & 0.05             &  & -0.13        & -0.15        &  & -0.68        & -1.15        &  & -0.02             & -0.19            &  & -0.02         & -0.03       &  & 0.02         & -0.31       \\
		& $R(\gamma)$ & 1.04              & 1.04             &  & 1.03         & 1.03         &  & 0.40         & 0.24         &  & 0.99              & 0.96             &  & 0.99          & 0.96        &  & 0.97         & 0.96        \\
\hline
	\end{tabular}
\end{table}

\subsection{Comparison with the VT method}

In this subsection, we compare the finite-sample performance of $\widehat{\theta}_{T}$, $LM_T$ and  $Q_T(\ell)$
with those of $\overline{\theta}_{T}$, $LM_{T}^{vt}$ and $Q_T^{vt}(\ell)$, respectively, where
$\overline{\theta}_{T}$ defined in (\ref{vt_qmle}) is the QMLE from the VT method, and
$LM_{T}^{vt}$ and $Q_T^{vt}(\ell)$ are defined in the same way as $LM_T$ and  $Q_T(\ell)$ with
$\widehat{\theta}_{T}$ replaced by $\overline{\theta}_{T}$. Note that when the S-GARCH($p, q$) model is stationary,
$\overline{\theta}_{T}$ is asymptotically normal, and $LM_{T}^{vt}$ and $Q_T^{vt}(\ell)$ have the same limiting null distributions
as those of $LM_T$ and  $Q_T(\ell)$.

First, we compare the efficiency of $\widehat{\theta}_{T}$ and $\overline{\theta}_{T}$ by looking at the following ratio
$$R_{qmle}(\gamma)=\frac{\mbox{the ESD of }\widehat{\gamma}_{T}}{\mbox{the ESD of }\overline{\gamma}_{T}},$$
where the ESD of each estimator is computed based on 1000 replications.
Table \ref{estimationcompare} reports the values of $R_{qmle}(\gamma)$ when the DGP
is a stationary S-ARCH(2) (or S-GARCH($1, 1$)) model with $\tau(x)\sim (\ref{nochange})$, $\eta_t\sim N(0, 1)$, and three different choices of
$\theta_0$.
From this table, we find that as expected, all the values of $R_{qmle}(\gamma)$ are close to 1, indicating that $\widehat{\theta}_{T}$ and $\overline{\theta}_{T}$ have the same asymptotic efficiency when the S-GARCH model is stationary.

\begin{table}[!h]
	\centering
	\addtolength{\tabcolsep}{-0.2pt}
	\caption{The value of $R_{qmle}(\gamma)$ when the S-GARCH model is stationary}\label{estimationcompare}
	\begin{tabular}{ccccccccc}
		\hline
		& \multicolumn{8}{c}{\multirow{2}{*}{DGP: S-ARCH(2) with $\tau(x)\sim (\ref{nochange})$ and $\eta_t\sim N(0, 1)$}}                                                                       \\
		\multicolumn{8}{c}{}   \\
		\cline{2-9}
		&  \multicolumn{2}{c}{\multirow{2}{*}{$(\alpha_{10}, \alpha_{20})=(0.5, 0.1)$}}   &&
		\multicolumn{2}{c}{\multirow{2}{*}{$(\alpha_{10}, \alpha_{20})=(0.4, 0.2)$}}  &&
		\multicolumn{2}{c}{\multirow{2}{*}{$(\alpha_{10}, \alpha_{20})=(0.3, 0.3)$}} \\
		\multicolumn{8}{c}{}   \\				
		\cline{2-3} \cline{5-6} \cline{8-9}				
		\multicolumn{1}{c}{\multirow{2}{*}{$T$}}    & \multicolumn{1}{c}{\multirow{2}{*}{$R_{qmle}(\alpha_{10})$}}& \multicolumn{1}{c}{\multirow{2}{*}{$R_{qmle}(\alpha_{20})$}}
		&& \multicolumn{1}{c}{\multirow{2}{*}{$R_{qmle}(\alpha_{10})$}} & \multicolumn{1}{c}{\multirow{2}{*}{$R_{qmle}(\alpha_{20})$}}
		&& \multicolumn{1}{c}{\multirow{2}{*}{$R_{qmle}(\alpha_{10})$}} & \multicolumn{1}{c}{\multirow{2}{*}{$R_{qmle}(\alpha_{20})$}} \\
		\multicolumn{8}{c}{}   \\
		\cline{1-9}\\
		2000 & 0.951         & 1.036         && 1.043        & 0.978         && 1.005        & 0.997        \\
		4000 & 0.963         & 1.029         && 1.007        & 1.002         && 1.013        & 1.005        \\\\				
		& \multicolumn{8}{c}{\multirow{2}{*}{DGP: S-GARCH($1, 1$) with $\tau(x)\sim (\ref{nochange})$ and $\eta_t\sim N(0, 1)$}}                                                                       \\
		\multicolumn{8}{c}{}   \\
		\cline{2-9}
		&  \multicolumn{2}{c}{\multirow{2}{*}{$(\alpha_{10}, \beta_{10})=(0.1, 0.8)$}}   &&
		\multicolumn{2}{c}{\multirow{2}{*}{$(\alpha_{10}, \beta_{10})=(0.2, 0.7)$}}  &&
		\multicolumn{2}{c}{\multirow{2}{*}{$(\alpha_{10}, \beta_{10})=(0.3, 0.5)$}} \\
		\multicolumn{8}{c}{}   \\				
		\cline{2-3} \cline{5-6} \cline{8-9}				
		\multicolumn{1}{c}{\multirow{2}{*}{$T$}}    & \multicolumn{1}{c}{\multirow{2}{*}{$R_{qmle}(\alpha_{10})$}}& \multicolumn{1}{c}{\multirow{2}{*}{$R_{qmle}(\beta_{10})$}}
		&& \multicolumn{1}{c}{\multirow{2}{*}{$R_{qmle}(\alpha_{10})$}} & \multicolumn{1}{c}{\multirow{2}{*}{$R_{qmle}(\beta_{10})$}}
		&& \multicolumn{1}{c}{\multirow{2}{*}{$R_{qmle}(\alpha_{10})$}} & \multicolumn{1}{c}{\multirow{2}{*}{$R_{qmle}(\beta_{10})$}} \\
		\multicolumn{8}{c}{}   \\
		\cline{1-9}\\
		2000 &  1.003        & 1.022       && 1.036        & 1.104        && 0.995        & 1.049        \\
		4000 & 0.980         & 1.089        && 1.012         & 1.054        && 0.971        & 1.018        \\ \hline
	\end{tabular}
\end{table}

Second, we compare the power of $LM_{T}$ and $LM_{T}^{vt}$ and that of
$Q_{T}(\ell)$ and $Q_{T}^{vt}(\ell)$ by looking at the following two ratios
$$R_{lm}=\frac{\mbox{the power of }LM_{T}}{\mbox{the power of }LM_{T}^{vt}}\quad\mbox{ and }\quad
R_{q}(\ell)=\frac{\mbox{the power of }Q_{T}(\ell)}{\mbox{the power of }Q_{T}^{vt}(\ell)},$$
where the power of each test is computed based on 1000 replications.
Table \ref{VTpower} reports the values of $R_{lm}$ and $R_{q}(\ell)$ (for $\ell=6, 9$, and $12$), when
the data are generated from a stationary S-GARCH($1, 2$) model in DGP 3 with
$\tau(x)\sim (\ref{nochange})$. The results for DGP 4 are quite similar and hence omitted to save space.
From Table \ref{VTpower}, we can see that (i) the values of $R_{lm}$ are close to 1 in all examined cases;
(ii) when the value of $T$ or $k$ is small, the values of $R_{q}(\ell)$ are slightly less than one, meaning that
$Q_T^{vt}(\ell)$ could be more powerful than $Q_T(\ell)$; (ii) when the value of $T$ or $k$ becomes large,
the power advantage of $Q_T^{vt}(\ell)$ disappears as the values of $R_{q}(\ell)$ are close to 1.
These findings demonstrate that when the S-GARCH model is stationary,
our two tests have the same power performance as their counterparts from the VT method especially for large $T$. We also highlight that
when the S-GARCH model is non-stationary, our unreported results show that  $LM^{vt}_{T}$ and $Q_T^{vt}(\ell)$
can cause a severe over-sized problem, and hence they can not be used in this case.

\begin{table}[!h]
	\addtolength{\tabcolsep}{-2.8pt}
	\caption{The values $R_{lm}$ and $R_{q}(\ell)$ based on a stationary S-GARCH($1, 2$) model in DGP 3}\label{VTpower}
	\begin{tabular}{lcccccccccccc}
		\hline
		&\diagbox[width=5em,trim=l]{$T$}{$k$}    & 0 & 1      & 2      & 3      & 4       & 5      & 6      & 7      & 8      & 9      & 10     \\
		\hline
		&\multicolumn{12}{c}{}                                                                               \\
		$R_{lm}$
		&2000 &    1.019&	1.162&	1.081&	1.029&	1.018&	0.997&	0.996&	0.989&	0.98&	0.981&	0.991
		\\
		&4000 & 0.981	&1.123	&1.027&	0.995	&0.981	&0.989	&1.007&	1.001&	1.000&	1.000&	1.000\\
		&\multicolumn{12}{c}{}                                                                               \\
		$R_{q}(6)$
		&2000 &  1.232&	1.019&	0.761&	0.852&	0.916&	0.956&	1.073	&0.987	&0.987&	0.959&	0.988
		\\
		 &4000 &    1.184	&0.889	&0.799	&0.810	&0.907	&0.919	&0.962&	0.968	&0.991&	0.995	&0.999
		\\
		&\multicolumn{12}{c}{}                                                                               \\
		$R_{q}(9)$
		&2000 &  1.282&	0.843&	0.798	&0.943	&0.990	&0.961	&1.0346	&0.979&	0.993&	0.951&	0.983
		\\&4000 & 1.021&1.036&	0.754&	0.839&	0.899&	0.903&	0.940&	0.971&	0.985&	0.9997&	0.996		
		\\
		&\multicolumn{12}{c}{}                                                                               \\
		$R_{q}(12)$
		&2000 &  1.160
		& 0.698 & 0.833 & 0.916 & 1.053 & 0.870 & 1.021 & 0.990 & 0.979 & 0.963 & 0.982 \\&4000 &1.056&	1.087&	0.765&	0.880&	0.906	&0.908&	0.955&	 0.976&	0.950&	0.991&	0.999
		\\ \hline
	\end{tabular}
\end{table}

	\section{Applications}
	In this section, we re-study the US dollar to Indian rupee (USD/INR)  exchange rate series and FTSE-index series in \citet{Truquet:2017},
	with respect to in-sample fitting and out-of-sample prediction.

	\subsection{USD/INR exchange rate}
	
	This subsection considers the USD/INR exchange rate series from December 19th, 2005 to February 18th, 2015.
	The log returns (in percentage) of this series having $T=2301$ observations in total are denoted by $\{y_t\}$, and they are plotted in  the upper panel of
	Fig\,\ref{rupee}.
	 We  apply the non-parametric strict stationarity test in \cite{Hong:2017} (with the same settings as in their simulation) to $\{y_t\}$ and find this test statistic has a p-value close to zero, indicating a strong evidence against the strict stationarity. Thus, using a non-stationary model to fit this series seems appropriate.
	 In \citet{Truquet:2017}, this return series is fitted by a semiparametric
	ARCH(1) model with a time-varying intercept and a constant lag-1 ARCH parameter.
	Motivated by this, we use an ARCH(1) model as the pilot model in Algorithm \ref{alg1} to choose the bandwidth $h=0.0358$, and then calculate the series $\{\widehat{u}_{t}\}$.  Based on $\{\widehat{u}_{t}\}$, the BIC selects $p=q=1$ for the S-GARCH model, and hence we fit this return series by
	the S-GARCH($1, 1$) model with $\widehat{\alpha}_{1T}=0.0762_{(0.0231)}$, $\widehat{\beta}_{1T}=0.8443_{(0.0475)}$, and
	$\widehat{\tau}_t$ being plotted in the middle panel of Fig\,\ref{rupee}, where the values in parentheses are
	the asymptotic standard errors, and the bandwidth $h=0.0833$ is re-chosen by using a GARCH($1, 1$) model as
	the pilot model in Algorithm \ref{alg1}. For this fitted S-GARCH($1, 1$) model, the p-values of
	the portmanteau tests $Q_{T}(6)$, $Q_{T}(9)$, and $Q_{T}(12)$ are 0.6472, 0.7530, and 0.8268, respectively, implying that
	our fitted short run GARCH($1, 1$) component is adequate.
In view of the plot of $\{\widehat{\tau}_t\}$ in Fig\,\ref{rupee}, we can find that the long run component $\tau_t$
has relatively larger values around years 2009 and 2014.  Moreover, we also plot the estimated volatilities
based on either S-GARCH or GARCH model in the bottom panel of Fig\,\ref{rupee}, from which we can see that compared with the S-GARCH model, the GARCH model tends to underestimate the volatilities during 2008-2009 and 2013-2014, and overestimate the volatilities during other periods.
\begin{figure}[!h]
		\centering
		\includegraphics[width=12cm,height=10cm]{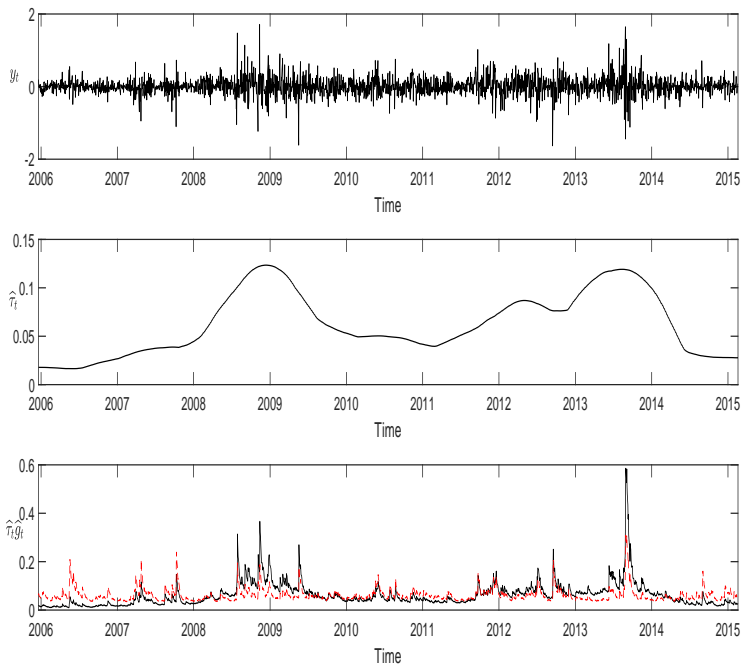}
		\caption{The plot of log returns $\{y_t\}$ (upper panel), estimated long-run components $\{\widehat{\tau}_t\}$ (middle panel), and estimated volatilities $\{\widehat{\tau}_t\widehat{g}_t\}$ (bottom panel) based on S-GARCH model (solid line) and GARCH model (dotted line) for USD/INR series. Here,  $\widehat{\tau}_t$ is computed by using the Epanechnikov kernel with $h=0.0833$ for the S-GARCH model, and
$\widehat{\tau}_t\equiv \overline{\tau}_T$ and $\widehat{g}_t=\overline{g}_{t}(\overline{\theta}_T)$ in (\ref{vt_qmle}) for the GARCH model. }
		\label{rupee}
	\end{figure}

	\subsection{FTSE-index}
	
	This subsection considers the FTSE-index series from January 4th, 2005 to March
	4th, 2015. We study the log returns of this index series with $T=2568$ observations in total, which is denoted by
	$\{y_t\}$ and plotted in the upper panel of Fig\,\ref{FTSE}.
	As the previous example, we use the non-parametric strict stationarity test in \cite{Hong:2017} to $\{y_t\}$, and find a strong evidence (with the p-value close to zero) against the strict stationarity.  Since \citet{Truquet:2017}
	suggested a semiparametric
	ARCH(5) model with a time-varying intercept and constant ARCH parameters to fit this return series,
	we take an ARCH(5) model as the pilot model in Algorithm \ref{alg1}, and then select
	the bandwidth $h=0.0865$ as a result. Based on this choice of $h$, we compute $\{\widehat{u}_{t}\}$ and select $p=q=1$ according to the BIC.
	Hence, we fit this return series by the S-GARCH($1, 1$) model with $\widehat{\alpha}_{1T}=0.1098_{(0.0165)}$, $\widehat{\beta}_{1T}=0.8433_{(0.0233)}$, and
	$\widehat{\tau}_t$ being plotted in the middle panel of Fig\,\ref{FTSE}, where
	the bandwidth $h=0.0941$ is re-chosen by using a S-GARCH($1, 1$) model as
	the pilot model in Algorithm \ref{alg1}. Further, the portmanteau tests $Q_{T}(6)$, $Q_{T}(9)$, and $Q_{T}(12)$
	(with p-values equal to $0.5326$, $0.5335$, and $0.2800$, respectively) suggest that
	this fitted short run GARCH($1, 1$) component is adequate. From the middle panel of Fig\,\ref{FTSE}, we find that the long run component $\tau_t$ for the FTSE return series
	only has a clear peak around 2009. This may imply that the stock market index series
	has a different long run structure with the exchange rate series. Moreover, we also plot the estimated volatilities based on either  S-GARCH or  GARCH model in the bottom panel of Fig\,\ref{FTSE}, from which we can see that the estimated volatilities from two models are quite close except  around years 2008-2009, during which the GARCH model tends to underestimate the volatilities.
\begin{figure}[!h]
		\centering
		\includegraphics[width=15cm,height=10cm]{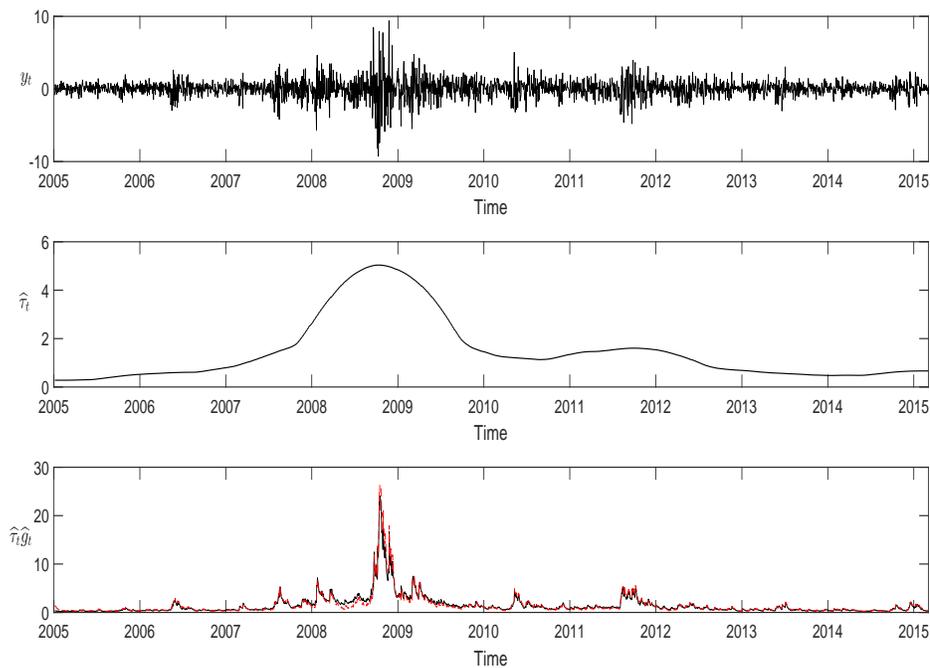}
		\caption{The plots for FTSE series with the same descriptions as in Fig\,\ref{rupee}. Here,  $\widehat{\tau}_t$ is computed by using the Epanechnikov kernel with $h=0.0941$ for the S-GARCH model.}
		\label{FTSE}
	\end{figure}

	\subsection{Forecasting comparisons}
	
	This subsection makes a forecasting comparison among S-GARCH($1, 1$) model,
	S-ARCH($q$) model, GARCH($1, 1$) model in \citet{Bollerslev:1986}, and LS-ARCH($q$) model (i.e., the locally stationary ARCH($q$) model) in  \citet{FSR:2008} for the USD/INR and FTSE return series.
	Note that the S-ARCH($q$) model can locally approximate the semiparametric ARCH($q$) model in \citet{Truquet:2017}, where
	$q=1$ (or 5) is suggested for the USD/INR (or FTSE) return series. Hence, we follow \citet{Truquet:2017} to select $q$ for the S-ARCH($q$) and LS-ARCH($q$) models.
	
	Next, we compare all four models in terms of the averaged QLIKE loss function in \citet{Patton:2011}.
	Specifically, we
	use the in-sample data  $\{y_t\}_{t=1}^{T_0}$ to make a $t_0$-step ahead forecast
	$\widehat{y}_{{T_0}+t_0|T_0}^2$ for the out-of-sample data point $y_{T_0+t_0}^2$, and then compute the averaged QLIKE by
	 $$\mbox{QLIKE}(t_0)=\frac{1}{T-t_0-1499}\sum_{T_0=1500}^{T-t_0}\log\widehat{y}_{{T_0}+t_0|T_0}^2+\frac{y_{T_0+t_0}^2}{\widehat{y}_{{T_0}+t_0|T_0}^2}.$$
	The model with the smaller value of $\mbox{QLIKE}(t_0)$ has the better $t_0$-step ahead forecasting performance.
	
	Moreover, we introduce how each model computes $\widehat{y}_{{T_0}+t_0|T_0}^2$. For the S-GARCH($1, 1$) model,
	we fit the model via the two-step estimation based on the in-sample data $\{y_t\}_{t=1}^{T_0}$, where the bandwidth $h$ is chosen by
	Algorithm  \ref{alg1} with a pilot GARCH(1, 1) model. With the kernel estimate
	$\widehat{\tau}_{T_0}$ and QMLE $\widehat{\theta}_{T_0}$, we then obtain  $\widehat{y}_{T_0+t_0|T_0}^2=\widehat{\tau}_{T_0}g_{T_0+t_0|T_0}(\widehat{\theta}_{T_0})$, where
	$g_{T_0+t_0|T_0}(\widehat{\theta}_{T_0})$ computed as for volatility
	prediction in the GARCH($1, 1$) model is the $t_0$-step ahead prediction of $g_{T_0+t_0}$. A similar way is used for the S-ARCH($q$) model
	to compute $\widehat{y}_{{T_0}+t_0|T_0}^2$.
	For the GARCH($1, 1$) model, we fit the model via the VT estimation based on the in-sample data $\{y_t\}_{t=1}^{T_0}$, and then
	compute $\widehat{y}_{{T_0}+t_0|T_0}^2$ in the conventional way. For the LS-ARCH($q$) model, we follow the method in \citet{FSR:2008} to compute $\widehat{y}_{{T_0}+t_0|T_0}^2$. That is,
	we treat the last $\widetilde{T}$ in-sample data points as if they came from a stationary ARCH($q$) process, and then estimate the parameters based on these $\widetilde{T}$ data points
	and compute $\widehat{y}_{{T_0}+t_0|T_0}^2$ as for the stationary ARCH($q$) model.
	Here, the tuning parameter $\widetilde{T}$ is chosen by minimizing  the QLIKE, i.e.,
	\begin{equation*}
	\widetilde{T}=\arg\min_{T\in\mathcal{T}}\sum_{t\in[{T_0}-50,\,{T_0}-1]}\log\widehat{y}_{t+1|t}^2(T)+\frac{y_{t+1}^2}{\widehat{y}_{t+1|t}^2(T)},
	\end{equation*}
	where $\mathcal{T}=\{50, 100, ..., 500\}$, and $\widehat{y}_{t+1|t}^2(T)$
	computed as for the stationary ARCH($q$) model is the prediction of $y_{t+1}^2$ based on the data $\{y_{i}\}_{i=t-T+1}^{t}$.

	Table \ref{tableforecast} reports the values of QLIKE($t_0$) for all four models, where
	the prediction horizon $t_0$ is taken as $1, 5, 10$, and $22$, corresponding to
	daily, weekly, biweekly, and monthly predictions, respectively. The DM test  in \cite{DM1995} is implemented to compare the forecasting accuracy between the model with smallest value of QLIKE and other three models.	
	 From this table, we find that  for both series, the S-GARCH model has the smallest value of QLIKE for
$t_0=1$ and 5,  while  the GARCH model has the smallest value of QLIKE for
$t_0=10$ and 22. In terms of DM test, we find that
the model with smallest value of QLIKE does not exhibit significantly
forecasting accuracy than its three competitors for USD/INR series, while
it has significantly
forecasting accuracy than two ARCH-type competitors for FTSE series. These findings are
consistent with those in \citet{FSR:2008} and \citet{Truquet:2017}
that (S-)GARCH models could deliver better forecasts than LS-ARCH and S-ARCH models.
For the S-GARCH model, we simply just use the latest in-sample long-run component estimator $\widehat{\tau}_{T_0}$
to predict the out-of-sample long-run component $\tau_{T_0+t_0}$.
So far, we do not know how to find an ``optimal'' way under certain criterion to predict $\tau_{T_0+t_0}$, and this dilemma seems to exist in
most of nonparametric methods. We believe that with a better prediction of $\tau_{T_0+t_0}$,
our S-GARCH model could deliver better prediction performances especially at longer prediction horizons.

	\begin{table}[!h]
\addtolength{\tabcolsep}{-0.001pt}
		\centering
		\caption{The values of QLIKE($t_0$) for $t_0=1, 5, 10$ and 22}\label{tableforecast}
				\begin{tabular}{cccccccccc}
				\hline
				& \multicolumn{4}{c}{USD/INR}                                       &  & \multicolumn{4}{c}{FTSE}                                      \\ \cline{2-5} \cline{7-10}
				& 1              & 5              & 10             & 22             &  & 1             & 5             & 10            & 22            \\ \hline
				S-GARCH & $-\bf{1.6981}$ & -$\bf{1.6541}$ & -1.6050        & -1.5479        &  & $\bf{0.7033}$ & $\bf{0.7600}$ & $0.8179^*$        & $0.8771^*$        \\
				S-ARCH  & -1.6863        & -1.6273        & -1.6175        & -1.5641        &  & $0.7310^*$        & $0.7867^*$        & $0.8304^*$        & $0.8926^*$        \\
				GARCH   & -1.6815        & -1.6520        & -$\bf{1.6180}$ & -$\bf{1.5774}$ &  & 0.7043        & 0.7665        & $\bf{0.7982}$ & $\bf{0.8321}$ \\
				LS-ARCH & -1.6710        & -1.6212        & -1.6155        & -1.5722        &  & $0.7520^*$        & $0.7861^*$        & $0.8058^*$        & $0.8635^*$        \\ \hline
			\end{tabular}
			\begin{tablenotes}
				\item[1] {\scriptsize Note: For each $t_0$, the smallest value of QLIKE($t_0$) among all four models is in boldface.  }
				\item [2]{\scriptsize DM test is implemented between the model with smallest value of QLIKE($t_0$)  and other three models, where the symbol star (*) indicates the significance at 5\% level.}
			\end{tablenotes}

	\end{table}

\section{Concluding remarks}
	This paper provides a complete statistical inference procedure for the S-GARCH model.
	Our methodologies including the estimation and testing focus on the QMLE of non-time-varying parameters in GARCH-type
	short run component. Since this QMLE is based on the estimate of the long run component,
	we develop new proof techniques to derive its asymptotic normality, and find that
	its asymptotic variance is adaptive to the long run component with unknown form.
	By comparing the results with those in \citet{HL:2010}, we find a much simpler asymptotic variance expression for the
	QMLE, bringing the convenience of use to practitioners. By comparing with the QMLE from the VT method in \citet{FHZ:2011},
	we find that our QMLE not only enjoys a broader application scope to
	deal with the non-stationary S-GARCH model, but also avoids any efficiency loss when the S-GARCH model is stationary.
	All of these interesting features have not been unveiled before in the literature, and they make our QMLE and its related Lagrange multiplier and portmanteau tests
	more appealing in practice.

	Finally, we suggest some future research topics. First, it is interesting to extend our study to the robust estimation context.
	This could give us more efficient estimators and more powerful tests for dealing with heavy-tailed data. Second,
	a similar semiparametric framework as (\ref{semi_model}) can be posed into
	many variants of the standard GARCH model
	(e.g., the asymmetric power-GARCH model in \citet{PWT:2008} and the
	asymmetric log-GARCH model in \citet{FWZ:2013}),
	and
	our methodologies could be applied to these new resulting semiparametric models.
	Third, another possible
	 work is to relax the smooth condition of the long run component to
	allow for abrupt changes. This seems challenging and may require more non-trivial technical treatments.
	
\section*{Acknowledgments}
The authors greatly appreciate the helpful comments
and suggestions of two anonymous referees, Associate Editor,
and Co-Editor.
Jiang acknowledges that his work was partly carried out during the visit in University of Hong Kong and University of Illinois at Urbana-Champaign,
and his work is supported by China Scholarship Council  (No. 201906210093).
Li's work is supported by the NSFC (Nos. 11771239 and 71973077) and the Tsinghua University Initiative Scientific Research Program (No. 2019Z07L01009). Zhu's work is supported by Hong Kong GRF grant (Nos. 17306818 and 17305619), NSFC (Nos. 11690014 and 11731015), Seed Fund for Basic Research (No. 201811159049), and Fundamental Research Funds for the Central University (19JNYH08).

\appendix
\section*{Appendix: Proofs}

To facilitate the proofs, we first introduce some notations. As for $g_{t}(\theta)$, $\widehat{g}_{t}(\theta)$, $L_{T}(\theta)$, and $\widehat{L}_{T}(\theta)$ in (\ref{ideal_llf})--(\ref{hatg}), we similarly define
\begin{equation}\label{mle}
\widetilde{L}_T(\theta)=\sum_{t=1}^{T}\widetilde{l}_t(\theta)\quad \text{with}\quad \widetilde{l}_t(\theta)=\frac{u_t^2}{\widetilde{g}_t(\theta)}+\log\widetilde{g}_t(\theta),
\end{equation}
where $\widetilde{g}_t(\theta)$ is defined in the same way as $\widehat{g}_t(\theta)$ in (\ref{hatg}) with
$\widehat{u}_{t}$ replaced by $u_{t}$. Meanwhile, we let $\kappa_T=\sqrt{(\log T)/(Th)}+h^2$, $\Delta_t=\widehat{u}^2_t-u^2_t$, $\widetilde{S}_t(\theta)={\widehat{g}_t(\theta)}^{-1}-{\widetilde{g}_t(\theta)}^{-1}$,
$B^{(j)}$ be a $p\times p$ matrix with $(1,j)$th element 1 and other elements 0, and
$$
B=\left(\begin{matrix}
\beta_1&\beta_2&\cdots&\beta_p\\
1&0&\cdots&0\\
\vdots&&&\vdots\\
0&\cdots&1&0
\end{matrix}\right)_{p\times p}.
$$
Also, we let $C$ be a generic constant which may differ at each
appearance.

Next, we give five technical lemmas, whose proofs are given in the supplementary material (\citet{JLZ:2019}). Lemma \ref{lem_linton} captures the error from the nonparametric estimation.
Lemma \ref{lem_diff} gives some useful results on $\Delta_t$  and $\widetilde{S}_t(\theta)$.
Lemma \ref{lem_hat} ensures that replacing $u_t$ by $\widehat{u}_t$ has a negligible impact
on our asymptotic results. Lemma \ref{lem_initial}  guarantees that the effect from initial values to our asymptotics
is negligible. Lemma \ref{lem_mixing} provides a useful $\beta$-mixing result.

\begin{lemma}\label{lem_linton}
	Suppose Assumptions \ref{ident_tau}--\ref{ass_ut} hold. Then, almost surely $(a.s.)$,
	
	$\mathrm{(i)}$ ${\displaystyle \sup_{x\in(0,1)}\Big|\widehat{\tau}(x)-\tau(x)-\frac{1}{T}\sum_{t=1}^{T}K_h\Big(x-\frac{t}{T}\Big)\tau(x)(u^2_t-1)-h^2b(x)\Big|=O\Big(\frac{\log T}{Th}\Big)+o(h^2)}$;
	
	$\mathrm{(ii)}$
	${\displaystyle \sup_{x\in(0,1)}\Big|\frac{1}{T}\sum_{t=1}^{T}K_h\Big(x-\frac{t}{T}\Big)\tau(x)(u^2_t-1)\Big|=O\Big(\sqrt{\frac{\log T}{Th}}\Big)}$.
\end{lemma}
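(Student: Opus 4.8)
The plan is to split $\widehat{\tau}(x)-\tau(x)$ into a deterministic (bias) part and a mean-zero stochastic part, dispatch the bias by a Riemann-sum-plus-Taylor argument, control the stochastic part by a uniform maximal inequality for weakly dependent kernel-weighted sums, and then observe that the Lipschitz modulus of $\tau$ furnishes the extra factor $h$ that upgrades the $\sqrt{(\log T)/(Th)}$ rate of part (ii) to the sharper remainder in part (i). First, by Assumption~\ref{ident_tau}, $y_s^2=\tau(s/T)+\tau(s/T)z_s$ with $z_s=u_s^2-1$, so
\[
\widehat{\tau}(x)-\tau(x)=D(x)+S(x),
\]
where $D(x):=\frac1T\sum_{s=1}^{T}K_h(x-s/T)\tau(s/T)-\tau(x)$ and $S(x):=\frac1T\sum_{s=1}^{T}K_h(x-s/T)\tau(s/T)z_s$; the quantity inside the supremum in (i) equals $\{D(x)-h^2b(x)\}+(B)(x)$, with $(B)(x):=S(x)-\frac1T\sum_{s=1}^{T}K_h(x-s/T)\tau(x)z_s=\frac1T\sum_{s=1}^{T}K_h(x-s/T)\{\tau(s/T)-\tau(x)\}z_s$.

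For the bias term $D(x)$, I would first replace $\frac1T\sum_s K_h(x-s/T)\tau(s/T)$ by $\int_0^1 K_h(x-u)\tau(u)\,du$; the integrand is supported on an interval of length $O(h)$ and has derivative $O(1/h^2)$ there (using that $K$ is Lipschitz, Assumption~\ref{ass_kernel}(i)), hence total variation $O(1/h)$, so the Riemann-sum error is $O(1/(Th))$. A change of variables and a second-order Taylor expansion of $\tau$, twice continuously differentiable by Assumption~\ref{ass_tau}(i), give $\int_{-1}^{1}K(v)\tau(x-hv)\,dv=\tau(x)+\tfrac{h^2}{2}C_2\,\partial^2\tau(x)/\partial x^2+o(h^2)=\tau(x)+h^2b(x)+o(h^2)$, uniformly for $x$ bounded away from $\{0,1\}$; the first-order term vanishes since $C_1=\int_{-1}^{1}vK(v)\,dv=0$ by symmetry, and the $o(h^2)$ is uniform by uniform continuity of $\partial^2\tau$ on $[0,1]$. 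Boundary points are covered by the $O(h^2)$-bias reflection device described after the theorem in the main text. Hence $\sup_x|D(x)-h^2b(x)|=O(1/(Th))+o(h^2)$, which lies within the bound claimed in (i).

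The heart of the argument is the stochastic part, and I would prove (ii) first. Since $\tau$ is bounded (Assumption~\ref{ass_tau}(ii)), it suffices to bound $\sup_x|\frac1T\sum_t K_h(x-t/T)z_t|$. A direct second-moment computation gives variance $O(1/(Th))$ at each fixed $x$, using stationarity of $\{z_t\}$ and $\sum_k|E(z_0z_k)|<\infty$, the latter because under Assumption~\ref{ident_tau}(i) and $Eu_t^4<\infty$ (Assumption~\ref{ass_ut}) the process $\{u_t^2\}$ obeys a stable finite-variance ARMA recursion with geometrically decaying autocovariances. I would then combine this variance bound with a truncation of $z_t$ at a polynomial level (needed because only $Ez_t^2<\infty$ is available), a blocking step and a Bernstein-type exponential inequality for the $\beta$-mixing sequence $\{z_t\}$ (Lemma~\ref{lem_mixing}), a union bound over a grid of $x$-values of mesh $O(T^{-2})$, and the Lipschitz continuity of $K_h$ (its oscillation across a cell is $O(T^{-2}h^{-2})$, negligible relative to the target rate); Borel--Cantelli then yields $\sup_x|\frac1T\sum_t K_h(x-t/T)z_t|=O(\sqrt{(\log T)/(Th)})$ a.s., i.e.\ (ii).

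Finally, for (i) it remains to bound $(B)(x)$. On the support of $K_h(x-\cdot/T)$ one has $|t/T-x|\le h$, so $|\tau(t/T)-\tau(x)|\le Lh$ with $L$ a Lipschitz constant of $\tau$; thus $(B)(x)$ is again a mean-zero kernel-weighted sum of $\{z_t\}$, but now with weights of order $1$ rather than order $1/h$, so its variance at each $x$ is $O(h^2/(Th))$, and the maximal-inequality argument of the previous paragraph gives $\sup_x|(B)(x)|=O(h\sqrt{(\log T)/(Th)})$ a.s. Since $h\sqrt{(\log T)/(Th)}=\sqrt{h^2\cdot(\log T)/(Th)}$ is, along any subsequence, either $O((\log T)/(Th))$ or $o(h^2)$ according to whether $Th^3$ is or is not $O(\log T)$, it is $O((\log T)/(Th))+o(h^2)$ overall, which with the bound on $D(x)-h^2b(x)$ proves (i). I expect the main obstacle to be the uniform maximal inequality with the \emph{sharp} $\sqrt{(\log T)/(Th)}$ rate: it must simultaneously accommodate the weak dependence of $\{u_t^2\}$, the fact that only a fourth moment of $u_t$ is available (which dictates the truncation level and hence the admissible grid size entering Borel--Cantelli), and the chaining over $x$; once this estimate is in hand, the deterministic bias is routine and the gain of a factor $h$ in (i) follows for free from the Lipschitz bound on $\tau$.
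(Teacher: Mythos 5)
Your decomposition is the natural one, and the bias analysis (Riemann-sum error $O(1/(Th))$ plus a uniform second-order Taylor expansion, with the factor $h$ gained on the cross term $B(x)$ from the smoothness of $\tau$) is in order for $x$ at distance at least $h$ from the boundary. The genuine gap is that the step carrying essentially all of the difficulty --- the uniform almost-sure bound $\sup_x\big|T^{-1}\sum_t K_h(x-t/T)z_t\big|=O\big(\sqrt{(\log T)/(Th)}\big)$ --- is asserted rather than proved, and the toolkit you list does not close it under the hypotheses of the lemma. First, Lemma \ref{lem_linton} assumes only Assumptions \ref{ident_tau}--\ref{ass_ut}, so the $\beta$-mixing of Lemma \ref{lem_mixing} is not at your disposal: that lemma additionally requires Assumption \ref{ass_garch} and the density condition in Assumption \ref{ass_eta}(i). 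Second, and more substantively, Assumption \ref{ass_ut} gives only $Ez_t^2<\infty$, and with two moments the truncation--Bernstein--Borel--Cantelli scheme is internally inconsistent: for the exponential bound at the deviation level $A\sqrt{(\log T)/(Th)}$ to have an exponent of order $\log T$ (so that the union bound over a polynomial grid is summable in $T$), the truncation level must satisfy $c_T\lesssim\sqrt{Th/\log T}$; but then the discarded part $T^{-1}\sum_t K_h(x-t/T)z_t\mathbf{1}(|z_t|>c_T)$ is not controlled --- its mean per kernel window, of order $Ez_t^2/c_T$, is already at the target rate, and its fluctuations involve exactly the unbounded summands for which no exponential inequality is available with two moments, while the alternative route of showing that no exceedances of $c_T$ occur eventually requires $c_T$ of order $\sqrt{T}$ up to logarithms, incompatible with the Bernstein constraint. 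Indeed, a single observation of size nearly $\sqrt{T}$ (not excluded by square integrability alone) sitting in one window contributes about $1/(\sqrt{T}h)$, which dominates $\sqrt{(\log T)/(Th)}$ whenever $h\log T\to0$. So the sharp rate cannot come from crude bounds plus two moments; one must exploit the structure of $z_t$ (e.g.\ its representation as a geometrically weighted sum of the martingale differences $g_{t-j}(\eta_{t-j}^2-1)$, for which a maximal inequality adapted to the kernel weights can be proved) or import higher moments, and your sketch does neither. You flag this yourself as ``the main obstacle,'' which is accurate, but it means the proof is not yet there.

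Two smaller points. The final absorption $h\sqrt{(\log T)/(Th)}=O((\log T)/(Th))+o(h^2)$ should be argued by comparing $Th^3$ with $\log T$ directly: if $Th^3\le\log T$ the quantity is at most $(\log T)/(Th)$, and if $Th^3/\log T\to\infty$ it is $o(h^2)$; your ``along any subsequence'' phrasing glosses over bandwidths for which $Th^3/\log T$ stays bounded away from $0$ and $\infty$, where the term is only $O(h^2)$ but is then also $O((\log T)/(Th))$, so the dichotomy should be stated in that form. Also, the uniform Taylor expansion of the bias is valid only for $x\in[h,1-h]$, while the lemma takes the supremum over all of $(0,1)$; you should either restrict to interior points or carry the reflection-corrected estimator through the whole argument rather than invoking it as an afterthought.
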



\begin{lemma}\label{lem_diff}
	
	Suppose the conditions in Theorem \ref{thm_garch} hold. Then,
	
    $(\mathrm{i})$	
	$\Delta_t={\tau_t}^{-1}{(\tau_t-\widehat{\tau}_t)u^2_t}+O(\kappa_T^2)u^2_t$, where $O(1)$ holds uniformly in $t$;
	
	$(\mathrm{ii})$	
	$\sup_{\theta\in\Theta}|\widetilde{S}_t(\theta)|\leq C\kappa_T$.
	
\end{lemma}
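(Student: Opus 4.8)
The plan is to treat the two parts separately, and in each case reduce everything to the estimates already available from Lemma~\ref{lem_linton} and elementary recursive bounds on $\widehat g_t(\theta)$.

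\textbf{Part (i).} First I would write $\Delta_t=\widehat u_t^2-u_t^2=y_t^2(\widehat\tau_t^{-1}-\tau_t^{-1})=u_t^2\,\tau_t(\widehat\tau_t^{-1}-\tau_t^{-1})=u_t^2\,\tau_t^{-1}(\tau_t-\widehat\tau_t)\widehat\tau_t^{-1}\tau_t$, using $y_t^2=\tau_t u_t^2$. A first-order expansion of $\widehat\tau_t^{-1}$ around $\tau_t^{-1}$ then gives
\[
\Delta_t=\tau_t^{-1}(\tau_t-\widehat\tau_t)u_t^2+\tau_t^{-2}(\tau_t-\widehat\tau_t)^2\widehat\tau_t^{-1}u_t^2.
\]
By Lemma~\ref{lem_linton}(i)--(ii) together with Assumption~\ref{ass_tau}(ii), $\sup_t|\widehat\tau_t-\tau_t|=O(\sqrt{(\log T)/(Th)})+O(h^2)=O(\kappa_T)$ a.s., and the same bound keeps $\widehat\tau_t$ bounded away from $0$ for $T$ large, so $\widehat\tau_t^{-1}=O(1)$ uniformly in $t$. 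Hence the remainder term is $O(\kappa_T^2)u_t^2$ uniformly in $t$, which is the claim. The only subtlety is that Lemma~\ref{lem_linton} is stated for $\widehat\tau(x)$ in \eqref{est_tau} on the open interval, whereas $\widehat\tau_t=\widehat\tau(t/T)$ is evaluated at the grid points, including those near the boundary; as the paper notes after \eqref{boundary_tau}, the reflection device only adds an $O(h^2)$ bias, so the uniform rate $O(\kappa_T)$ is unaffected, and I would invoke that remark rather than redo the boundary analysis.

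\textbf{Part (ii).} Recall $\widetilde S_t(\theta)=\widehat g_t(\theta)^{-1}-\widetilde g_t(\theta)^{-1}=\widehat g_t(\theta)^{-1}\widetilde g_t(\theta)^{-1}\big(\widetilde g_t(\theta)-\widehat g_t(\theta)\big)$. Since both $\widehat g_t(\theta)$ and $\widetilde g_t(\theta)$ are bounded below by the intercept $1-\sum_i\alpha_i-\sum_j\beta_j$, which is bounded away from $0$ uniformly on the compact $\Theta$ under Assumption~\ref{ass_garch}, it suffices to bound $\sup_{\theta\in\Theta}|\widehat g_t(\theta)-\widetilde g_t(\theta)|$. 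Subtracting the two recursions \eqref{hatg} (with $\widehat u$) and its counterpart with $u$,
\[
\widehat g_t(\theta)-\widetilde g_t(\theta)=\sum_{i=1}^q\alpha_i\big(\widehat u_{t-i}^2-u_{t-i}^2\big)+\sum_{j=1}^p\beta_j\big(\widehat g_{t-j}(\theta)-\widetilde g_{t-j}(\theta)\big)=\sum_{i=1}^q\alpha_i\Delta_{t-i}+\sum_{j=1}^p\beta_j\big(\widehat g_{t-j}(\theta)-\widetilde g_{t-j}(\theta)\big).
\]
Iterating this linear recursion and using $\mathcal B_{\theta}(z)$ invertibility on $\Theta$ (so the MA$(\infty)$ coefficients decay geometrically, uniformly in $\theta$), one gets $\sup_{\theta\in\Theta}|\widehat g_t(\theta)-\widetilde g_t(\theta)|\le C\sum_{k\ge0}\rho^k\sup_{1\le i\le q}|\Delta_{t-k-i}|$ for some $\rho<1$. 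By part (i), $|\Delta_s|\le C\kappa_T\,u_s^2+C\kappa_T^2 u_s^2\le C\kappa_T u_s^2$ for $T$ large, so $\sup_{\theta}|\widehat g_t(\theta)-\widetilde g_t(\theta)|\le C\kappa_T\sum_{k\ge0}\rho^k u_{t-k-i_k}^2$, and therefore $\sup_\theta|\widetilde S_t(\theta)|\le C\kappa_T\sum_{k\ge0}\rho^k u_{t-k}^2$.

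\textbf{Main obstacle.} Strictly speaking the bound just derived carries a random factor $\sum_{k\ge0}\rho^k u_{t-k}^2$ rather than a deterministic constant; the clean statement $\sup_\theta|\widetilde S_t(\theta)|\le C\kappa_T$ is what gets \emph{used} downstream, and I expect it is meant in the sense that this factor is $O_p(1)$ (its expectation is finite under $Eu_t^2<\infty$) and can be absorbed into later $o_p$/$O_p$ bookkeeping, or alternatively that $C$ is a slowly-varying random variable with all moments of $u_t$ controlled via Assumption~\ref{ass_u}. Pinning down exactly in which mode the ``$C$'' is to be understood—and making sure every subsequent invocation of Lemma~\ref{lem_diff}(ii) in the proofs of Theorems~\ref{thm_garch}--\ref{thm_port} only needs that weaker form—is the delicate part; the recursion-unrolling and geometric-decay estimates themselves are routine. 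I would present the argument with the geometric weights made explicit and then note the $L^1$ (or, under Assumption~\ref{ass_u}, higher-moment) control of the envelope, so that the shorthand bound in the statement is justified in the precise sense needed.
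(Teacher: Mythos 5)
Part (i) of your argument is essentially right: writing $\Delta_t=y_t^2(\widehat\tau_t^{-1}-\tau_t^{-1})$, expanding $\widehat\tau_t^{-1}$ around $\tau_t^{-1}$, and invoking the uniform a.s.\ rate $\sup_t|\widehat\tau_t-\tau_t|=O(\kappa_T)$ from Lemma~\ref{lem_linton} together with Assumption~\ref{ass_tau}(ii) gives exactly the claimed expansion (your remainder has a harmless mis-stated power of $\tau_t$, but it is $O(\kappa_T^2)u_t^2$ either way).

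Part (ii), however, has a genuine gap, and you have identified it yourself without closing it: your bound is $\sup_{\theta}|\widetilde S_t(\theta)|\le C\kappa_T\sum_{k\ge0}\rho^k u_{t-k}^2$, which carries a random, unbounded-in-$t$ envelope, whereas the lemma asserts a bound with a nonrandom constant, and that is the form used downstream (e.g.\ in disposing of $U_4,U_6,U_7,U_8$). Proposing to reinterpret $C$ as an $O_p(1)$ factor weakens the statement rather than proving it. The missing step is a self-normalization: do not divide by the crude lower bound $\widetilde g_t(\theta)\ge\omega(\theta)$ too early. Iterating the two recursions (which share initial values, and $\widehat u_s=u_s$ for $s\le 0$, so the initial terms cancel) gives
\begin{equation*}
\widehat g_t(\theta)-\widetilde g_t(\theta)=\sum_{k\ge0}\sum_{i=1}^{q}c_{k}(\theta)\,\alpha_i\,\Delta_{t-k-i},
\qquad
\widetilde g_t(\theta)\ \ge\ \sum_{k\ge0}\sum_{i=1}^{q}c_{k}(\theta)\,\alpha_i\,u_{t-k-i}^2,
\end{equation*}
with the \emph{same} nonnegative, geometrically decaying coefficients $c_k(\theta)$ (uniformly over the compact $\Theta$) appearing in both displays. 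Since part (i) gives $|\Delta_s|\le C\kappa_T u_s^2$ a.s.\ uniformly in $s$ for large $T$, the numerator is bounded by $C\kappa_T$ times the very sum that lower-bounds $\widetilde g_t(\theta)$, so $\sup_\theta|\widehat g_t(\theta)-\widetilde g_t(\theta)|\le C\kappa_T\,\widetilde g_t(\theta)$, whence
\begin{equation*}
\sup_{\theta\in\Theta}|\widetilde S_t(\theta)|
=\sup_{\theta\in\Theta}\frac{|\widehat g_t(\theta)-\widetilde g_t(\theta)|}{\widehat g_t(\theta)\widetilde g_t(\theta)}
\le \frac{C\kappa_T}{\inf_{\theta\in\Theta}\widehat g_t(\theta)}\le C\kappa_T,
\end{equation*}
using $\widehat g_t(\theta)\ge 1-\sum_i\alpha_i-\sum_j\beta_j$, bounded away from zero on $\Theta$. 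With this one refinement the deterministic bound in the lemma holds a.s.\ uniformly in $t$ for $T$ large, and no reinterpretation of the constant, nor any re-examination of how the lemma is invoked in Theorems~\ref{thm_garch}--\ref{thm_port}, is needed.
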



\begin{lemma}\label{lem_hat}
	Suppose the conditions in Theorem \ref{thm_garch} hold. Then, for any  $\iota\leq 4(1+2\delta)$,
	
$(\mathrm{i})$ ${\displaystyle \sup_{\theta\in\Theta}\Big\|\widehat{g}_t(\theta)-\widetilde{g}_t(\theta)\Big\|_{\iota}\leq C\kappa_T}$;

$(\mathrm{ii})$ ${\displaystyle \sup_{\theta\in\Theta}\Big\|\frac{\partial \widehat{g}_t(\theta)}{\partial\theta}-\frac{\partial\widetilde{g}_t(\theta)}{\partial\theta}\Big\|_{\iota}\leq C\kappa_T}$;

$(\mathrm{iii})$ ${\displaystyle \sup_{\theta\in\Theta}\Big\|\frac{\partial^2 \widehat{g}_t(\theta)}{\partial\theta\partial\theta'}-\frac{\partial^2 \widetilde{g}_t(\theta)}{\partial\theta\partial\theta'}\Big\|_{\iota}\leq  C\kappa_T}$.
\end{lemma}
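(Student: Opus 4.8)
The plan is to turn each of the three claims into a stable linear recursion for the relevant difference and control it in $L^\iota$ by Minkowski's inequality, so that everything reduces to one estimate on $\Delta_t$. For (i), $\widehat g_t(\theta)$ and $\widetilde g_t(\theta)$ satisfy the same recursion (\ref{hatg}): the common intercept $1-\sum_i\alpha_i-\sum_j\beta_j$ cancels and the prescribed initial values agree, so that $\phi_t(\theta):=\widehat g_t(\theta)-\widetilde g_t(\theta)$ obeys the zero-initialized recursion $\phi_t(\theta)=\sum_{i=1}^q\alpha_i\Delta_{t-i}+\sum_{j=1}^p\beta_j\phi_{t-j}(\theta)$. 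Writing $B(\theta)$ for the companion matrix $B$ with its $\beta_j$'s regarded as coordinates of $\theta$, iteration gives $\phi_t(\theta)=\sum_{k\ge 0}\big(B(\theta)^k\big)_{11}\sum_{i=1}^q\alpha_i\Delta_{t-k-i}$. Since $\Theta$ is compact and contained in the stationarity region, $\bar\beta:=\sup_{\theta\in\Theta}\sum_{j=1}^p\beta_j<1$, whence $\sup_{\theta\in\Theta}\big|(B(\theta)^k)_{11}\big|\le C\varpi^k$ for some $\varpi\in(0,1)$; Minkowski's inequality then gives $\sup_{\theta\in\Theta}\|\phi_t(\theta)\|_\iota\le C\sum_{k\ge0}\varpi^k\sup_s\|\Delta_s\|_\iota$, so (i) follows once $\sup_s\|\Delta_s\|_\iota\le C\kappa_T$ is shown.

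For that estimate I would use Lemma \ref{lem_diff}(i), $\Delta_s=\tau_s^{-1}(\tau_s-\widehat\tau_s)u_s^2+O(\kappa_T^2)u_s^2$ uniformly in $s$, and substitute the expansion of Lemma \ref{lem_linton}(i) for $\tau_s-\widehat\tau_s$: the bias $h^2 b(s/T)$ and the remainder of order $\log T/(Th)+o(h^2)$ are $O(\kappa_T)$ deterministically, while the leading stochastic term $\tau_s\,T^{-1}\sum_r K_h\!\big((s-r)/T\big)(u_r^2-1)$ is $O((\log T/(Th))^{1/2})=O(\kappa_T)$ almost surely and uniformly in $s$ by Lemma \ref{lem_linton}(ii). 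The subtlety is that $\tau_s-\widehat\tau_s$ depends on $\{u_r\}_r$, hence on $u_s$, so the two factors of $(\tau_s-\widehat\tau_s)u_s^2$ are not independent; I would control the product by combining the above a.s.\ rate with Hölder's inequality and a moment inequality for the $\beta$-mixing sequence $\{u_r^2-1\}$ (Lemma \ref{lem_mixing}), splitting $u_s^2$ into its bounded and tail parts if needed. It is exactly here that the strengthened moment conditions of Assumptions \ref{ass_u} and \ref{ass_eta}(ii), beyond the classical $Eu_t^4<\infty$ and $E\eta_t^4<\infty$, are used, and they fix the admissible range $\iota\le 4(1+2\delta)$.

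For (ii) and (iii) the same mechanism applies after differentiating (\ref{hatg}). The difference $\partial_{\alpha_i}\widehat g_t-\partial_{\alpha_i}\widetilde g_t$ obeys the companion-matrix recursion of $\phi_t$ with forcing term $\Delta_{t-i}$, and $\partial_{\beta_j}\widehat g_t-\partial_{\beta_j}\widetilde g_t$ with forcing term $\phi_{t-j}(\theta)$; in either case the forcing is $O(\kappa_T)$ in $L^\iota$, directly or by (i), so Minkowski gives (ii). Likewise each block of $\partial^2\widehat g_t/\partial\theta\partial\theta'-\partial^2\widetilde g_t/\partial\theta\partial\theta'$ solves the same recursion whose forcing either vanishes (the $\alpha_i\alpha_k$ block) or equals a first-derivative difference of the type bounded in (ii) (the $\alpha_i\beta_j$ and $\beta_j\beta_l$ blocks), yielding (iii).

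The main obstacle is the uniform bound $\sup_s\|\Delta_s\|_\iota\le C\kappa_T$: the dependence between $\widehat\tau_s$ and $u_s$ blocks a naive factorization, and since $\widehat\tau_t$ converges at the nonparametric rate $\sqrt{Th}$ rather than at the $\sqrt T$ rate of a variance-targeting pilot, one cannot borrow the differentiability-plus-$\sqrt T$-consistency arguments of \citet{FHZ:2011}; the moment inequalities and truncation needed to close this gap are precisely what force Assumptions \ref{ass_u} and \ref{ass_eta}(ii). Everything else — solving the linear recursions, the uniform geometric decay of $\|B(\theta)^k\|$ over the compact $\Theta$, and the repeated use of Minkowski's inequality to bootstrap (i)$\to$(ii)$\to$(iii) — is routine.
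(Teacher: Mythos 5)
Your reduction of the lemma to a single moment bound on $\Delta_t=\widehat u_t^2-u_t^2$ is structurally sound and matches the role the paper assigns to its auxiliary results: with identical initial values, $\phi_t(\theta)=\widehat g_t(\theta)-\widetilde g_t(\theta)$ solves a zero-initialized linear recursion driven by $\sum_{i}\alpha_i\Delta_{t-i}$, the companion matrix decays geometrically uniformly on the compact $\Theta$ since $\sup_{\Theta}\sum_j\beta_j<1$, and Minkowski plus the analogous recursions for the first and second derivatives bootstraps (i) into (ii) and (iii). (The paper's own proof is in the supplement, so I can only confirm that this skeleton is consistent with how Lemmas \ref{lem_linton} and \ref{lem_diff} are set up as inputs.) That part is indeed routine.

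The genuine gap is the step you yourself identify as the main obstacle and then leave as a plan rather than a proof: the uniform bound $\sup_t\|\Delta_t\|_\iota\le C\kappa_T$ for all $\iota\le 4(1+2\delta)$, which is the entire analytic content of the lemma. Two concrete problems with the route you sketch. First, the inputs you invoke are almost-sure rates: Lemma \ref{lem_linton} and the $O(\kappa_T^2)u_t^2$ remainder in Lemma \ref{lem_diff}(i) carry random $O(1)$ constants, and an a.s.\ $O(\kappa_T)$ statement does not transfer to an $L^\iota$ bound at rate $\kappa_T$. What is needed is a genuine moment inequality for the kernel-weighted mixing sum, e.g.\ a bound of the form $\|\sum_s K_h(x-s/T)\tau_s(u_s^2-1)\|_{q}\le C\sqrt{Th}\,\|u_s^2-1\|_{q+\xi_0}$ via Theorem 4.1 of \citet{SY:1996} (the same device the paper uses at (\ref{varpi2})), combined with the purely deterministic $O(h^2+(Th)^{-1})$ bias bound, together with an explicit treatment of the random denominator $\widehat\tau_t$ (for instance, splitting on the event $\{\sup_x|\widehat\tau(x)-\tau(x)|\le\underline{\tau}/2\}$ and bounding the complement's contribution, or bounding the random constant in $L^r$); "combine the a.s.\ rate with H\"older" as stated does not do this. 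Second, the H\"older bookkeeping you defer is exactly where the admissible range $\iota\le 4(1+2\delta)$ must be reconciled with Assumptions \ref{ass_u} and \ref{ass_eta}(ii): any split $\|(\widehat\tau_t-\tau_t)u_t^2\|_\iota\le\|\widehat\tau_t-\tau_t\|_{\iota a}\,\|u_t^2\|_{\iota b}$ costs moments on both factors, and the stochastic part of $\widehat\tau_t-\tau_t$ is itself an average of $u_s^2-1$ terms, so one must verify that the exponents close under $E|u_t|^{4(1+\delta_0)}<\infty$ precisely for the stated range of $\iota$; "splitting $u_s^2$ into bounded and tail parts if needed" is a gesture at this verification, not the verification. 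Until that uniform $L^\iota$ bound on $\Delta_t$ is actually established, parts (i)--(iii) do not follow from your argument.
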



\begin{lemma}\label{lem_initial}
Suppose Assumptions \ref{ident_tau} and \ref{ass_garch}--\ref{ass_eta}  hold. Then, there exists a $\rho\in(0,1)$ such that for any $\iota\leq 4(1+2\delta)$,

$(\mathrm{i})$ ${\displaystyle \sup_{\theta\in\Theta}\Big\|g_t(\theta)-\widetilde{g}_t(\theta)\Big\|_{\iota}\leq C\rho^t}$;

$(\mathrm{ii})$ ${\displaystyle \sup_{\theta\in\Theta}\Big\|\frac{\partial g_t(\theta)}{\partial\theta}-\frac{\partial\widetilde{g}_t(\theta)}{\partial\theta}\Big\|_{\iota}\leq C\rho^t}$;

$(\mathrm{iii})$ ${\displaystyle \sup_{\theta\in\Theta}\Big\|\frac{\partial^2 g_t(\theta)}{\partial^2\theta\partial\theta'}-\frac{\partial^2 \widetilde{g}_t(\theta)}{\partial\theta\partial\theta'}\Big\|_{\iota}\leq C\rho^t}$.
\end{lemma}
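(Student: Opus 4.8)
The starting observation is that $g_t(\theta)$ and $\widetilde g_t(\theta)$ are driven by the \emph{same} innovations $\{u_s\}$ and satisfy the \emph{same} recursion for $t\ge1$, namely $x_t=\omega(\theta)+\sum_{i=1}^{q}\alpha_i u_{t-i}^2+\sum_{j=1}^{p}\beta_j x_{t-j}$ with $\omega(\theta)=1-\sum_i\alpha_i-\sum_j\beta_j$, and differ only in the fixed initial block $x_0,\dots,x_{1-p}$. Hence the difference $e_t(\theta):=g_t(\theta)-\widetilde g_t(\theta)$ obeys the \emph{homogeneous} order-$p$ recursion $e_t(\theta)=\sum_{j=1}^{p}\beta_j e_{t-j}(\theta)$ for $t\ge1$, since every $u$-driven term cancels. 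Stacking $\mathbf E_t(\theta)=(e_t(\theta),\dots,e_{t-p+1}(\theta))'$ and using the companion matrix $B=B(\theta)$ of the present notation, this reads $\mathbf E_t(\theta)=B(\theta)\mathbf E_{t-1}(\theta)$, so $\mathbf E_t(\theta)=B(\theta)^{t}\mathbf E_0(\theta)$.

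Two ingredients then yield $(\mathrm i)$. First, a uniform geometric bound $\sup_{\theta\in\Theta}\|B(\theta)^{t}\|\le C\rho_0^{\,t}$ for some $\rho_0\in(0,1)$: the eigenvalues of $B(\theta)$ are the reciprocals of the roots of $\mathcal B_\theta(z)=1-\sum_j\beta_j z^j$, which lie strictly outside the unit disk for every $\theta\in\Theta$ (the $\beta_j$ are nonnegative and $\sup_{\theta\in\Theta}\sum_j\beta_j<1$), so the spectral radius $\varrho(B(\theta))$ is continuous in $\theta$ and, by compactness of $\Theta$, bounded by some $\rho_*<1$; a standard Gelfand-formula-and-covering argument then gives the bound with any $\rho_0\in(\rho_*,1)$. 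Second, $\sup_{\theta\in\Theta}\|\mathbf E_0(\theta)\|$ lies in $L^\iota$ for $\iota$ in the stated range: from the geometrically convergent representation $g_t(\theta)=\sum_{k\ge0}c_k(\theta)\big(\omega(\theta)+\sum_i\alpha_i u_{t-i-k}^2\big)$ with $\sup_\theta|c_k(\theta)|\le C\rho_0^{\,k}$ one gets $\sup_\theta|g_{1-j}(\theta)|\le C\big(1+\sum_k\rho_0^{\,k}\sum_i u_{1-j-i-k}^2\big)$, whose $L^\iota$-norm is finite by Assumption \ref{ass_u}, while the constants $g_{1-j}$ contribute a deterministic finite amount. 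Combining, $\big\|\sup_{\theta\in\Theta}|e_t(\theta)|\big\|_\iota\le C\rho_0^{\,t}\big\|\sup_\theta\|\mathbf E_0(\theta)\|\big\|_\iota\le C\rho_0^{\,t}$, which with any $\rho\in(\rho_0,1)$ absorbing the polynomial-in-$t$ factors below proves $(\mathrm i)$.

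For $(\mathrm{ii})$ and $(\mathrm{iii})$ I would differentiate the two defining recursions in $\theta$ and subtract. The $\alpha_k$-derivative has forcing $-1+u_{t-k}^2$, common to both processes, so $\partial e_t/\partial\alpha_k$ again satisfies the homogeneous recursion and the argument of $(\mathrm i)$ applies verbatim (the initial block $\partial\mathbf E_0/\partial\alpha_k$ is in $L^\iota$ by differentiating the same representation). The $\beta_k$-derivative picks up the extra term $g_{t-k}(\theta)$, so $\partial e_t/\partial\beta_k=e_{t-k}(\theta)+\sum_j\beta_j\,\partial e_{t-j}/\partial\beta_k$; in companion form $\mathbf D_t^{(k)}=B^{(k)}\mathbf E_{t-1}(\theta)+B(\theta)\mathbf D_{t-1}^{(k)}$, whence $\mathbf D_t^{(k)}=B(\theta)^{t}\mathbf D_0^{(k)}+\sum_{s=1}^{t}B(\theta)^{t-s}B^{(k)}\mathbf E_{s-1}(\theta)$. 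Taking $L^\iota$-norms and invoking $(\mathrm i)$ and the uniform bound on $\|B(\theta)^t\|$, the first term is $O(\rho_0^{\,t})$ and the second is a convolution of two geometric sequences, hence $O(t\rho_0^{\,t})=O(\rho^{\,t})$, giving $(\mathrm{ii})$. Differentiating once more keeps the recursions \emph{linear}, with forcing terms that are finite linear combinations of the already-controlled quantities $e_t(\theta)$ and $\partial e_t/\partial\theta$ (the recursion being at most bilinear in $(\beta,g)$, so no products of large random quantities arise), and a further variation-of-constants estimate, now an induction on the order of differentiation, delivers $\sup_\theta\big\|\partial^2 e_t(\theta)/\partial\theta\partial\theta'\big\|_\iota\le C\rho^{\,t}$. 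Throughout, $\sup_{\theta\in\Theta}$ may be passed inside the $L^\iota$-norm since all terms are finite continuous functions of $\theta$ on the compact set $\Theta$ and all constants are uniform in $\theta$; taking $\rho$ to be the largest of the three rates finishes the argument.

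The routine-but-load-bearing step is the uniform geometric bound $\sup_{\theta\in\Theta}\|B(\theta)^{t}\|\le C\rho_0^{\,t}$, which is exactly where compactness of $\Theta$ and the location of the roots of $\mathcal B_\theta$ enter, together with propagating this bound through the inhomogeneous $\beta$-derivative recursions while keeping the polynomial-in-$t$ factors under control and verifying that Assumption \ref{ass_u} supplies every moment used, so that the conclusion holds for all $\iota$ in the stated range.
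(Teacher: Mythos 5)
Your proof is correct and follows essentially the same route as the paper's argument (given in the supplementary material): the initial-value effect satisfies the homogeneous recursion in the $\beta_j$'s, powers of the companion matrix are uniformly geometrically small on the compact $\Theta$ because $\sup_{\theta\in\Theta}\sum_{j=1}^{p}\beta_j<1$, the initial block has the required $L^{\iota}$ moments via the geometric representation of $g_t(\theta)$, and the first and second derivative differences are handled by variation of constants, with the polynomial-in-$t$ factors absorbed into a slightly larger $\rho$. The only point to tidy is that for $1\le t\le q$ the $u$-driven terms do not cancel (the pre-sample $u$'s entering $\widetilde g_t(\theta)$ are fixed constants while $g_t(\theta)$ uses the stationary past), so the homogeneous recursion only holds for $t>\max(p,q)$ and the initial block should be taken at that time --- a trivial adjustment that your $L^{\iota}$ bound on the initial block already accommodates.
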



\begin{lemma}\label{lem_mixing}
	Suppose Assumptions  \ref{ass_ut}--\ref{ass_garch} and \ref{ass_eta}(i) hold. Then, $\big\{(u_t,g_t,\frac{\partial g_t(\theta_0)}{\partial\theta'})\big\}$ is strictly stationary and $\beta$-mixing with exponential decay.
\end{lemma}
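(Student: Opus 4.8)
The plan is to realize the triple $(u_t,g_t,\partial g_t(\theta_0)/\partial\theta')$ as a fixed Borel function of a Markov chain of random-coefficient autoregressive type, and then to invoke the standard geometric-ergodicity theory for such chains; note that $\tau_t$ does not enter the statement, so this is purely a fact about the stationary GARCH process (\ref{garch_model}) and its score. Write $\dot g_t=\partial g_t(\theta_0)/\partial\theta$. Differentiating (\ref{ideal_g}) at $\theta_0$ shows that each component of $\dot g_t$ obeys a linear recursion in $\dot g_{t-1},\dots,\dot g_{t-p}$ with coefficients $\beta_{j0}$ and an input that is affine in $(u_{t-k}^2)_{1\le k\le q}$ and $(g_{t-k})_{1\le k\le p}$; combined with $g_t=\omega_0+\sum_i\alpha_{i0}u_{t-i}^2+\sum_j\beta_{j0}g_{t-j}$ and $u_t^2=g_t\eta_t^2$, this means the vector $X_t$ stacking $\eta_t$ and the $m=\max(p,q)$ most recent values of $u_t^2$, $g_t$ and $\dot g_t$ satisfies $X_t=F(X_{t-1},\eta_t)$ for a fixed measurable map $F$ (the randomness entering only through $\eta_t$). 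Since $\{\eta_t\}$ is i.i.d., $\{X_t\}$ is a homogeneous Markov chain, and $(u_t,g_t,\dot g_t)=(\sqrt{g_t}\,\eta_t,\,g_t,\,\dot g_t)$ is a coordinatewise Borel image of $X_t$.

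I would then prove that $\{X_t\}$ is geometrically ergodic, following the GARCH analysis in \citet{CC:2002}. The recursion is block triangular: the block for $(u_t^2,g_t)$ is the usual GARCH companion form, whose top Lyapunov exponent is strictly negative because the covariance stationarity of $u_t$ built into model (\ref{garch_model}) — guaranteed here by $Eu_t^4<\infty$ of Assumption \ref{ass_ut} — forces $\sum_i\alpha_{i0}+\sum_j\beta_{j0}<1$; the blocks for $\dot g_t$ are deterministic with companion matrix $B$, whose spectral radius is $<1$ since the roots of $\mathcal{B}_{\theta_0}(z)$ lie outside the closed unit disk (again because $\sum_j\beta_{j0}<1$). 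Hence a geometric Foster--Lyapunov drift holds for a test function $V(x)=1+\|x\|^{s}$ with $s>0$ small, while Assumption \ref{ass_eta}(i) (continuous, almost surely positive density of $\eta_t$) together with the non-redundancy conditions in Assumption \ref{ass_garch} deliver $\psi$-irreducibility, aperiodicity and the small-set property. Therefore $\{X_t\}$ is geometrically ergodic with a unique invariant law $\pi$; its stationary version ($X_t\sim\pi$) is strictly stationary and, by the standard implication, $\beta$-mixing with exponentially decaying coefficients. Since a measurable image of a $\beta$-mixing sequence is again $\beta$-mixing with coefficients no larger than the original, $\{(u_t,g_t,\partial g_t(\theta_0)/\partial\theta')\}$ inherits strict stationarity and exponential $\beta$-mixing, which is the claim.

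The main obstacle is the irreducibility step rather than the drift bound. Because $g_t$ and $\dot g_t$ are \emph{deterministic} functionals of past innovations, $\{X_t\}$ is degenerate — it is supported on a lower-dimensional subset of Euclidean space — so Lebesgue measure cannot be used as an irreducibility measure. As in \citet{CC:2002} one must instead build a suitable maximal irreducibility measure carried by that set and verify a forward-accessibility (controllability) property: the chain can reach every relatively open subset of its support. Assumption \ref{ass_garch} is precisely what rules out the common-factor and $\alpha_{q0}+\beta_{p0}=0$ degeneracies that would obstruct this. Adjoining the score block $\dot g_t$ enlarges the degenerate part but leaves the contraction mechanism unchanged (still governed by $B$ with $\rho(B)<1$), so the same construction applies; in the write-up I would verify the hypotheses of the theorems in \citet{CC:2002} rather than reprove minorization and drift from scratch.
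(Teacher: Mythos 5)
Your plan---realize $(u_t,g_t,\partial g_t(\theta_0)/\partial\theta')$ as an instantaneous measurable function of an enlarged Markov chain driven by the i.i.d.\ $\eta_t$, obtain geometric ergodicity from a Foster--Lyapunov drift (using $\sum_i\alpha_{i0}+\sum_j\beta_{j0}<1$ and $\rho(B)<1$) combined with the irreducibility/small-set theory for degenerate random-coefficient autoregressions, and then let the triple inherit strict stationarity and exponential $\beta$-mixing---is correct and is essentially the same route the paper follows, which likewise rests on the mixing machinery of \citet{CC:2002} (the detailed verification being relegated to the supplement \citet{JLZ:2019}). The only step you defer, $\psi$-irreducibility and aperiodicity of the \emph{augmented} chain containing the deterministic score block, is indeed the crux, and your appeal to the forward-accessibility construction in \citet{CC:2002} is the appropriate way to close it.
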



\textsc{Proof of Theorem \ref{thm_kernel}.} See the supplementary material in \citet{JLZ:2019}.
\qed

\vspace{4mm}

\textsc{Proof of Theorem \ref{thm_garch}($\mathrm{i}$).}  See the supplementary material in \citet{JLZ:2019}.

\vspace{4mm}

In order to prove Theorem \ref{thm_garch}($\mathrm{ii}$), we need a crucial proposition, which
is interesting in its own right.

\begin{pro}\label{keypro}
	Let $\{c_t\}_{t\in\mathbb{Z}}$ be a sequence of stationary process and  $\mathcal{F}_{t}^{s}=\sigma(c_i,t\leq i\leq s)$ be the sigma-filed generated by  $\{c_i,t\leq i\leq s\}$. Define
$$S_{T}=\frac{1}{\sqrt{T}}\sum_{t=1}^{T}b_t\Big\{\frac{1}{Th}\sum_{s=1}^{T}K\Big(\frac{s-t}{Th}\Big)a_s\Big\},$$
where $a_t=f(c_t)$, $b_t=g(c_t,c_{t-k})$ for some $k\leq n_T$, and $f(\cdot)$ and $g(\cdot,\cdot)$ are two real-valued functions.
Suppose the following conditions hold:
	
	$(1)$ $Ea_t=0$, $Eb_t=0$, $E|a_t|^{\iota_1(1+2\delta)}<\infty$ and $E|b_t|^{\iota_2(1+2\delta)}<\infty$, where $\iota_1,\iota_2>0$ satisfy $\iota_1^{-1}+\iota_{2}^{-1}=1/2$ and  $\delta>0$;
	
	$(2)$ $c_t$ is $\beta$-mixing with mixing coefficients $\beta(j)$ satisfying $\sum_{j=1}^{\infty}\beta(j)^{\delta/(1+\delta)}<\infty$;
	
	$(3)$ $K(\cdot)$ satisfies Assumption \ref{ass_kernel} and $h$ satisfies Assumption \ref{ass_bandwidth};	
	
	$(4)$ $n_T$ is either a constant or $n_T\to \infty$ and $n_T=o(\sqrt{Th^2})$ as $T\to \infty$.
	
\noindent Then,
	\begin{flalign*}
	(\mathrm{i})~ |ES_t|\leq \frac{Cn_T}{\sqrt{T}h} \quad\text{and}\quad (\mathrm{ii})~ES_T^2\leq C\max\Big\{\frac{n_T}{\sqrt{Th}},\frac{1}{Th^2}\Big\}.
	\end{flalign*}
\end{pro}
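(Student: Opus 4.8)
The plan is to prove the two bounds separately. In both cases I would expand $S_T$ (or $S_T^2$) as a multiple sum over the time indices, use stationarity together with $Ea_t=Eb_t=0$ to replace the expectations of products by (auto)covariances, and control those covariances by a Davydov-type covariance inequality for $\beta$-mixing sequences combined with the moment bounds in condition $(1)$. Throughout I will use that the kernel weights $w_{t,s}:=\frac{1}{Th}K\big(\frac{s-t}{Th}\big)$ satisfy $|w_{t,s}|\le C/(Th)$, vanish unless $|s-t|\le Th$, and obey $\sum_s|w_{t,s}|\le C$ uniformly in $t$ (all from Assumption~\ref{ass_kernel}); and that, since $n_T=o(\sqrt{Th^2})$, the internal lag $k\le n_T$ is negligible compared with the bandwidth $Th$.

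For part $(\mathrm{i})$: by stationarity $E(b_ta_s)=\mathrm{Cov}(b_t,a_s)$ is a function $\gamma(s-t)$ of the lag alone. I would split the lag $j=s-t$ into the ``near'' band $-k\le j\le 0$ and its complement. On the complement, $a_s$ and $b_t=g(c_t,c_{t-k})$ are separated by a gap of size $\mathrm{dist}(j,\{-k,0\})$; applying Davydov's inequality with Hölder exponents $p=\iota_2(1+2\delta)$, $q=\iota_1(1+2\delta)$ — so that $1/p+1/q=1/(2(1+2\delta))$ and the residual exponent is $1/r=(1+4\delta)/(2(1+2\delta))$, which exceeds $\delta/(1+\delta)$, hence $\sum_m\beta(m)^{1/r}<\infty$ by condition $(2)$ — gives $|\gamma(j)|\le C\beta(\mathrm{dist}(j,\{-k,0\}))^{1/r}$, summable over the complement. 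On the near band there are at most $n_T+1$ lags, each with $|\gamma(j)|\le\|b_t\|_2\|a_s\|_2\le C$ (note $\iota_1,\iota_2>2$ forces finite second moments). Hence $\sum_s|w_{t,s}|\,|E(b_ta_s)|\le C(n_T+1)/(Th)+C/(Th)$, and summing over the $T$ indices $t$ with the prefactor $T^{-1/2}$ yields $|ES_T|\le Cn_T/(\sqrt Th)$.

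For part $(\mathrm{ii})$: write $ES_T^2=\frac1T\sum_{t,t'}\sum_{s,s'}w_{t,s}w_{t',s'}E(b_tb_{t'}a_sa_{s'})$. The core step is a fourth-order mixing estimate: listing the at most six time points $\{t-k,t,t'-k,t',s,s'\}$ in increasing order, locating the largest gap $G$, splitting the expectation into the product of the expectations over the two sides of $G$ plus a remainder of order $\beta(G)^{\epsilon}$ (with $\epsilon>0$ determined by the moments in $(1)$, the remainder bounded via a coupling/blocking argument to accommodate the unbounded $b_t,a_s$), then iterating on each side. Because $Ea_s=Eb_t=0$, the only non-vanishing factorizations are $\{b_tb_{t'}\}\{a_sa_{s'}\}$, the two cross pairings $\{b_ta_s\}\{b_{t'}a_{s'}\}$ and $\{b_ta_{s'}\}\{b_{t'}a_s\}$, and the ``no admissible split'' term in which all six points lie in a window of length $O(1)+k$. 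For each surviving configuration I would then perform the weighted sum over the index pattern, using that the pair-covariances $|E(b_tb_{t'})|$, $|E(a_sa_{s'})|$, $|E(b_ta_s)|$ decay summably in the separations of their arguments (again Davydov, and the ``sandwich'' covariance $E(b_ta_s)$ for $t-k\le s\le t$ bounded by $C$ on an $O(n_T)$-long band), summing the $\beta(G)^{\epsilon}$ factors over the gaps, and bounding the remaining free sums by $|w_{t,s}|\le C/(Th)$, $w_{t,s}=0$ for $|s-t|>Th$, and the cardinalities of the bands. Collecting terms, the cross-pairing contributions produce the order $1/(Th^2)$ (essentially $(ES_T)^2$ and its $k$-shifted analogue), configurations forced by the internal lag into an $O(n_T)$-band produce the order $n_T/\sqrt{Th}$, and all others are of smaller order, giving $ES_T^2\le C\max\{n_T/\sqrt{Th},\,1/(Th^2)\}$.

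I expect part $(\mathrm{ii})$ to be the main obstacle. Two points require care: making the $\beta(G)^{\epsilon}$ remainders in the fourth-order decomposition rigorous for unbounded functions (choosing $\epsilon$ so that Hölder against the available moments closes, and checking $\sum_m\beta(m)^{\epsilon}<\infty$ follows from condition $(2)$), and the bookkeeping of the finitely many cluster patterns of $(t,t',s,s')$ — in particular tracking the ``sandwich'' separations created by the internal lags $k$ inside $b_t$ and $b_{t'}$, which cannot be used as mixing gaps, and verifying that each such pattern contributes at most one power of $n_T$ so that, after dividing by $T(Th)^2$, the stated rate (and not a worse one) emerges.
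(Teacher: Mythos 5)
Your part (i) is sound and is in substance the paper's own argument: the paper splits $S_T$ according to whether $s$ lies in the short band $[t-n_T,t]$ or outside it ($S_{T,1},\dots,S_{T,4}$), applies Davydov's inequality off the band and crude moment bounds on the $O(n_T)$ band, exactly as you do; your Hölder-exponent bookkeeping ($1-1/p-1/q=\tfrac{1+4\delta}{2(1+2\delta)}>\tfrac{\delta}{1+\delta}$) closes, so this half needs no changes.

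The genuine gap is in part (ii). Your plan controls the non-factorizable part of $E(b_tb_{t'}a_sa_{s'})$ by a largest-gap bound $C\beta(G)^{\epsilon}$ and then sums over index configurations. But condition (2) gives only summability of $\beta(j)^{\delta/(1+\delta)}$, with no rate, while the best exponent your moments allow is $\epsilon=\tfrac{2\delta}{1+2\delta}$, and the number of quadruples $(t,t',s,s')$ inside the kernel windows whose largest separating gap is of order $m$ grows like $T$ times a polynomial in $m+n_T$. After the $T^{-3}h^{-2}$ normalization, your remainder is therefore bounded only by quantities of the type $T^{-2}h^{-2}\sum_{m\lesssim Th}(m+n_T)^{2}\beta(m)^{\epsilon}$, and under (2) this can be as large as $(Th)^{-1/(1+2\delta)}$ (take $\beta(m)^{\delta/(1+\delta)}\asymp m^{-1}(\log m)^{-2}$), which for $\delta>1/2$ and admissible $\lambda_h$ exceeds the polynomially small target $\max\{n_T/\sqrt{Th},1/(Th^2)\}$. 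In other words, the fourth-order pairing/cumulant route would need a genuine mixing rate (enough that $\sum_m m^2\beta(m)^{\epsilon}<\infty$), which is not assumed; it cannot reach the stated bound under the stated hypotheses. The ingredient you are missing is square-root cancellation in the kernel-weighted partial sums themselves: the paper orders $t<t'$, splits $\varpi_t=\sum_{s>t}K((t-s)/Th)a_s$ into blocks adapted to the filtration, bounds each block in $L^{\iota}$ by $C\sqrt{Th}$ via the Rosenthal-type inequality of Theorem 4.1 in \citet{SY:1996} (display (\ref{varpi2})), and then applies Davydov only to covariances between whole blocks such as $b_t\varpi_{1t}b_{t'}$ against $a_s$ (Lemmas \ref{lem_key1}--\ref{lem_key3}); the summable mixing series absorbs one free index, the $\sqrt{Th}$ moment bound absorbs the kernel window, and this is what produces the rates $1/\sqrt{Th}$ and $n_T/\sqrt{Th}$ without any $\sum_m m^2\beta(m)^{\epsilon}<\infty$ requirement.

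A secondary inconsistency: you assert the cross-pairing terms contribute $O(1/(Th^2))$ as ``essentially $(ES_T)^2$'', but with your own treatment of the sandwich band ($E(b_ta_s)$ merely bounded for $t-k\le s\le t$) they are only $O(n_T^2/(Th^2))$; getting $1/(Th^2)$ would require proving decay of $E(b_ta_{t-j})$ in $\min(j,k-j)$, a coupling argument you have not supplied. This term is of the same size as the paper's own bound for its band term $ES_{T,3}^2$, so it is a bookkeeping slip in your claimed orders rather than the main obstruction; the main obstruction is the remainder control described above.
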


\begin{proof} We decompose $S_T=S_{T,1}+S_{T,2}+S_{T,3}+S_{T,4}$, where
\begin{flalign*}
S_{T,1}=&\frac{1}{T^{3/2}h}\sum_{t=1}^{T-1}b_t\sum_{s=t+1}^{T}K\Big(\frac{s-t}{Th}\Big)a_s,\quad	 S_{T,2}=\frac{1}{T^{3/2}h}\sum_{t=n_T+1}^{T}b_t\sum_{s=1}^{t-n_T}K\Big(\frac{s-t}{Th}\Big)a_s,\\
S_{T,3}=&\frac{1}{T^{3/2}h}\sum_{t=n_T+1}^{T-1}b_t\sum_{s=t-n_T+1}^{t}K\Big(\frac{s-t}{Th}\Big)a_s,\quad
S_{T,4}=\frac{1}{T^{3/2}h}\sum_{t=1}^{n_T}b_t\sum_{s=1}^{t}K\Big(\frac{s-t}{Th}\Big)a_s.
\end{flalign*}

$(\mathrm{i})$
Under Condition (1), we have that $\iota_1>2$ and $\iota_2>2$,
which indicate $\|a_t\|_{2(1+2\delta)}<\infty$ and  $\|b_t\|_{2(1+2\delta)}<\infty$.
Since $b_t\in\mathcal{F}_{t-n_T}^{t}$, by Conditions (1)--(3) and Davydov's inequality (see \citet{Davydov:1968}), we have
\begin{flalign*}
|ES_{T,1}|\leq& \frac{C}{T^{3/2}h}\sum_{t=1}^{T-1}\sum_{s=t+1}^{T}|E(b_ta_s)|
\\\leq& \frac{C}{T^{3/2}h}\sum_{t=1}^{T-1}\sum_{s=t+1}^{T}\beta(s-t)^{\delta/(1+\delta)}\|b_t\|_{2(1+\delta)}\|a_s\|_{2(1+\delta)}=\frac{C}{T^{1/2}h}.
\end{flalign*}
Similarly, we can show that $|ES_{T,2}|\leq\frac{C}{T^{1/2}h}$. The result holds by noticing that
\begin{flalign*}
&E|S_{T,3}|\leq\frac{C}{T^{3/2}h}\sum_{t=n_T+1}^{T-1}\sum_{s=t-n_T+1}^{t}E|b_ta_s|
\leq\frac{Cn_T}{T^{1/2}h},\\
&E|S_{T,4}|\leq\frac{C}{T^{3/2}h}\sum_{t=1}^{n_T}\sum_{s=1}^{t}E|b_ta_s|
\leq\frac{Cn_T^2}{T^{3/2}h}.
\end{flalign*}

$(\mathrm{ii})$
It is not hard to obtain that $ES_{T,3}^2\leq \frac{Cn_T^2}{Th^2}$ and $ES_{T,4}^2\leq \frac{Cn_T^4}{T^3h^2}$.  Below, we only prove that $ES_{T,1}^2\leq\frac{Cn_T}{\sqrt{Th}}+\frac{C}{Th^2}$, since we can similarly show that $ES_{T,2}^2\leq\frac{C}{\sqrt{Th}}+\frac{C}{Th^2}$.

Let $\varpi_t=\sum_{s=t+1}^{T}K(\frac{t-s}{Th})a_s$. Then,
$ES_{T,1}^2=\frac{1}{T^3h^2}\sum_{t=1}^{T}\sum_{t'=1}^{T}
Eb_tb_{t'}\varpi_t\varpi_{t'}:=V_{T,1}+V_{T,2}+V_{T,3}$,
where
\begin{flalign*}
V_{T,1}=&\frac{1}{T^3h^2}\sum_{t=1}^{T}\sum_{t'=t+1}^{T}
Eb_tb_{t'}\varpi_t\varpi_{t'},\quad
V_{T,2}=\frac{1}{T^3h^2}\sum_{t=1}^{T}
E(b_t\varpi_t)^2,\\	
V_{T,3}=&\frac{1}{T^3h^2}\sum_{t'=1}^{T}\sum_{t=t'+1}^{T}
Eb_tb_{t'}\varpi_t\varpi_{t'}.
\end{flalign*}
For simplicity, we only show that $V_{T,1}\leq \frac{Cn_T}{\sqrt{Th}}+\frac{C}{Th^2}$.
Let
\begin{flalign*}
\varpi_{1t}=&\sum_{s=t+1}^{t'}K\Big(\frac{t-s}{Th}\Big)a_s\in\mathcal{F}_{t+1}^{\min\{t'-1,t+[Th]\}},\quad\varpi_{2t}=\sum_{s=t'+1}^{T}K\Big(\frac{t-s}{Th}\Big)a_s\in\mathcal{F}_{t'}^{\infty},\\
\varpi_{1t'}=&\sum_{s=t'+1}^{t'+[Th]-1}K\Big(\frac{t'-s}{Th}\Big)a_s\in\mathcal{F}_{t'}^{t'+[Th]-1},\quad\varpi_{2t'}=\sum_{s=t'+[Th]}^{T}K\Big(\frac{t'-s}{Th}\Big)a_s\in\mathcal{F}_{t'+[Th]}^{\infty}.
\end{flalign*}
Here, we have used the fact that $K(\frac{t-s}{Th})=0$ if $|t-s|>[Th]$ by Condition (3).
Moreover, decompose $V_{T,1}=V_{T,11}+V_{T,12}+V_{T,13}$,
where
\begin{flalign*}
V_{T,11}=&\frac{1}{T^3h^2}\sum_{t=1}^{T}\sum_{t'=t+1}^{T}Eb_t\varpi_{1t}b_{t'}\varpi_{1t'},\quad
V_{T,12}=\frac{1}{T^3h^2}\sum_{t=1}^{T}\sum_{t'=t+1}^{T}Eb_t\varpi_{1t}b_{t'}\varpi_{2t'},\\
V_{T,13}=&\frac{1}{T^3h^2}\sum_{t=1}^{T}\sum_{t'=t+1}^{T}Eb_t\varpi_{2t}b_{t'}\varpi_{t'}.
\end{flalign*}
Using Lemmas \ref{lem_key1}--\ref{lem_key3} below, it follows that $V_{T,1}\leq C\max\big\{\frac{n_T}{\sqrt{Th}},\frac{1}{Th^2}\big\}$.
\end{proof}


\begin{lemma}\label{lem_key1}
	Under the conditions in Proposition \ref{keypro}, $V_{T,11}\leq C\max\big\{\frac{1}{\sqrt{Th}},\frac{1}{Th^2}\big\}$.
\end{lemma}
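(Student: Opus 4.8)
The plan is to bound $V_{T,11}$ by splitting the outer sum over $(t,t')$ at the threshold $t'-t=[Th]+n_T+1$ and treating the ``near'' and ``far'' ranges by different mechanisms; throughout I write $\gamma_j:=\beta(j)^{\delta/(1+\delta)}$, so $\sum_{j\ge1}\gamma_j<\infty$ by Condition~(2) of Proposition~\ref{keypro}, and I use $Ea_t=Eb_t=0$ and $b_t\in\mathcal F_{t-n_T}^{t}$. The one preliminary input I need is a moment bound for weighted partial sums: since $\varpi_{1t}$ is a bounded-weight sum of at most $[Th]$ terms of the $\beta$-mixing sequence $\{a_s\}$, a standard Rosenthal/Yokoyama-type moment inequality (licensed by the $\iota_1(1+2\delta)$-moments and the mixing rate) gives $\|\varpi_{1t}\|_{2(1+\delta)}\le C(Th)^{1/2}$, and then, by H\"older with the $\iota_2(1+2\delta)$-moments of $b_t$, also $\|b_t\varpi_{1t}b_{t'}\|_{2(1+\delta)}\le C(Th)^{1/2}$.

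In the near range $t+1\le t'\le t+[Th]+n_T+1$ I would use the observation that $W:=b_t\varpi_{1t}b_{t'}$ depends only on $(c_i)_{i\le t'}$, whereas $\varpi_{1t'}=\sum_{s'=t'+1}^{t'+[Th]-1}K\!\big(\tfrac{t'-s'}{Th}\big)a_{s'}$ is built only from coordinates strictly to the right of $t'$ and has mean zero. Hence $E[b_t\varpi_{1t}b_{t'}\varpi_{1t'}]=\sum_{s'}K\!\big(\tfrac{t'-s'}{Th}\big)\mathrm{Cov}(W,a_{s'})$, and Davydov's inequality (\citet{Davydov:1968}) bounds each covariance by $C\gamma_{s'-t'}\|W\|_{2(1+\delta)}\|a_{s'}\|_{2(1+\delta)}$; since $\sum_{j\ge1}\gamma_j<\infty$, this yields $|E[b_t\varpi_{1t}b_{t'}\varpi_{1t'}]|\le C(Th)^{1/2}$ uniformly over all near pairs. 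As there are at most $T([Th]+n_T+1)$ such pairs and $n_T=o(Th)$ by Conditions~(3)--(4), the near range contributes at most $\dfrac{C}{T^3h^2}\,T[Th]\,(Th)^{1/2}=\dfrac{C(Th)^{3/2}}{(Th)^2}=\dfrac{C}{\sqrt{Th}}$.

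In the far range $t'-t>[Th]+n_T+1$ the truncation in $\varpi_{1t}$ is inactive, so $b_t\varpi_{1t}$ depends on $(c_i)_{t-n_T\le i\le t+[Th]}$ and $b_{t'}\varpi_{1t'}$ on $(c_i)_{t'-n_T\le i\le t'+[Th]-1}$, two index blocks separated by a gap $(t'-t)-[Th]-n_T\ge1$. Writing $E[b_t\varpi_{1t}b_{t'}\varpi_{1t'}]=\mathrm{Cov}(b_t\varpi_{1t},b_{t'}\varpi_{1t'})+E[b_t\varpi_{1t}]\,E[b_{t'}\varpi_{1t'}]$, Davydov plus the moment bound give $|\mathrm{Cov}(b_t\varpi_{1t},b_{t'}\varpi_{1t'})|\le C(Th)\gamma_{(t'-t)-[Th]-n_T}$, whose sum over the far pairs is $\dfrac{C}{T^3h^2}\,T(Th)\sum_{j\ge1}\gamma_j=O\!\big(\tfrac1{Th}\big)$; and $|E[b_t\varpi_{1t}]|\le\sum_s|K|\,|E(b_ta_s)|\le C\sum_{j\ge1}\gamma_j=O(1)$ by Davydov (using $Ea_s=0$), and likewise $|E[b_{t'}\varpi_{1t'}]|=O(1)$, so the product term, spread over at most $T^2/2$ pairs, is $O\!\big(\tfrac1{Th^2}\big)$. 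Adding the three pieces and using $\tfrac1{Th}\le\tfrac1{\sqrt{Th}}$ gives $V_{T,11}\le C\max\{\tfrac1{\sqrt{Th}},\tfrac1{Th^2}\}$.

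The step I expect to be the main obstacle is the uniform near-range bound $|E[b_t\varpi_{1t}b_{t'}\varpi_{1t'}]|\le C(Th)^{1/2}$: it rests on the centering $E\varpi_{1t'}=0$ and on the $(Th)^{1/2}$-scaling of $\|b_t\varpi_{1t}b_{t'}\|_{2(1+\delta)}$, and the H\"older and moment-inequality bookkeeping must be done with care --- the surplus $2\delta$ in the moment exponents of Condition~(1) is exactly what makes the order $2(1+\delta)$ admissible. The threshold $[Th]+n_T+1$ is also chosen deliberately: it is the smallest gap that still forces genuine mixing decay in the far range, while keeping the number of near pairs at $O(T\cdot Th)$, which is what balances the crude $(Th)^{1/2}$ bound against the $T^3h^2$ normalization.
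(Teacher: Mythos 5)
Your overall architecture is essentially the paper's: a near/far split of the $(t,t')$ sum, a Shao--Yu type moment bound $\|\varpi_{1t}\|\le C\sqrt{Th}$, Davydov for the far-range covariance, and an $O(1/(Th^2))$ estimate for the product-of-means term (the paper splits at $t'-t=2[Th]$ into $V_{T,111}$, $V_{T,112}$, $V_{T,113}$; your threshold $[Th]+n_T+1$ and your block-level treatment of the far covariance are harmless variants, and your far-range and mean-product estimates do go through, since $\|b_t\varpi_{1t}\|_{2(1+\delta)}\le\|b_t\|_{\iota_2(1+\delta)}\|\varpi_{1t}\|_{\iota_1(1+\delta)}$ closes exactly because $\frac{1}{\iota_2(1+\delta)}+\frac{1}{\iota_1(1+\delta)}=\frac{1}{2(1+\delta)}$). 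The genuine gap is in the near range, precisely the step you flagged: the claimed bound $\|b_t\varpi_{1t}b_{t'}\|_{2(1+\delta)}\le C(Th)^{1/2}$ does not follow from Condition (1). A generalized H\"older bound of this three-factor product requires $\frac{2}{\iota_2(1+2\delta)}+\frac{1}{p_3}\le\frac{1}{2(1+\delta)}$ with $p_3$ strictly below $\iota_1(1+2\delta)$ (the Rosenthal/Shao--Yu inequality needs a moment surplus for $\varpi_{1t}$); using $\frac{1}{\iota_1}+\frac{1}{\iota_2}=\frac12$, even the most generous choice $p_3=\iota_1(1+2\delta)$ forces $\frac{1}{\iota_2}\le\frac{\delta}{2(1+\delta)}$, i.e. $\iota_2\ge 2(1+\delta)/\delta$ (equivalently $\iota_1\le 2(1+\delta)$), which Condition (1) does not supply --- it fails, for instance, for $\iota_1=\iota_2=4$ and small $\delta$. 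The $\delta$-versus-$2\delta$ surplus is enough for the two-factor norm you use in the far range, but not for the three-factor product at order $2(1+\delta)$, so the uniform near-range bound, and hence the $C/\sqrt{Th}$ tally built on it, is not justified as written.

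The repair is local and is exactly what the paper does for $V_{T,111}$: keep your centering observation $E[b_t\varpi_{1t}b_{t'}\varpi_{1t'}]=\sum_{s'}K(\frac{t'-s'}{Th})\mathrm{Cov}(b_t\varpi_{1t}b_{t'},a_{s'})$, but apply Davydov with the asymmetric exponent pair, $|\mathrm{Cov}(b_t\varpi_{1t}b_{t'},a_{s'})|\le C\beta(s'-t')^{\delta/(1+\delta)}\|b_t\varpi_{1t}b_{t'}\|_{(1+\delta)\iota_1/(\iota_1-1)}\|a_{s'}\|_{\iota_1(1+\delta)}$, which still yields the exponent $\delta/(1+\delta)$ because $\frac{\iota_1-1}{\iota_1(1+\delta)}+\frac{1}{\iota_1(1+\delta)}=\frac{1}{1+\delta}$. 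H\"older then splits the first norm with no slack at all: $\|b_t\varpi_{1t}b_{t'}\|_{(1+\delta)\iota_1/(\iota_1-1)}\le\|b_t\|_{\iota_2(1+\delta)}\|b_{t'}\|_{\iota_2(1+\delta)}\|\varpi_{1t}\|_{\iota_1(1+\delta)}$, since $\frac{\iota_1-1}{\iota_1(1+\delta)}=\frac{2}{\iota_2(1+\delta)}+\frac{1}{\iota_1(1+\delta)}$ is an identity under $\frac{1}{\iota_1}+\frac{1}{\iota_2}=\frac12$, and $\|\varpi_{1t}\|_{\iota_1(1+\delta)}\le C\sqrt{Th}$ follows from the Shao--Yu bound with surplus $\xi_0=\iota_1\delta$. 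With this substitution your uniform near-range bound $C\sqrt{Th}$, and the rest of your bookkeeping giving $C\max\{\frac{1}{\sqrt{Th}},\frac{1}{Th^2}\}$, are restored.
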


\begin{proof}
First, we decompose $V_{T,11}=\sum_{i=1}^{3}V_{T,11i}$, where
\begin{flalign*}
V_{T,111}=&\frac{1}{T^3h^2}\sum_{t=1}^{T}\sum_{t'=t+1}^{t+2[Th]}\mathrm{Cov}(b_t\varpi_{1t}b_{t'},\varpi_{1t'}), V_{T,112}=\frac{1}{T^3h^2}\sum_{t=1}^{T}\sum_{t'=t+2[Th]+1}^{T}\mathrm{Cov}(b_t\varpi_{1t},b_{t'}\varpi_{1t'}),\\
V_{T,113}=&\frac{1}{T^3h^2}\sum_{t=1}^{T}\sum_{t'=t+2[Th]+1}^{T}E(b_t\varpi_{1t})E(b_{t'}\varpi_{1t'}).
\end{flalign*}

Next, by  Theorem 4.1 in \citet{SY:1996}, we have
\begin{flalign}
\label{varpi2}&\|\varpi_{1t}\|_{\iota}\leq C\sqrt{\min\{t'-t-1,[Th]+1\}}\|a_s\|_{\iota+\xi_0}
\end{flalign}
for some $0\leq \iota\leq \iota_1(1+2\delta)-\xi_0 $ and $\xi_0>0$.
Since $b_t\varpi_{1t}b_{t'}\in\mathcal{F}_{-\infty}^{t'}$, by Davydov's inequality  and H\"older's inequality, we can obtain
\begin{flalign*}
|V_{T,111}|\leq& \frac{1}{T^3h^2}\sum_{t=1}^{T}\sum_{t'=t+1}^{t+2[Th]}\sum_{s=t'}^{t'+[Th]-1}C\beta(s-t')^{\delta/(1+\delta)}
\|b_t\varpi_{1t}b_{t'}\|_{(1+\delta)\iota_1/(\iota_1-1)}\|a_s\|_{\iota_1(1+\delta)}\\
\leq&\frac{1}{T^3h^2}\sum_{t=1}^{T}\sum_{t'=t+1}^{t+2[Th]}\sum_{s=t'}^{t'+[Th]-1}C\beta(s-t')^{\delta/(1+\delta)}
\|b_t\|^2_{\iota_2(1+\delta)}\|\varpi_{1t}\|_{\iota_1(1+\delta)}\|a_s\|_{\iota_1(1+\delta)}\\
=&\frac{1}{T^3h^2}\sum_{t=1}^{T}\sum_{t'=t+1}^{t+2[Th]}\|\varpi_{1t}\|_{\iota_1(1+\delta)}.
\end{flalign*}
Using (\ref{varpi2}) with $\iota=\iota_1(1+\delta)$ and $\xi_0=\iota_1\delta$, it follows that $|V_{T,111}|\leq\frac{CT^2h\sqrt{Th}}{T^3h^2}=\frac{C}{\sqrt{Th}}$.

Third, we note  that $b_t\varpi_{1t}\in\mathcal{F}_{-\infty}^{t+[Th]}$ as $t'>t+[Th]$, and $b_{t'}K(\frac{t'-s}{Th})a_s\in\mathcal{F}_{t'-n_T}^{\infty}\subset\mathcal{F}_{t'-[Th]+1}^{\infty}$ as $n_T\ll[Th]$. Then, by Davydov's inequality and H\"older's inequality, we have
\begin{flalign*}
|V_{T,112}|\leq& \frac{1}{T^3h^2}\sum_{t=1}^{T}\sum_{t'=t+2[Th]+1}^{T}\sum_{s=t'+1}^{t'+[Th]-1}
\Big|\mathrm{Cov}\big(b_{t}\varpi_{1t},b_{t'}K\big((t'-s)/Th\big)a_s\big)\Big|
\\\leq& \frac{C}{T^3h^2}\sum_{t=1}^{T}\sum_{t'=t+2[Th]+1}^{T}\sum_{s=t'+1}^{t'+[Th]-1}\beta(t'-t-2[Th])^{\delta/(1+\delta)}\|b_t\varpi_{1t}\|_{2(1+\delta)}\|b_{t'}a_s\|_{2(1+\delta)}
\\\leq&\frac{C}{T^3h^2}\sum_{t=1}^{T}\sum_{t'=t+2[Th]+1}^{T}\sum_{s=t'+1}^{t'+[Th]-1}\beta(t'-t-2[Th])^{\delta/(1+\delta)}\|b_t\|^2_{\iota_2(1+\delta)}\|\varpi_{1t}\|_{\iota_1(1+\delta)}\|a_s\|_{\iota_1(1+\delta)}
\\\leq& \frac{CTh}{T^3h^2}\sum_{t=1}^{T}\|\varpi_{1t}\|_{\iota_1(1+\delta)}.
\end{flalign*}
Using (\ref{varpi2}) with $\iota=\iota_1(1+\delta)$ and $\xi_0=\iota_1\delta$, it follows that $|V_{T,112}|\leq \frac{C}{\sqrt{Th}}$.
Finally, since it is straightforward to show  that $|V_{T,113}|\leq \frac{C}{Th^2}$, the result follows.
\end{proof}


\begin{lemma}\label{lem_key2}
	Under the conditions in Proposition \ref{keypro}, $V_{T,12} \leq C/\sqrt{Th}$.
\end{lemma}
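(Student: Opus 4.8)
\noindent\textsc{Proof proposal for Lemma \ref{lem_key2}.}
The plan is to exploit the fact that $\varpi_{2t'}$ degenerates to a single kernel term lying a full bandwidth $[Th]$ away from the rest of the summand, so that every term of $V_{T,12}$ is a covariance across a gap of length $[Th]$ and the $\beta$-mixing decay (Condition (2)) overwhelms all polynomial factors in $T$ and $h$. First I would note that, since $K$ is supported on $[-1,1]$ (Condition (3)), the only index $s\in\{t'+[Th],\dots,T\}$ with $K((t'-s)/Th)\neq0$ is $s=t'+[Th]$; hence $\varpi_{2t'}=K(-[Th]/(Th))\,a_{t'+[Th]}$ is a \emph{single} term, $\|\varpi_{2t'}\|_{\iota}\le\|K\|_\infty\|a_0\|_{\iota}$ for every $\iota\le\iota_1(1+2\delta)$, and $E\varpi_{2t'}=0$ since $Ea_s=0$, so that $E(b_t\varpi_{1t}b_{t'}\varpi_{2t'})=\mathrm{Cov}(b_t\varpi_{1t}b_{t'},\varpi_{2t'})$.

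Next I would locate the sigma-fields: $b_t\varpi_{1t}b_{t'}\in\mathcal F_{-\infty}^{t'}$ while $\varpi_{2t'}\in\mathcal F_{t'+[Th]}^{\infty}$, so the separation is exactly $[Th]$. I would then apply Davydov's inequality with exponents $p=(1+\delta)\iota_1/(\iota_1-1)$ and $q=\iota_1(1+\delta)$ (so $1/p+1/q=1/(1+\delta)$ and the mixing exponent is $\delta/(1+\delta)$), followed by the Hölder bound $\|b_t\varpi_{1t}b_{t'}\|_{p}\le\|b_t\|_{\iota_2(1+\delta)}^2\|\varpi_{1t}\|_{\iota_1(1+\delta)}$ already used for $V_{T,111}$, and the moment inequality (\ref{varpi2}) with $\iota=\iota_1(1+\delta)$, $\xi_0=\iota_1\delta$; uniformly in $t,t'$ this gives
$$|E(b_t\varpi_{1t}b_{t'}\varpi_{2t'})|\le C\,\beta([Th])^{\delta/(1+\delta)}\,\|\varpi_{1t}\|_{\iota_1(1+\delta)}\le C\,\beta([Th])^{\delta/(1+\delta)}\,\sqrt{Th}.$$
Summing over the at most $T^2$ pairs $(t,t')$ yields $|V_{T,12}|\le C\,\beta([Th])^{\delta/(1+\delta)}/(\sqrt{Th}\,h)$.

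Finally I would absorb the factor $\beta([Th])^{\delta/(1+\delta)}/h$: since $\beta(\cdot)$ is non-increasing and $\sum_j\beta(j)^{\delta/(1+\delta)}<\infty$, the map $j\mapsto j\,\beta(j)^{\delta/(1+\delta)}$ is bounded, so $\beta([Th])^{\delta/(1+\delta)}=O(1/[Th])=O(1/(Th))$, and since $\lambda_h<1/2$ forces $Th^2\to\infty$ we get $1/(Th)=o(h)$, hence $\beta([Th])^{\delta/(1+\delta)}/h\to0$ and $|V_{T,12}|=o(1/\sqrt{Th})\le C/\sqrt{Th}$. (In the application $\{c_t\}$ is $\beta$-mixing with exponential decay by Lemma \ref{lem_mixing}, so this last step is immediate.) I do not expect a serious obstacle: the lemma is essentially bookkeeping, and the one genuine insight is that $\varpi_{2t'}$ collapses to a single term separated by the full window $[Th]$, which is precisely what lets the mixing decay dominate; the only place needing a little care is the power counting just described — verifying $\beta([Th])^{\delta/(1+\delta)}=o(h)$, which is where Assumption \ref{ass_bandwidth} ($\lambda_h<1/2$) enters.
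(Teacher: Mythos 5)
Your proof is correct and follows essentially the same route as the paper's: it exploits the gap of length $[Th]$ between $b_t\varpi_{1t}b_{t'}\in\mathcal{F}_{-\infty}^{t'}$ and $\varpi_{2t'}\in\mathcal{F}_{t'+[Th]}^{\infty}$ via Davydov's and H\"older's inequalities together with (\ref{varpi2}), and then absorbs the leftover factor using the decay of $\beta$ at lag $[Th]$ and $Th^2\to\infty$ from Assumption \ref{ass_bandwidth}. Your two small deviations --- observing that $\varpi_{2t'}$ collapses to the single term $K(-[Th]/(Th))\,a_{t'+[Th]}$, and bounding $\beta([Th])^{\delta/(1+\delta)}=O(1/(Th))$ by monotonicity plus summability instead of the paper's step $Th\sum_{j\ge[Th]}\beta(j)^{\delta/(1+\delta)}\le C$ --- are harmless (if anything, slightly cleaner) and yield the same bound.
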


\begin{proof}
Note that $b_t\varpi_{1t}b_{t'}\in\mathcal{F}_{-\infty}^{t'}$. By Davydov's inequality, H\"older's inequality and (\ref{varpi2}), we have
\begin{flalign*}
\big|\mathrm{Cov}(b_t\varpi_{1t}b_{t'},\varpi_{2t'})\big|
\leq&\sum_{s=t'+[Th]}^{T}K\Big(\frac{t'-s}{Th}\Big)\big|\mathrm{Cov}(b_t\varpi_{1t}b_{t'},a_s)\big|
\\\leq& C\sum_{j=[Th]}^{T}\beta(j)^{\delta/(1+\delta)}\|b_t\|^2_{\iota_2(1+\delta)}\|\varpi_{1t}\|_{\iota_1(1+\delta)}\|a_s\|_{\iota_1(1+\delta)}\\
=&C\sqrt{Th}\sum_{j=[Th]}^{T}\beta(j)^{\delta/(1+\delta)}.
\end{flalign*}
By Condition (2) and the fact $Th\to\infty$ as $T\to\infty$, we have that $Th\sum_{j=[Th]}^{T}\beta(j)^{\delta/(1+\delta)}\leq C$,
which entails that $|V_{T,12}|\leq\frac{C}{\sqrt{T^3h^5}}\leq \frac{C}{\sqrt{Th}}.$
\end{proof}


\begin{lemma}\label{lem_key3}
	Under assumptions of Proposition \ref{keypro}, $V_{T,13} \leq Cn_T/\sqrt{Th}$.
\end{lemma}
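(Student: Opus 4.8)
The plan is to bound the double sum defining $V_{T,13}$ by splitting the inner summation index $t'$ according to the size of the gap $t'-t$, exploiting that $\varpi_{2t}$ vanishes unless this gap is small, and using the centering $Eb_t=0$ in the ``far'' regime. First I would record two structural facts. Since $K((t-s)/Th)=0$ whenever $|t-s|>[Th]$ by Condition (3), and $\varpi_{2t}$ only involves $a_s$ with $s>t'>t$, the summand $Eb_t\varpi_{2t}b_{t'}\varpi_{t'}$ is identically zero unless $1\le t'-t\le[Th]-1$; moreover, by Condition (4), $n_T\le[Th]$ for $T$ large, so the effective range of $t'$ is $t+1\le t'\le t+[Th]-1$ and the sub-regime $t'-t\le n_T$ is contained in it. Second, $b_t\in\mathcal F_{-\infty}^{t}$, while $\varpi_{2t}\in\mathcal F_{t'}^{\infty}$, $b_{t'}\in\mathcal F_{t'-n_T}^{t'}$ and $\varpi_{t'}\in\mathcal F_{t'+1}^{\infty}$, so $\varpi_{2t}b_{t'}\varpi_{t'}\in\mathcal F_{t'-n_T}^{\infty}$, separated from $b_t$ by a gap of $t'-t-n_T$.

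Next I would assemble the moment bounds. Arguing as for (\ref{varpi2}) via Theorem 4.1 of \citet{SY:1996}, and using that each of $\varpi_{2t},\varpi_{t'}$ has at most $[Th]$ nonzero terms with kernel weights bounded by $\|K\|_\infty$ and that $\|a_s\|_{\iota_1(1+2\delta)}<\infty$ by Condition (1), one gets $\|\varpi_{2t}\|_{\iota_1(1+\delta)}\le C\sqrt{Th}$ and $\|\varpi_{t'}\|_{\iota_1(1+\delta)}\le C\sqrt{Th}$; also $\|b_t\|_{\iota_2(1+\delta)}$ and $\|b_{t'}\|_{\iota_2(1+\delta)}$ are finite by Condition (1). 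The H\"older bookkeeping closes because $\frac{2}{\iota_1(1+\delta)}+\frac{2}{\iota_2(1+\delta)}=\frac{2}{1+\delta}\big(\iota_1^{-1}+\iota_2^{-1}\big)=\frac{1}{1+\delta}<1$, leaving room for the extra exponent demanded by Davydov's inequality, and because the identity $\iota_1\iota_2=2(\iota_1+\iota_2)$, equivalent to $\iota_1^{-1}+\iota_2^{-1}=1/2$, forces the exponent $\frac{2\iota_2(1+\delta)}{\iota_2-2}$ that naturally shows up to equal exactly $\iota_1(1+\delta)$.

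Then I would treat the two regimes. For $n_T<t'-t\le[Th]-1$, since $Eb_t=0$ I write $Eb_t\varpi_{2t}b_{t'}\varpi_{t'}=\mathrm{Cov}(b_t,\varpi_{2t}b_{t'}\varpi_{t'})$ and apply Davydov's inequality: with $b_t$ in $L^{\iota_2(1+\delta)}$ and $\varpi_{2t}b_{t'}\varpi_{t'}$ split by H\"older into $\|\varpi_{2t}\|_{\iota_1(1+\delta)}\|b_{t'}\|_{\iota_2(1+\delta)}\|\varpi_{t'}\|_{\iota_1(1+\delta)}$, this gives $|Eb_t\varpi_{2t}b_{t'}\varpi_{t'}|\le C\,Th\,\beta(t'-t-n_T)^{\delta/(1+\delta)}$; summing over $t'$ using $\sum_{j\ge1}\beta(j)^{\delta/(1+\delta)}<\infty$ (Condition (2)) and then over $t$ gives a contribution of order $T^2h$, hence $\le C/(Th)$ after division by $T^3h^2$. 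For $1\le t'-t\le n_T$, a plain H\"older bound gives $|Eb_t\varpi_{2t}b_{t'}\varpi_{t'}|\le\|b_t\|_{\iota_2(1+\delta)}\|b_{t'}\|_{\iota_2(1+\delta)}\|\varpi_{2t}\|_{\iota_1(1+\delta)}\|\varpi_{t'}\|_{\iota_1(1+\delta)}\le C\,Th$, and there are at most $Tn_T$ such pairs, so this regime contributes at most $C\,Tn_T\cdot Th/(T^3h^2)=Cn_T/(Th)$. Combining, $|V_{T,13}|\le C(n_T+1)/(Th)\le Cn_T/\sqrt{Th}$, the last inequality using $n_T\ge1$ and $Th\to\infty$ from Condition (3).

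The estimates above parallel those in Lemmas \ref{lem_key1}--\ref{lem_key2}, and the only point needing care is the exponent bookkeeping in the second paragraph: one must verify that the $L^q$-norm of the triple product $\varpi_{2t}b_{t'}\varpi_{t'}$, for the exponent $q$ paired with $b_t$ in Davydov's inequality, is indeed controlled by $C\sqrt{Th}$ using only the integrability guaranteed by Condition (1). Everything else is routine.
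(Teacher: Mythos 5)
Your proof is correct, but it follows a genuinely different route from the paper's. The paper first splits $E\,b_t\varpi_{2t}b_{t'}\varpi_{t'}$ into a covariance part and a product-of-expectations part, expands $\varpi_{2t}$ term by term in $s$, interchanges summations, and applies Davydov's inequality with the gap between $t'$ and the indices of the $a$-variables (pieces $V_{T,131}$ and $V_{T,132}$), reserving the near/far split in $t'-t$ relative to $n_T$ only for the product-of-expectations piece $V_{T,133}$, with $V_{T,134}=0$ by the kernel support; this yields exactly $Cn_T/\sqrt{Th}$, with the dominant $n_T$-term coming from the regime $t'-t\le n_T$ of $V_{T,133}$. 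You instead keep the four-fold product intact, split only on $t'-t\le n_T$ versus $t'-t>n_T$, apply Davydov between $b_t\in\mathcal F_{-\infty}^{t}$ and the block $\varpi_{2t}b_{t'}\varpi_{t'}\in\mathcal F_{t'-n_T}^{\infty}$ in the far regime (using $Eb_t=0$), and use plain generalized H\"older in the near regime, in both cases exploiting the Shao--Yu-type block bounds $\|\varpi_{2t}\|_{\iota_1(1+\delta)}\le C\sqrt{Th}$ and $\|\varpi_{t'}\|_{\iota_1(1+\delta)}\le C\sqrt{Th}$ as in (\ref{varpi2}); your exponent bookkeeping is exactly right, since $\iota_1^{-1}+\iota_2^{-1}=1/2$ makes $\frac{2}{\iota_2(1+\delta)}+\frac{2}{\iota_1(1+\delta)}+\frac{\delta}{1+\delta}=1$. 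The payoff of your argument is a sharper bound $C(1+n_T)/(Th)$, which implies the stated $Cn_T/\sqrt{Th}$ since $Th\to\infty$ (and, as in the paper's application, $n_T\geq 1$); the cost is that you need the $L^{\iota_1(1+\delta)}$ moment bound for both kernel blocks rather than only term-by-term bounds, but that is already available from the same Shao--Yu argument the paper invokes, so nothing extra is assumed. In short: same ingredients (Davydov, H\"older, (\ref{varpi2}), Conditions (1)--(4)), different and somewhat more economical decomposition, with a stronger conclusion.
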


\begin{proof}
Rewrite $\varpi_{t'}=\varpi_{3t'}+\varpi_{4t'}$, where $$\varpi_{3t'}=\sum_{r=t'+1}^{s}K\Big(\frac{t'-r}{Th}\Big)a_r
\mbox{ and } \varpi_{4t'}=\sum_{r=s+1}^{t'+[Th]}K\Big(\frac{t'-r}{Th}\Big)a_r.$$
Then, we can decompose $V_{T,13}=\sum_{i=1}^{4}V_{T,13i}$, where
\begin{flalign*}
V_{T,131}=&\frac{1}{T^3h^2}\sum_{t=1}^{T}\sum_{t'=t+1}^{t+[Th]}\sum_{s=t'+1}^{t+[Th]}K\Big(\frac{t-s}{Th}\Big)\mathrm{Cov}(b_tb_{t'},a_s\varpi_{3t'}),\\
V_{T,132}=&\frac{1}{T^3h^2}\sum_{t=1}^{T}\sum_{t'=t+1}^{t+[Th]}\sum_{s=t'+1}^{t+[Th]}K\Big(\frac{t-s}{Th}\Big)\mathrm{Cov}(b_tb_{t'},a_s\varpi_{4t'}),\\
V_{T,133}=&\frac{1}{T^3h^2}\sum_{t=1}^{T}\sum_{t'=t+1}^{t+[Th]}\sum_{s=t'+1}^{t+[Th]}K\Big(\frac{t-s}{Th}\Big)E(b_tb_{t'})E(a_s\varpi_{t'}),\\
V_{T,134}=&\frac{1}{T^3h^2}\sum_{t=1}^{T}\sum_{t'=t+[Th]+1}^{T}\sum_{s=t'+1}^{T}K\Big(\frac{t-s}{Th}\Big)E(b_tb_{t'}a_s\varpi_{t'}).
\end{flalign*}

First, by interchanging summations of $s$ and $r$, we have
\begin{flalign*}
V_{T,131}=\frac{C}{T^3h^2}\sum_{t=1}^{T}\sum_{t'=t+1}^{t+[Th]}\sum_{r=t'+1}^{t+[Th]}\sum_{s=r}^{t+[Th]}
K\Big(\frac{t-s}{Th}\Big)K\Big(\frac{t'-r}{Th}\Big)\mathrm{Cov}(b_tb_{t'},a_sa_r).
\end{flalign*}
Since $b_tb_{t'}\in\mathcal{F}_{-\infty}^{t'}$, by Davydov's inequality and H\"older's inequality, we can show
\begin{flalign*}
|V_{T,131}|\leq& \frac{C}{T^3h^2}\sum_{t=1}^{T}\sum_{t'=t+1}^{t+[Th]}\sum_{r=t'+1}^{t+[Th]}
\mathrm{Cov}\big(b_tb_{t'},a_r\sum_{s=r}^{t+[Th]}K\Big(\frac{t-s}{Th}\Big)a_s\big)
\\\leq&\frac{C}{T^3h^2}\sum_{t=1}^{T}\sum_{t'=t+1}^{t+[Th]}\sum_{r=t'+1}^{t+[Th]}\beta(r-t')^{\delta/(1+\delta)}\|b_tb_{t'}\|_{\iota_2(1+\delta)/2}\\
&\quad\quad\times\Big\|a_r\sum_{s=r}^{t+[Th]}K\Big(\frac{t-s}{Th}\Big)a_s\Big\|_{\iota_1(1+\delta)}
\\\leq&\frac{C}{T^3h^2}\sum_{t=k+1}^{T}\sum_{t'=t+1}^{t+[Th]}\sum_{r=t'}^{t+[Th]}\beta(r-t')^{\delta/(1+\delta)}\|b_t\|^2_{\iota_2(1+\delta)/2}\\
&\quad\quad\times\|a_r\|_{\iota_1(1+\delta)}\Big\|\sum_{s=r}^{t+[Th]}K\Big(\frac{t-s}{Th}\Big)a_s\Big\|_{\iota_1(1+\delta)}.
\end{flalign*}
By similar arguments as for (\ref{varpi2}), we have
$$\Big\|\sum_{s=r}^{t+[Th]}K\Big(\frac{t-s}{Th}\Big)a_s\Big\|_{\iota_1(1+\delta)}\leq C\sqrt{t+[Th]-r}\|a_s\|_{\iota_1(1+2\delta)}\leq C\sqrt{Th}\|a_s\|_{\iota_1(1+2\delta)},$$
and hence it follows that $|V_{T,131}|\leq \frac{CT^2h\sqrt{Th}}{T^3h^2}=\frac{C}{\sqrt{Th}}$. Similarly, $|V_{T,132}|\leq \frac{C}{\sqrt{Th}}$.

Next, we decompose $V_{T,133}=V_{T,1331}+V_{T,1332}$, where
\begin{flalign*}
V_{T,1331}=&\frac{1}{T^3h^2}\sum_{t=1}^{T}\sum_{t'=t+1}^{t+n_T}\sum_{s=t'}^{t+[Th]}K\Big(\frac{t-s}{Th}\Big)E(b_tb_{t'})E(a_s\varpi_{t'}),\\
V_{T,1332}=&\frac{1}{T^3h^2}\sum_{t=1}^{T}\sum_{t'=t+n_T+1}^{t+[Th]}\sum_{s=t'}^{t+[Th]}K\Big(\frac{t-s}{Th}\Big)E(b_tb_{t'})E(a_s\varpi_{t'}).
\end{flalign*}
It is easy to see
\begin{flalign*}
|V_{T,1331}|\leq& \frac{1}{T^3h^2}\sum_{t=1}^{T}\sum_{t'=t+1}^{t+n_T}\sum_{s=t'}^{t+[Th]}K\Big(\frac{t-s}{Th}\Big)\|b_t\|^2_{2}\|a_s\|_2\|\varpi_{t'}\|_2
\leq\frac{CT^2h}{T^3h^2}\sum_{t'=t+1}^{t+n_T}\|\varpi_{t'}\|_2
\end{flalign*}
and $\|\varpi_{t'}\|_2\leq C\sqrt{Th}\|a_s\|_{2(1+\delta)}$ by (\ref{varpi2}). So, we have
that $|V_{T,1331}|\leq Cn_T/\sqrt{Th}$. Moreover, by Davydov's inequality and H\"older's inequality, we can show
\begin{flalign*}
|V_{T,1332}|\leq&\frac{C}{T^3h^2}\sum_{t=1}^{T}\sum_{t'=t+n_T+1}^{t+[Th]}\sum_{s=t'}^{t+[Th]}\beta(t'-n_T-t)^{\delta/(1+\delta)}\|b_t\|^2_{\iota_2(1+\delta)/2}\|a_s\|_2\|\varpi_{t'}\|_2\\\leq& \frac{CT^2h\sqrt{Th}}{T^3h^2}=\frac{C}{\sqrt{Th}},
\end{flalign*}
which implies that $|V_{T,133}|\leq Cn_T/\sqrt{Th}$.

Finally, since $K\big(\frac{s-t}{Th}\big)=0$ when $s\geq t'>t+[Th]$, it follows that $V_{T,134}=0$, and hence the
result follows.
\end{proof}


\textsc{Proof of Theorem \ref{thm_garch}($\mathrm{ii}$).} By Taylor's expansion, we have
\begin{flalign}\label{taylor}
\sqrt{T}(\widehat{\theta}_T-\theta_0)=-\Big\{\frac{1}{T}\frac{\partial^2 \widehat{L}_T(\theta^*)}{\partial\theta\partial\theta'}\Big\}^{-1}
\frac{1}{\sqrt{T}}\frac{\partial \widehat{L}_T({\theta}_0)}{\partial\theta},
\end{flalign}
where $\theta^*$ lies between $\hat{\theta}_T$ and $\theta_0$.

Let $\widehat{g}_t=\widehat{g}_{t}(\theta_0), \widehat{g}_t=\widehat{g}_{t}(\theta_0)$,
$\frac{\partial \widehat{g}_t}{\partial\theta_m}=\frac{\partial \widehat{g}_t(\theta_0)}{\partial\theta_m}$,
$\frac{\partial \widetilde{g}_t}{\partial\theta_m}=\frac{\partial \widetilde{g}_t(\theta_0)}{\partial\theta_m}$,
and $\widetilde{S}_t=\widetilde{S}_t(\theta_0)$.
By noting that $\widehat{g}_t^{-1}=\widetilde{S}_t+\widetilde{g}_t^{-1}$, $\widehat{g}_t^{-1}\widehat{u}^2_t=\widetilde{g}_t^{-1}u^2_t+u^2_t\widetilde{S}_t+\widetilde{g}_t^{-1}\Delta_t$ and $\frac{\partial \widehat{g}_t}{\partial\theta_m}=\frac{\partial \widehat{g}_t}{\partial\theta_m}-\frac{\partial \widetilde{g}_t}{\partial\theta_m}+\frac{\partial \widetilde{g}_t}{\partial\theta_m}$, we have
\begin{flalign} \label{expansionlikelihood}
\frac{1}{\sqrt{T}}\frac{\partial \widehat{L}_T({\theta}_0)}{\partial\theta_m}=\sum_{i=1}^{12}U_i,
\end{flalign}
where
\begin{flalign*}
U_1=&\frac{1}{\sqrt{T}}\sum_{t=1}^{T}(1-\widetilde{g}_t^{-1}u^2_t){\widetilde{g}_t}^{-1}\frac{\partial \widetilde{g}_t}{\partial\theta_m},~\quad
U_2=\frac{1}{\sqrt{T}}\sum_{t=1}^{T}(1-\widetilde{g}_t^{-1}u^2_t){\widetilde{g}_t}^{-1}\Big(\frac{\partial \widehat{g}_t}{\partial\theta_m}-\frac{\partial \widetilde{g}_t}{\partial\theta_m}\Big),\\
U_3=&\frac{1}{\sqrt{T}}\sum_{t=1}^{T}(1-\widetilde{g}_t^{-1}u^2_t)\widetilde{S}_t\frac{\partial \widetilde{g}_t}{\partial\theta_m},\quad\quad
U_4=\frac{1}{\sqrt{T}}\sum_{t=1}^{T}(1-\widetilde{g}_t^{-1}u^2_t)\widetilde{S}_t\Big(\frac{\partial \widehat{g}_t}{\partial\theta_m}-\frac{\partial \widetilde{g}_t}{\partial\theta_m}\Big),\\
U_5=&-\frac{1}{\sqrt{T}}\sum_{t=1}^{T}\widetilde{g}_t^{-1}u^2_t\widetilde{S}_t\frac{\partial \widetilde{g}_t}{\partial\theta_m},\qquad\quad
U_6=-\frac{1}{\sqrt{T}}\sum_{t=1}^{T}\widetilde{g}_t^{-1}u^2_t\widetilde{S}_t\Big(\frac{\partial \widehat{g}_t}{\partial\theta_m}-\frac{\partial \widetilde{g}_t}{\partial\theta_m}\Big),\\
U_7=&-\frac{1}{\sqrt{T}}\sum_{t=1}^{T}u^2_t\widetilde{S}_t^2\frac{\partial \widetilde{g}_t}{\partial\theta_m},\qquad\qquad~
U_8=-\frac{1}{\sqrt{T}}\sum_{t=1}^{T}u^2_t\widetilde{S}_t^2\Big(\frac{\partial \widehat{g}_t}{\partial\theta_m}-\frac{\partial \widetilde{g}_t}{\partial\theta_m}\Big),\\
U_9=&-\frac{1}{\sqrt{T}}\sum_{t=1}^{T}\Delta_t\widehat{g}_t^{-1}\widetilde{g}_t^{-1}\frac{\partial \widetilde{g}_t}{\partial\theta_m},\qquad
U_{10}=-\frac{1}{\sqrt{T}}\sum_{t=1}^{T}\Delta_t\widehat{g}_t^{-1}\widetilde{g}_t^{-1}\Big(\frac{\partial \widehat{g}_t}{\partial\theta_m}-\frac{\partial \widetilde{g}_t}{\partial\theta_m}\Big),\\
U_{11}=&-\frac{1}{\sqrt{T}}\sum_{t=1}^{T}\Delta_t\widehat{g}_t^{-1}\widetilde{S}_t\frac{\partial \widetilde{g}_t}{\partial\theta_m},\qquad~~
U_{12}=-\frac{1}{\sqrt{T}}\sum_{t=1}^{T}\Delta_t\widehat{g}_t^{-1}\widetilde{S}_t\Big(\frac{\partial \widehat{g}_t}{\partial\theta_m}-\frac{\partial \widetilde{g}_t}{\partial\theta_m}\Big).
\end{flalign*}

Using the similar proof as for Theorem 2.2 in \citet{FZ:2004}, we can show
$$U_1=\frac{1}{\sqrt{T}}\frac{\partial{L}_T({\theta}_0)}{\partial\theta_m}+o_p(1)=-\frac{1}{\sqrt{T}}\sum_{t=1}^{T}(\eta_t^2-1)\psi_t+o_{p}(1).$$
By H\"older's inequality and Lemmas \ref{lem_diff}--\ref{lem_hat}, it is not hard to prove
that $U_{i}=o_p(1) \mbox{ for }i=4,6,7,8,10,11,12$.
Combining with the results in Lemmas \ref{pro_u3}--\ref{pro_u9} below, by (\ref{expansionlikelihood}) it follows that
\begin{flalign}\label{first_de}
\frac{1}{\sqrt{T}}\frac{\partial \widehat{L}_T({\theta}_0)}{\partial\theta_m}=-\frac{1}{\sqrt{T}}\sum_{t=1}^{T}\Big\{(\eta_t^2-1)\psi_t-\frac{\omega_0}{\gamma_0}
E\Big(\frac{1}{g_t^{2}}\frac{\partial g_t}{\partial \theta_m}\Big)z_t\Big\}+o_p(1),
\end{flalign}
where
$\omega_0=1-\sum_{i=1}^{q}\alpha_{i0}-\sum_{j=1}^{p}\beta_{j0}$ and $\gamma_0=1-\sum_{j=1}^{p}\beta_{j0}$.

Using Lemmas \ref{lem_diff}--\ref{lem_initial} and the consistency of $\widehat{\theta}_T$, it follows directly that
\begin{flalign}\label{second_de}
\frac{1}{T}\frac{\partial^2 \widehat{L}_t(\theta^*)}{\partial\theta\partial\theta'}\to_p E\Big\{\frac{\partial l_t(\theta_0)}{\partial\theta\partial\theta'}\Big\}=J_1.
\end{flalign}
Thus, by (\ref{taylor}) and (\ref{first_de})--(\ref{second_de}),  we have
\begin{flalign}\label{expression_1}
\sqrt{T}(\widehat{\theta}_T-\theta_0)=J_1^{-1}\frac{1}{\sqrt{T}}\sum_{t=1}^{T}
\Big\{(\eta_t^2-1)\psi_t-\frac{\omega_0}{\gamma_0}E\Big(\frac{\psi_t}{g_t}\Big)z_t\Big\}+o_p(1).
\end{flalign}

Following \citet{HKZ:2006},  $u_t^2$ has an ARMA representation
\begin{flalign*}
u_t^2=\omega_0+\sum_{i=1}^{\max\{p,q\}}(\alpha_{i0}+\beta_{i0})u_{t-i}^2-\sum_{j=1}^{p}\beta_{j0}g_{t-j}(\eta_{t-j}^2-1)+g_t(\eta_t^2-1)
\end{flalign*}
with the convention $\beta_{i0}=0$ if $i>p$ and $\alpha_{i0}=0$ if $i>q$.
Hence, it follows that $z_t=\sum_{i=1}^{\max\{p,q\}}(\alpha_{i0}+\beta_{i0})z_{t-i}-\sum_{j=1}^{p}\beta_{j0}g_{t-j}(\eta_{t-j}^2-1)+g_t(\eta_t^2-1)$, which entails
\begin{equation}\label{z_t}
\frac{1}{\sqrt{T}}\sum_{t=1}^{T}z_{t}=\frac{\gamma_0}{\omega_0}\frac{1}{\sqrt{T}}\sum_{t=1}^{T}g_t(\eta_t^2-1)+o_p(1).
\end{equation}
By (\ref{expression_1})--(\ref{z_t}), we have
\begin{flalign}\label{bahadur}
\sqrt{T}(\widehat{\theta}_T-\theta_0)=J_1^{-1}\frac{1}{\sqrt{T}}\sum_{t=1}^{T}(\eta_t^2-1)\Big\{\psi_t-E\Big(\frac{\psi_t}{g_t}\Big)g_t\Big\}+o_p(1).
\end{flalign}
Now, the result holds by (\ref{bahadur}), the martingale central limit theorem, and the fact that
$E(g_t\psi_{t}')=E\big(\partial g_{t}/\partial\theta_0\big)=0$.
\qed

\vspace{4mm}

In order to prove Lemmas \ref{pro_u3}--\ref{pro_u9}, we need Lemma \ref{lem_difftau} below. The proofs of
Lemmas \ref{lem_difftau}--\ref{pro_u9} are provided in the supplementary material (\citet{JLZ:2019}).

\begin{lemma}\label{lem_difftau}
	Let $m_T$ satisfy \begin{equation}\label{m_T}
	m_T=O(T^{\lambda_m}) ~\text{for some } \lambda_m>0 ~ \text{and  } \lambda_m+\lambda_h<1/2.
	\end{equation}
Then, under the conditions in Theorem \ref{thm_garch},
$$\max_{1\leq i\leq m_T}\max_{i+1\leq t\leq T}\Big|\tau_t^{-1}(\widehat{\tau}_t-\tau_t)-\tau_{t-i}^{-1}(\widehat{\tau}_{t-i}-\tau_{t-i})\Big|=o\big(1/\sqrt{T}\big) \text{  a.s.}$$
\end{lemma}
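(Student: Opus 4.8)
The plan is to reduce the claim, via the kernel expansion in Lemma~\ref{lem_linton}(i), to a statement about a difference of two kernel-weighted partial sums of $z_s:=u_s^2-1$, and then to bound that difference uniformly over $i$ and $t$ by exploiting the Lipschitz continuity of $K$ together with the exponential $\beta$-mixing of $\{z_s\}$ (Lemma~\ref{lem_mixing}).

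First I would invoke Lemma~\ref{lem_linton}(i): writing $W_t:=\frac1T\sum_{s=1}^{T}K_h\big(\frac{t-s}{T}\big)z_s$, it gives $\tau_t^{-1}(\widehat{\tau}_t-\tau_t)=W_t+h^2\tau_t^{-1}b(t/T)+r_t$ with $\sup_{1\le t\le T}|r_t|=O(\log T/(Th))+o(h^2)$ a.s. Subtracting the analogous identity at $t-i$,
$$\tau_t^{-1}(\widehat{\tau}_t-\tau_t)-\tau_{t-i}^{-1}(\widehat{\tau}_{t-i}-\tau_{t-i})=(W_t-W_{t-i})+h^2\big(\tau_t^{-1}b(t/T)-\tau_{t-i}^{-1}b((t-i)/T)\big)+(r_t-r_{t-i}).$$
Since $\tau^{-1}b$ is bounded (Assumptions~\ref{ass_tau}--\ref{ass_kernel}) and $\lambda_h>1/4$ forces $\sqrt{T}h^2\to0$, the middle term is $o(1/\sqrt{T})$ uniformly in $(t,i)$; since $\lambda_h<1/2$ makes a polynomial rate beat $\log T$, the last term, of size $O(\log T/(Th))+o(h^2)$, is likewise $o(1/\sqrt{T})$ uniformly. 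Thus it remains to show $\sqrt{T}\,\max_{1\le i\le m_T}\max_{i+1\le t\le T}|W_t-W_{t-i}|\to0$ a.s.

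Next I would write $W_t-W_{t-i}=\frac1{Th}\sum_{s=1}^{T}w_{s,t,i}\,z_s$ with $w_{s,t,i}=K\big(\frac{t-s}{Th}\big)-K\big(\frac{t-i-s}{Th}\big)$, i.e.\ view it as a kernel-type average driven by the ``small kernel'' $\Delta_iK(\cdot)=K(\cdot)-K(\cdot-i/(Th))$. Because $K$ is Lipschitz and of bounded variation with compact support and $i\le m_T=o(Th)$, one gets $|w_{s,t,i}|\le Cm_T/(Th)$, $\sum_s|w_{s,t,i}|\le Cm_T$, at most $O(Th)$ nonzero $w_{s,t,i}$, and hence $\mathrm{Var}(W_t-W_{t-i})=O(m_T^2/(Th)^3)$ — far smaller than the $O(1/(Th))$ size of $W_t$ itself, owing to the cancellation between $W_t$ and $W_{t-i}$. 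One then truncates $z_s$ at a small power $M_T$ of $T$: for the large part $\sum_s w_{s,t,i}z_s\mathbf{1}(|z_s|>M_T)$ one uses $E|z_s|^{2(1+\delta_0)}<\infty$ (Assumption~\ref{ass_u}) and Borel--Cantelli to control $\sum_{s\le T}|z_s|\mathbf{1}(|z_s|>M_T)$ and $\max_{s\le T}|z_s|$ a.s., the point being that the effective weights $w_{s,t,i}/(Th)=O(m_T/(Th)^2)$ are so small that even occasional large $z_s$ contribute $o(1/\sqrt{T})$ uniformly in $(t,i)$; for the bounded, centred part one applies a Bernstein-type inequality for geometrically $\beta$-mixing sequences, with variance parameter $O(m_T^2\log T/(Th)^3)$ and sup-norm parameter $O(M_Tm_T/(Th)^2)$, followed by a union bound over the $O(Tm_T)$ pairs $(t,i)$ and a Borel--Cantelli step over $T$. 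Under $1/4<\lambda_h<1/2$ and $\lambda_m+\lambda_h<1/2$ every resulting exponent is strictly negative; heuristically the worst term, after multiplying by $\sqrt{T}$, is of order $T^{\lambda_m+3\lambda_h/2-1}$ up to logarithmic factors, and $\lambda_m+3\lambda_h/2=(\lambda_m+\lambda_h)+\lambda_h/2<3/4<1$ delivers the conclusion.

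The main obstacle is the \emph{uniform} almost-sure control of these kernel-type sums over all $O(Tm_T)$ pairs $(t,i)$ when $z_s$ has only $2(1+\delta_0)$ finite moments: a plain union bound with Markov's inequality is too crude, so one has to replicate the truncation-plus-exponential-inequality (maximal inequality) machinery behind Lemma~\ref{lem_linton}, carefully balancing $M_T$ against the weight bounds, the mixing rate, and the undersmoothing rate of $h$. This is precisely where Assumption~\ref{ass_u} (stronger than Assumption~\ref{ass_ut}) and Assumption~\ref{ass_bandwidth} enter, and the detailed bookkeeping is relegated to the supplementary material.
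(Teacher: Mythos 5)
You cannot be checked against the paper's own argument here, because the paper proves Lemma \ref{lem_difftau} only in its supplementary material; judged on its merits, your plan is correct and is the natural one given the tools the paper sets up. The reduction via Lemma \ref{lem_linton}(i) is right: the bias difference is $O(h^2)$ and $\sqrt{T}h^2\to0$ because $\lambda_h>1/4$, the remainder is $O(\log T/(Th))+o(h^2)$ and $\sqrt{T}\log T/(Th)\to0$ because $\lambda_h<1/2$, so everything hinges on $\max_{i\le m_T}\max_{t}|W_t-W_{t-i}|$. Your weight bounds ($|w_{s,t,i}|\le Cm_T/(Th)$, $\sum_s|w_{s,t,i}|\le Cm_T$, only $O(Th)$ nonzero terms, using $m_T=o(Th)$) and the resulting variance bound $O(m_T^2/(Th)^3)$ via Davydov's inequality and the exponential $\beta$-mixing of $z_t$ from Lemma \ref{lem_mixing} are correct, and the decisive exponent is as you state: $\sqrt{T}\,m_T(Th)^{-3/2}=T^{\lambda_m+3\lambda_h/2-1}$ up to logarithms, which vanishes since $\lambda_m+3\lambda_h/2=(\lambda_m+\lambda_h)+\lambda_h/2<3/4$.

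The one step that needs a precise choice is the truncation level, and it does balance out as you anticipate: take $M_T=T^{1/(2(1+\delta_0))}$. Then Assumption \ref{ass_u} and Borel--Cantelli give $\max_{s\le T}|z_s|=o(M_T)$ a.s., so the untruncated tail part is eventually void, and the re-centering bias, bounded by $Cm_T M_T^{-(1+2\delta_0)}/(Th)$, is $o(T^{-1/2})$ because $\lambda_m+\lambda_h<1/2$. For the Bernstein step the sup-norm parameter satisfies $\sqrt{T}\,M_Tm_T/(Th)^2=T^{\lambda_m+2\lambda_h+1/(2(1+\delta_0))-3/2}\to0$, the exponent being negative precisely because $2\lambda_h+\lambda_m<1/2+\lambda_h<1$ and $1/(2(1+\delta_0))<1/2$; likewise the Gaussian part of the exponential bound decays at rate $\exp(-cT^{2-3\lambda_h-2\lambda_m}/\log T)$ with a positive exponent, so the union bound over the $O(Tm_T)$ pairs and the Borel--Cantelli step over $T$ go through. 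So all exponents are indeed strictly negative, and this is exactly where $\delta_0>0$ in Assumption \ref{ass_u} and the range $1/4<\lambda_h<1/2$ in Assumption \ref{ass_bandwidth} are genuinely used; spelling out this choice of $M_T$ (and the re-centering term) is the only bookkeeping your sketch still owes.
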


\begin{lemma}\label{pro_u3}
	Under the conditions in Theorem \ref{thm_garch}, $U_2=o_p(1)$ and $U_3=o_p(1)$.
\end{lemma}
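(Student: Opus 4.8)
The plan is to show that each of $U_2$ and $U_3$, after discarding negligible remainders, collapses to finitely many statistics of exactly the form handled by Proposition \ref{keypro}. Throughout, $c_t:=(u_t,g_t,\partial g_t(\theta_0)/\partial\theta')$ denotes the stationary, exponentially $\beta$-mixing sequence of Lemma \ref{lem_mixing}, and $\eta_t^2=u_t^2/g_t$ is independent of $\{\eta_s:s\le t-1\}$. First I would replace, inside $U_2$ and $U_3$, the quantities $\widetilde g_t(\theta_0)$, $\partial\widetilde g_t(\theta_0)/\partial\theta_m$ and $\widetilde g_t^{-1}u_t^2$ by $g_t$, $\partial g_t/\partial\theta_m$ and $\eta_t^2$; by Lemma \ref{lem_initial} (and $g_t,\widetilde g_t\ge\omega_0$) the $L^\iota$-errors are $O(\rho^t)$, so this changes $U_2,U_3$ by $O_p(T^{-1/2}\sum_{t\ge1}\rho^t)=o_p(1)$.

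Next I would expand in the nonparametric error. Subtracting (\ref{hatg}) at $\theta_0$ from the same recursion with $u_{t-i}$ in place of $\widehat u_{t-i}$ gives $(\widehat g_t-\widetilde g_t)-\sum_{j}\beta_{j0}(\widehat g_{t-j}-\widetilde g_{t-j})=\sum_i\alpha_{i0}\Delta_{t-i}$, and differentiating in $\theta_m$ yields the analogous recursion for the first derivatives. Since the companion matrix $B$ has spectral radius less than one under Assumption \ref{ident_tau}(i), there are sequences $\{\pi_l\},\{\sigma_l^{(m)}\}$ with $|\pi_l|+|\sigma_l^{(m)}|\le C\rho_0^l$, $\rho_0\in(0,1)$, such that $\widehat g_t-\widetilde g_t=\sum_{l\ge1}\pi_l\Delta_{t-l}$ and $\partial\widehat g_t/\partial\theta_m-\partial\widetilde g_t/\partial\theta_m=\sum_{l\ge1}\sigma_l^{(m)}\Delta_{t-l}$ ($\Delta_s:=0$ for $s\le0$, the initial values matching). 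Writing $\widetilde S_t=-(\widehat g_t\widetilde g_t)^{-1}(\widehat g_t-\widetilde g_t)$ and replacing $(\widehat g_t\widetilde g_t)^{-1}$ by $g_t^{-2}$ at a cost of order $\kappa_T^2+\kappa_T\rho^t$ in $\|\cdot\|_{\iota/2}$ (Lemmas \ref{lem_hat}(i), \ref{lem_initial}(i)), hence $o_p(1)$ in $U_3$ because $\sqrt T\kappa_T^2\to0$ under Assumption \ref{ass_bandwidth}, I reduce $U_2,U_3$ to $\sum_{l\ge1}\lambda_l S_T^{(l)}+o_p(1)$ with $\lambda_l\in\{\pi_l,\sigma_l^{(m)}\}$ and
\[
S_T^{(l)}=\frac{1}{\sqrt T}\sum_{t=1}^T(1-\eta_t^2)\,w_t\,\Delta_{t-l},\qquad w_t\in\Big\{g_t^{-1},\ -g_t^{-2}\partial g_t/\partial\theta_m\Big\},
\]
where $w_t$ is a function of $c_t$, measurable w.r.t.\ $\{\eta_s:s\le t-1\}$, with all moments used below finite by Assumptions \ref{ass_u} and \ref{ass_eta}(ii).

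Then, by Lemma \ref{lem_diff}(i), $\Delta_{t-l}=\tau_{t-l}^{-1}(\tau_{t-l}-\widehat\tau_{t-l})u_{t-l}^2+O(\kappa_T^2)u_{t-l}^2$, and by Lemma \ref{lem_linton}(i), $\tau_{t-l}-\widehat\tau_{t-l}=-\tau_{t-l}\{(Th)^{-1}\sum_{s=1}^TK((s-(t-l))/(Th))(u_s^2-1)\}-h^2 b((t-l)/T)+R_{t-l}$ with $\sup_s|R_s|=O(\log T/(Th))+o(h^2)$ a.s.; note the factor $\tau_{t-l}$ cancels the $\tau_{t-l}^{-1}$ in the leading term. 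Substituting into $S_T^{(l)}$ and summing $\sum_l|\lambda_l|<\infty$, the $h^2$-bias and the $O(\kappa_T^2),R_{t-l}$ pieces contribute terms of order $\sqrt T(h^2+\kappa_T^2+\log T/(Th))$, which are $o_p(1)$ precisely because $1/4<\lambda_h<1/2$ (the bounded deterministic factors $\tau_{t-l}^{-1}$, $b(\cdot)$ are harmless). With $n_T:=\lceil C_0\log T\rceil$ I then truncate $\sum_{l\ge1}$ at $l\le n_T$ (tail $O_p(\sqrt T\kappa_T\rho_0^{n_T})=o_p(1)$ for $C_0$ large) and, using Lipschitz continuity of $K$ and $n_T\ll Th$, replace $K((s-(t-l))/(Th))$ by $K((s-t)/(Th))$ (cost $O_p(n_T/(\sqrt T h))=o_p(1)$), obtaining $U_2,U_3=\sum_{l=1}^{n_T}\lambda_l\widehat S_T^{(l)}+o_p(1)$ with
\[
\widehat S_T^{(l)}=\frac{1}{\sqrt T}\sum_{t=1}^T b_t^{(l)}\Big\{\frac{1}{Th}\sum_{s=1}^T K\Big(\frac{s-t}{Th}\Big)(u_s^2-1)\Big\},\qquad b_t^{(l)}=-(1-\eta_t^2)\,w_t\,u_{t-l}^2 .
\]
Here $a_s=u_s^2-1=f(c_s)$ has mean $0$ (Assumption \ref{ident_tau}(ii)), $b_t^{(l)}=g^{(l)}(c_t,c_{t-l})$ with $k=l\le n_T$ has mean $0$ since $\eta_t\perp\{\eta_s:s<t\}$ and $E\eta_t^2=1$, condition (1) of Proposition \ref{keypro} holds for a suitable pair $(\iota_1,\iota_2)$ with $\iota_1^{-1}+\iota_2^{-1}=1/2$ and small $\delta>0$ (Assumptions \ref{ass_u}, \ref{ass_eta}(ii)), condition (2) is Lemma \ref{lem_mixing}, condition (3) is Assumptions \ref{ass_kernel} and \ref{ass_bandwidth}, and $n_T=O(\log T)=o(\sqrt{Th^2})$ gives (4). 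Hence Proposition \ref{keypro}(ii) gives $E(\widehat S_T^{(l)})^2\le C\max\{n_T/\sqrt{Th},1/(Th^2)\}$ with $C$ uniform over $l\le n_T$, both terms vanishing under Assumption \ref{ass_bandwidth}; so $E|\sum_{l\le n_T}\lambda_l\widehat S_T^{(l)}|\le(\sum_l|\lambda_l|)\max_{l\le n_T}(E(\widehat S_T^{(l)})^2)^{1/2}\to0$, and $U_2=U_3=o_p(1)$ follows by Markov's inequality.

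The main obstacle is the bookkeeping in the third step: one must ensure that \emph{every} remainder produced along the way — the $O(\kappa_T^2)$ of Lemma \ref{lem_diff}(i), the $O(\log T/(Th))+o(h^2)$ of Lemma \ref{lem_linton}(i), the $h^2$-bias, the re-centring of the kernel from $t-l$ to $t$, the higher-order term from expanding $(\widehat g_t\widetilde g_t)^{-1}$, and the geometric truncation of $\sum_l$ — remains $o_p(1)$ after multiplication by $\sqrt T$, while $n_T$ is simultaneously small enough for condition (4) of Proposition \ref{keypro} and large enough to kill the geometric tail. This balancing act is exactly what the range $1/4<\lambda_h<1/2$ in Assumption \ref{ass_bandwidth} secures; once Proposition \ref{keypro} is in hand, the final step is essentially mechanical.
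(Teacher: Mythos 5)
Your overall route is the intended one: strip initial-value effects via Lemma \ref{lem_initial}, expand $\widehat g_t-\widetilde g_t$ and $\partial\widehat g_t/\partial\theta_m-\partial\widetilde g_t/\partial\theta_m$ as geometrically weighted sums of $\Delta_{t-l}$, insert Lemma \ref{lem_diff}(i) and Lemma \ref{lem_linton}, kill every bias/remainder term with $\sqrt T(h^2+\kappa_T^2+\log T/(Th))=o(1)$ under Assumption \ref{ass_bandwidth}, truncate at $n_T=O(\log T)$, re-center the kernel, and finish with Proposition \ref{keypro}(ii); this is precisely what Proposition \ref{keypro} was built for. The gap is in the one step you assert rather than check: condition (1) of Proposition \ref{keypro} for your choice $b_t^{(l)}=-(1-\eta_t^2)\,w_t\,u_{t-l}^2$ with $w_t\in\{g_t^{-1},-g_t^{-2}\partial g_t/\partial\theta_m\}$. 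Since $a_s=u_s^2-1$ has finite moments only up to order $2(1+\delta_0)$ (Assumption \ref{ass_u}), the H\"older constraint $\iota_1^{-1}+\iota_2^{-1}=1/2$ forces $\iota_2(1+2\delta)$ to be at least roughly $2+2/\delta_0$. Because $\eta_t$ is independent of $(w_t,u_{t-l})$, $E|b_t^{(l)}|^{r}=E|1-\eta_t^2|^{r}\,E|w_tu_{t-l}^2|^{r}$, so your $b_t^{(l)}$ needs $E|u_t|^{2r}<\infty$ with $r\approx 2+2/\delta_0$, i.e.\ $E|u_t|^{4+4/\delta_0}<\infty$, which Assumption \ref{ass_u} does not deliver unless $\delta_0\ge 1$. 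The assumptions are deliberately calibrated for the opposite regime (heavy-tailed $u_t$, light-tailed $\eta_t$, small $\delta_0$), so with your factorization the moment norms entering the Davydov bounds inside Proposition \ref{keypro} are not finite and the conclusion $E(\widehat S_T^{(l)})^2\to 0$ is not justified.

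The repair is a different grouping, not a different strategy: do not detach $u_{t-l}^2$ from the GARCH denominator. The weights in your expansions are (up to constants) the ARCH($\infty$) coefficients $c_l$ of $\mathcal{A}_{\theta_0}(z)/\mathcal{B}_{\theta_0}(z)$, and $g_t\ge \omega_0+\sum_{k\ge1}c_k u_{t-k}^2$, so $c_l u_{t-l}^2/g_t\le 1$ (this is the same device that yields Lemma \ref{lem_diff}(ii)), and the analogous comparability holds for the derivative weights. Keeping the ratio $u_{t-l}^2/g_t$ intact, the relevant $b$-factor becomes $(1-\eta_t^2)$ times a quantity bounded by a constant or by a component of $\psi_t$, hence with all moments; condition (1) then follows from $E|\eta_t|^{4+4/\delta_0+\delta_1}<\infty$, which is exactly what the exponent in Assumption \ref{ass_eta}(ii) is designed for, and the constants in Proposition \ref{keypro}(ii) are uniform in $l\le n_T$. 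With that regrouping, the rest of your bookkeeping (the $\sqrt T\kappa_T^2$, $\sqrt Th^2$, $\log T/(\sqrt Th)$, $n_T/(\sqrt Th)$ and geometric-tail estimates, and the mean-zero property of $b_t^{(l)}$) goes through as you describe.
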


\begin{lemma}\label{pro_u5}
	Under the conditions in Theorem \ref{thm_garch}, $U_5=-\frac{1}{\sqrt{T}}\sum_{t=1}^{T}M_mz_t+o_p(1)$, where
$M_m=E\big(\frac{1}{g_t}\frac{\partial g_t}{\partial\theta_m}\big)-\frac{\omega_0}{\gamma_0}E\big(\frac{1}{g_t^{2}}\frac{\partial g_t}{\partial\theta_m}\big).$
\end{lemma}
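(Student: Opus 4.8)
The plan is to get a workable expansion of $\widehat g_t(\theta_0)-\widetilde g_t(\theta_0)$, feed it into $U_5$ through $\widetilde S_t=\widehat g_t^{-1}-\widetilde g_t^{-1}=-(\widehat g_t-\widetilde g_t)/(\widehat g_t\widetilde g_t)$, and reduce the result to a bilinear object controlled by Proposition \ref{keypro}. First I would subtract the recursion (\ref{hatg}) for $\widehat g_t(\theta_0)$ from the analogous one for $\widetilde g_t(\theta_0)$; since the two share the same constant initial values, $\widehat g_t(\theta_0)-\widetilde g_t(\theta_0)=\sum_{i=1}^q\alpha_{i0}\Delta_{t-i}+\sum_{j=1}^p\beta_{j0}\{\widehat g_{t-j}(\theta_0)-\widetilde g_{t-j}(\theta_0)\}$ with $\Delta_s=0$ for $s\le0$, so iterating with the companion matrix $B$ yields $\widehat g_t(\theta_0)-\widetilde g_t(\theta_0)=\sum_{k\ge0}[B^k]_{11}\sum_{i=1}^q\alpha_{i0}\Delta_{t-k-i}$, a geometrically convergent series since the spectral radius of $B$ is $<1$ under Assumption \ref{ident_tau}(i) (and with no initial-value error, unlike in Lemma \ref{lem_initial}).

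Next I would insert Lemma \ref{lem_diff}(i), $\Delta_{t-k-i}=\tau_{t-k-i}^{-1}(\tau_{t-k-i}-\widehat\tau_{t-k-i})u_{t-k-i}^2+O(\kappa_T^2)u_{t-k-i}^2$; the $O(\kappa_T^2)$ piece contributes $O_p(\sqrt T\kappa_T^2)=o_p(1)$ after the $1/\sqrt T$ normalization (Assumption \ref{ass_bandwidth} gives $\sqrt T\kappa_T^2\to0$), using $Eu_t^4<\infty$ and the geometric decay of $[B^k]_{11}$. Truncating the $k$-sum at $k\le m_T=T^{\lambda_m}$ with $\lambda_m+\lambda_h<1/2$ (tail negligible by geometric decay) and applying Lemma \ref{lem_difftau} to replace $\tau_{t-k-i}^{-1}(\widehat\tau_{t-k-i}-\tau_{t-k-i})$ by $\tau_t^{-1}(\widehat\tau_t-\tau_t)$ up to a uniform $o(1/\sqrt T)$ error leaves $\tau_t^{-1}(\tau_t-\widehat\tau_t)\sum_{k\ge0}[B^k]_{11}\sum_i\alpha_{i0}u_{t-k-i}^2$. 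A companion-matrix computation gives $\sum_{k\ge0}[B^k]_{11}=1/\mathcal B_{\theta_0}(1)=1/\gamma_0$ and, inverting $\mathcal B_{\theta_0}(L)$ in the ARMA-type representation of $g_t$, $\sum_{k\ge0}[B^k]_{11}\sum_i\alpha_{i0}u_{t-k-i}^2=g_t-\omega_0/\gamma_0$; hence $\widehat g_t(\theta_0)-\widetilde g_t(\theta_0)=\tau_t^{-1}(\tau_t-\widehat\tau_t)(g_t-\omega_0/\gamma_0)+r_t$ with $r_t$ negligible after $1/\sqrt T$ normalization.

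Substituting into $U_5$ and replacing $\widehat g_t,\widetilde g_t,\partial\widetilde g_t/\partial\theta_m$ by $g_t,g_t,\partial g_t/\partial\theta_m$ (errors controlled by Lemmas \ref{lem_diff}(ii), \ref{lem_hat}, \ref{lem_initial}, H\"older's inequality, and the moment bounds of Assumptions \ref{ass_u}--\ref{ass_eta}) gives $U_5=\frac1{\sqrt T}\sum_{t=1}^T\frac{u_t^2(g_t-\omega_0/\gamma_0)}{g_t^3}\frac{\partial g_t}{\partial\theta_m}\,\tau_t^{-1}(\tau_t-\widehat\tau_t)+o_p(1)$. Applying Lemma \ref{lem_linton}(i), $\tau_t^{-1}(\tau_t-\widehat\tau_t)=-\frac1{Th}\sum_s K\big(\tfrac{t-s}{Th}\big)z_s-h^2\tau_t^{-1}b(t/T)$ plus an a.s.\ remainder $O(\log T/(Th))+o(h^2)$, whose contributions sum to $O(\sqrt T h^2)+O(\log T/(\sqrt T h))=o(1)$ under Assumption \ref{ass_bandwidth}. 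Writing $w_t:=\frac{g_t-\omega_0/\gamma_0}{g_t^2}\frac{\partial g_t}{\partial\theta_m}$ (a function of $\{u_s:s<t\}$) and $b_t:=\eta_t^2 w_t=\frac{u_t^2}{g_t}w_t$, this becomes $U_5=-\frac1{\sqrt T}\sum_{t=1}^T b_t\big\{\frac1{Th}\sum_{s=1}^T K\big(\tfrac{t-s}{Th}\big)z_s\big\}+o_p(1)$. I then split $b_t=M_m+(b_t-M_m)$, using $Eb_t=Ew_t=E\big(\frac1{g_t}\frac{\partial g_t}{\partial\theta_m}\big)-\frac{\omega_0}{\gamma_0}E\big(\frac1{g_t^2}\frac{\partial g_t}{\partial\theta_m}\big)=M_m$. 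The constant part equals $-M_m\frac1{\sqrt T}\sum_s z_s\big\{\frac1{Th}\sum_t K\big(\tfrac{t-s}{Th}\big)\big\}$, and since $\frac1{T}\sum_t K_h\big(\tfrac{t-s}{T}\big)=1+O(1/(Th))$ for interior $s$ while the $O(Th)$ boundary indices are absorbed by a second-moment estimate using summability of the autocovariances of $z_t$ (Assumption \ref{ass_ut}), it equals $-\frac1{\sqrt T}\sum_t M_m z_t+o_p(1)$, which is exactly the asserted leading term.

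The hard part is showing the fluctuation term $\frac1{\sqrt T}\sum_t(b_t-M_m)\big\{\frac1{Th}\sum_s K\big(\tfrac{t-s}{Th}\big)z_s\big\}$ is $o_p(1)$. Writing $b_t-M_m=(\eta_t^2-1)w_t+(w_t-Ew_t)$ separates a martingale-difference part (the $\mathcal F_{t-1}$-measurable $w_t$ times $\eta_t^2-1$) and a mean-zero stationary $\beta$-mixing part (Lemma \ref{lem_mixing}); after truncating $w_t$ and $z_s$ to functions of finitely many lags — the remainders being geometrically small in the relevant $L^q$-norms by the GARCH recursion — each piece is of the form $S_T$ in Proposition \ref{keypro}, with $a_s$ a centered function of $c_s$, $b_t$ a centered function of $c_t$ and one lagged coordinate, and $n_T$ growing slowly. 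Proposition \ref{keypro}(i)--(ii) then give $|ES_T|=O(n_T/(\sqrt T h))$ and $ES_T^2=O\big(\max\{n_T/\sqrt{Th},\,1/(Th^2)\}\big)$, both $\to0$ under Assumption \ref{ass_bandwidth} (note $Th^2=c_h^2T^{1-2\lambda_h}\to\infty$) once $n_T$ is chosen slowly growing, e.g.\ $n_T=(\log T)^2$; hence the fluctuation term vanishes in probability and the lemma follows. I expect the main obstacle to be precisely arranging these truncations so that Proposition \ref{keypro} applies and controlling the near-diagonal indices $|s-t|\le n_T$; the accompanying bookkeeping of the $O(\kappa_T)$ and $O(\rho^t)$ remainders via Lemmas \ref{lem_diff}--\ref{lem_difftau} is routine but tedious.
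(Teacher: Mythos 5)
Your proof is correct and follows exactly the route the paper's appendix machinery is designed for: Lemma \ref{lem_diff}(i) plus the companion-matrix expansion and Lemma \ref{lem_difftau} to reduce $\widehat{g}_t(\theta_0)-\widetilde{g}_t(\theta_0)$ to $\tau_t^{-1}(\tau_t-\widehat{\tau}_t)(g_t-\omega_0/\gamma_0)$, Lemma \ref{lem_linton} to replace $\tau_t^{-1}(\tau_t-\widehat{\tau}_t)$ by the kernel average of $z_s$ (with the $h^2$-bias and $\log T/(Th)$ remainders vanishing under Assumption \ref{ass_bandwidth}), and Proposition \ref{keypro} together with Lemma \ref{lem_mixing} to kill the fluctuation term, leaving the mean term $M_m$ — and your sign and value of $M_m$ are consistent with how $U_5$ and $U_9$ combine in the proof of Theorem \ref{thm_garch}(ii). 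One simplification worth noting: taking $c_t=(u_t,g_t,\partial g_t(\theta_0)/\partial\theta')$ as in Lemma \ref{lem_mixing} makes $(\eta_t^2-1)w_t$, $w_t-Ew_t$ and $z_s$ functions of $c_t$ (respectively $c_s$) alone, so Proposition \ref{keypro} applies with constant $n_T$ and no lag-truncation of $w_t$ is needed (the truncation you describe would give a $b_t$ depending on all lags up to $n_T$, which is not literally of the form $g(c_t,c_{t-k})$ in the proposition's statement, whereas the untruncated choice is).
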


\begin{lemma}\label{pro_u9}
	Under the conditions in Theorem \ref{thm_garch}, $U_9=\frac{1}{\sqrt{T}}\sum_{t=1}^{T}E\Big(\frac{1}{g_t}\frac{\partial g_t}{\partial\theta_m}\Big)z_t+o_p(1)$.
\end{lemma}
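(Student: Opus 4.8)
The plan is to strip $U_9$ down to its leading stochastic part, recognize that part as an instance of the kernel-smoothed bilinear sum treated by Proposition~\ref{keypro}, and dispose of the remaining piece (which violates the mean-zero hypothesis of Proposition~\ref{keypro}) by a direct Riemann-sum estimate. I would first reduce $U_9=-\frac{1}{\sqrt T}\sum_{t}\Delta_t\widehat g_t^{-1}\widetilde g_t^{-1}\frac{\partial\widetilde g_t}{\partial\theta_m}$ to $-\frac{1}{\sqrt T}\sum_t\big[\tau_t^{-1}(\tau_t-\widehat\tau_t)u_t^2\big]g_t^{-2}\frac{\partial g_t}{\partial\theta_m}+o_p(1)$. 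Writing $\widehat g_t^{-1}=\widetilde g_t^{-1}+\widetilde S_t$ with $|\widetilde S_t|\le C\kappa_T$ (Lemma~\ref{lem_diff}(ii)) and using $|\Delta_t|\le C\kappa_T u_t^2$ a.s.\ (Lemma~\ref{lem_diff}(i) with the uniform rate $\sup_x|\widehat\tau(x)-\tau(x)|=O(\kappa_T)$ from Lemma~\ref{lem_linton}), the term carrying $\widetilde S_t\widetilde g_t^{-1}$ is $O_p(\sqrt T\kappa_T^2)=o_p(1)$ under Assumption~\ref{ass_bandwidth}; replacing $\widetilde g_t^{-2}\frac{\partial\widetilde g_t}{\partial\theta_m}$ by $g_t^{-2}\frac{\partial g_t}{\partial\theta_m}$ costs $O(\rho^t)$ per term (Lemma~\ref{lem_initial}) and hence $O_p(\kappa_T/\sqrt T)$; and dropping the $O(\kappa_T^2)u_t^2$ remainder in $\Delta_t$ costs another $O_p(\sqrt T\kappa_T^2)$. (Lemma~\ref{lem_difftau} is available to treat the nonparametric error entering the recursion as locally constant where that is needed.) Since $u_t^2g_t^{-2}\frac{\partial g_t}{\partial\theta_m}=\eta_t^2\psi_{t,m}$ with $\psi_{t,m}=g_t^{-1}\frac{\partial g_t}{\partial\theta_m}$, I would then insert the expansion of Lemma~\ref{lem_linton}(i), $\tau_t^{-1}(\widehat\tau_t-\tau_t)=\frac{1}{Th}\sum_s K\!\big(\tfrac{t-s}{Th}\big)z_s+O(h^2)+O\!\big(\tfrac{\log T}{Th}\big)$ uniformly in $t$; the bias and remainder, multiplied by $\eta_t^2\psi_{t,m}$ and averaged, contribute $O_p(\sqrt T h^2)+O_p(\log T/(\sqrt T h))=o_p(1)$ by Assumption~\ref{ass_bandwidth}, leaving $U_9=\frac{1}{\sqrt T}\sum_t\eta_t^2\psi_{t,m}\,\frac{1}{Th}\sum_s K\!\big(\tfrac{t-s}{Th}\big)z_s+o_p(1)$.

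Next I would center $\eta_t^2\psi_{t,m}=E\psi_{t,m}+(\eta_t^2\psi_{t,m}-E\psi_{t,m})$, using that $\psi_{t,m}$ is independent of $\eta_t$ and $E\eta_t^2=1$, so $E(\eta_t^2\psi_{t,m})=E\psi_{t,m}=E(g_t^{-1}\partial g_t/\partial\theta_m)$. The mean-zero part is precisely the quantity $S_T$ of Proposition~\ref{keypro}, with $c_t=(u_t,g_t,\partial g_t(\theta_0)/\partial\theta')$ (strictly stationary and $\beta$-mixing with exponential decay by Lemma~\ref{lem_mixing}), $a_s=z_s=u_s^2-1$ (mean zero by Assumption~\ref{ident_tau}(ii)), $b_t=\eta_t^2\psi_{t,m}-E\psi_{t,m}$ (mean zero), and $n_T=k=0$. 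Its moment condition holds with $\delta>0$ small, $\iota_1$ slightly above $2$ so that $\iota_1(1+2\delta)\le 2(1+\delta_0)$ and $E|z_s|^{\iota_1(1+2\delta)}<\infty$ by Assumption~\ref{ass_u}, and $\iota_2=(1/2-1/\iota_1)^{-1}$ large but with $\iota_2(1+2\delta)$ below the moment order of $\eta_t^2$ supplied by Assumption~\ref{ass_eta}(ii), using also that $\psi_t$ has finite moments of every order in the stationary GARCH model; one checks this system of inequalities is feasible because $2\delta_0/(\delta_0+1)<2\delta_0$. Proposition~\ref{keypro}(ii) then gives $E[(\text{this part})^2]\le C\max\{1/\sqrt{Th},1/(Th^2)\}\to0$ under Assumption~\ref{ass_bandwidth} (where $Th^2=c_h^2T^{1-2\lambda_h}\to\infty$), so this part is $o_p(1)$.

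There remains the constant-multiple term $E\psi_{t,m}\cdot\frac{1}{\sqrt T}\sum_t\frac{1}{Th}\sum_s K\!\big(\tfrac{t-s}{Th}\big)z_s$, which Proposition~\ref{keypro} does not address. Here I would swap the order of summation and invoke $\frac{1}{Th}\sum_{t=1}^T K\!\big(\tfrac{t-s}{Th}\big)=1+O(1/(Th))$ for interior $s$ and $O(1)$ for the $O(Th)$ boundary values of $s$; the interior correction contributes $O_p(1/(Th))$ and the two boundary blocks have $L^2$-norm $O(\sqrt h)$, so $\frac{1}{\sqrt T}\sum_t\frac{1}{Th}\sum_s K\!\big(\tfrac{t-s}{Th}\big)z_s=\frac{1}{\sqrt T}\sum_s z_s+o_p(1)$. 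Assembling the three reductions yields $U_9=E(g_t^{-1}\partial g_t/\partial\theta_m)\,\frac{1}{\sqrt T}\sum_t z_t+o_p(1)$, which is the asserted identity. The main obstacle is the Proposition~\ref{keypro} step: correctly casting the centered sum in the required bilinear form and, above all, checking that the interlocking moment demands on $z_t$, $\eta_t$ and $\psi_t$ can be met simultaneously by admissible $(\delta,\iota_1,\iota_2)$ under Assumptions~\ref{ass_u} and~\ref{ass_eta}(ii); the boundary bookkeeping in the last step, though delicate, is routine.
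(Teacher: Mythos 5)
Your proposal is correct and follows essentially the route the paper's machinery is designed for: you reduce $U_9$ via Lemmas \ref{lem_linton}--\ref{lem_initial} to the kernel-weighted bilinear sum $\frac{1}{\sqrt T}\sum_t \eta_t^2\psi_{t,m}\,\frac{1}{Th}\sum_s K\big(\tfrac{t-s}{Th}\big)z_s$, dispatch the centered fluctuation with Proposition \ref{keypro} (taking $c_t$ as in Lemma \ref{lem_mixing}, $a_s=z_s$, $b_t=\eta_t^2\psi_{t,m}-E\psi_{t,m}$, $n_T$ constant), and handle the non-centered $E\psi_{t,m}$ part by the sum-swapping/Riemann and boundary argument, which is precisely what yields the claimed $E\big(g_t^{-1}\partial g_t/\partial\theta_m\big)\,T^{-1/2}\sum_t z_t$ term. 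One small correction to your moment bookkeeping: $\iota_1$ should be taken near its maximal admissible value $2(1+\delta_0)/(1+2\delta)$ rather than merely ``slightly above $2$'' (otherwise $\iota_2$ blows up), so that $\iota_2(1+2\delta)$ stays close to $2+2/\delta_0$ and is covered by Assumption \ref{ass_eta}(ii); with that choice the feasibility you assert does hold.
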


%
%

\textsc{Proof of Theorem \ref{LMtest}.} See the supplementary material in \citet{JLZ:2019}.

\vspace{4mm}

\textsc{Proof of Theorem \ref{thm_port}.} Since $\overline{\widehat{\eta}^2}\to_p 1$  and $\frac{1}{T}\sum_{t=1}^{T}(\widehat{\eta}^2_t(\widehat{\theta}_T)-1)^2\to_p \kappa-1$, it suffices to consider $P_k$, where
\begin{flalign*}
P_k=&\frac{1}{\sqrt{T}}\sum_{t=k+1}^{T}\Big\{\widehat{\eta}^2_t(\widehat{\theta}_T)-1\Big\}\Big\{\widehat{\eta}^2_{t-k}(\widehat{\theta}_T)-1\Big\}\\
=&\frac{1}{\sqrt{T}}\sum_{t=k+1}^{T}\Big\{\frac{\widehat{u}^2_t}{\widehat{g}_t(\widehat{\theta}_T)}-1\Big\}
\Big\{\frac{\widehat{u}^2_{t-k}}{\widehat{g}_{t-k}(\widehat{\theta}_T)}-1\Big\}
\\=&\frac{1}{\sqrt{T}}\sum_{t=k+1}^{T}\Big\{\frac{\widehat{u}^2_t}{\widehat{g}_t(\widehat{\theta}_T)}
-\frac{u^2_t}{g_t(\theta_0)}+\frac{u^2_t}{g_t(\theta_0)}-1\Big\}
\Big\{\frac{\widehat{u}^2_{t-k}}{\widehat{g}_{t-k}(\widehat{\theta}_T)}-\frac{u^2_{t-k}}{g_{t-k}(\theta_0)}+\frac{u^2_{t-k}}{g_{t-k}(\theta_0)}-1\Big\}
\\:=&R_1+R_2+R_3+R_4,
\end{flalign*}
where
\begin{flalign*}
R_1=&\frac{1}{\sqrt{T}}\sum_{t=k+1}^{T}\Big\{\frac{u^2_t}{g_t(\theta_0)}-1\Big\}\Big\{\frac{u^2_{t-k}}{g_{t-k}(\theta_0)}-1\Big\},\\
R_2=&\frac{1}{\sqrt{T}}\sum_{t=k+1}^{T}\Big\{\frac{\widehat{u}^2_t}{\widehat{g}_t(\widehat{\theta}_T)}-\frac{u^2_t}{g_t(\theta_0)}\Big\}
\Big\{\frac{\widehat{u}^2_{t-k}}{\widehat{g}_{t-k}(\widehat{\theta}_T)}-\frac{u^2_{t-k}}{g_{t-k}(\theta_0)}\Big\},\\
R_3=&\frac{1}{\sqrt{T}}\sum_{t=k+1}^{T}\Big\{\frac{u^2_t}{g_t(\theta_0)}-1\Big\}
\Big\{\frac{\hat{u}^2_{t-k}}{\widehat{g}_{t-k}(\widehat{\theta}_T)}-\frac{u^2_{t-k}}{g_{t-k}(\theta_0)}\Big\},\\
R_4=&\frac{1}{\sqrt{T}}\sum_{t=k+1}^{T}\Big\{\frac{\widehat{u}^2_t}{\widehat{g}_t(\widehat{\theta}_T)}-\frac{u^2_t}{g_t(\theta_0)}\Big\}
\Big\{\frac{u^2_{t-k}}{g_{t-k}(\theta_0)}-1\Big\}.
\end{flalign*}

By Lemmas \ref{r2}--\ref{r4} below, we have
$$P_k=\frac{1}{\sqrt{T}}\sum_{t=k+1}^{T}(\eta_t^2-1)(\eta_{t-k}^2-1)
-D_k\sqrt{T}(\widehat{\theta}_T-\theta_0)-\frac{\omega_0}{\gamma_0}H_k\Big(\frac{1}{\sqrt{T}}\sum_{t=1}^{T}z_t\Big).$$

Together with (\ref{z_t})--(\ref{bahadur}), it follows that
\begin{flalign*}
P_k=&\frac{1}{\sqrt{T}}\sum_{t=k+1}^{T}(\eta_t^2-1)(\eta_{t-k}^2-1)
-D_kJ^{-1}\Big[\frac{1}{\sqrt{T}}\sum_{t=1}^{T}(\eta_t^2-1)\Big\{\psi_t-E\Big(\frac{\psi_t}{g_t}\Big)g_t\Big\}\Big]\\
&-H_k\Big\{\frac{1}{\sqrt{T}}\sum_{t=1}^{T}g_t(\eta_t^2-1)\Big\}+o_p(1).
\end{flalign*}
The result follows by the martingale central limit theorem. \qed


\begin{lemma}\label{r2}
	Under the conditions in Theorem \ref{thm_port},	$R_2=o_p(1).$
\end{lemma}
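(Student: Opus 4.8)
The plan is to reduce the bilinear form $R_2$ to a single quadratic sum and then control it through the moment bounds in Lemmas~\ref{lem_linton}--\ref{lem_mixing}. Write $\xi_t=\widehat{u}_t^2/\widehat{g}_t(\widehat{\theta}_T)-u_t^2/g_t(\theta_0)$, so that $R_2=T^{-1/2}\sum_{t=k+1}^{T}\xi_t\xi_{t-k}$; since $|\xi_t\xi_{t-k}|\le(\xi_t^2+\xi_{t-k}^2)/2$, it suffices to prove $T^{-1/2}\sum_{t=1}^{T}\xi_t^2=o_p(1)$. To this end I would split $\xi_t$ into a nonparametric and a parametric part. Using $\widehat{g}_t(\widehat{\theta}_T)\ge\underline{\omega}>0$, $g_t(\theta_0)\ge\omega_0>0$, and $\Delta_t=\widehat{u}_t^2-u_t^2$,
\[
|\xi_t|\le C|\Delta_t|+Cu_t^2\,\big|g_t(\theta_0)-\widehat{g}_t(\widehat{\theta}_T)\big|,
\]
and then decompose $g_t(\theta_0)-\widehat{g}_t(\widehat{\theta}_T)=[g_t(\theta_0)-g_t(\widehat{\theta}_T)]+[g_t(\widehat{\theta}_T)-\widetilde{g}_t(\widehat{\theta}_T)]+[\widetilde{g}_t(\widehat{\theta}_T)-\widehat{g}_t(\widehat{\theta}_T)]$, bounding the first bracket by a mean-value inequality $\le\sup_{\theta\in\Theta}\|\partial g_t(\theta)/\partial\theta\|\,|\widehat{\theta}_T-\theta_0|$. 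This gives
\[
\xi_t^2\le C\big(\Delta_t^2+u_t^4\sup_\theta\|\partial g_t(\theta)/\partial\theta\|^2\,|\widehat{\theta}_T-\theta_0|^2+u_t^4\sup_\theta|g_t(\theta)-\widetilde{g}_t(\theta)|^2+u_t^4\sup_\theta|\widetilde{g}_t(\theta)-\widehat{g}_t(\theta)|^2\big).
\]

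Next I would bound the $T^{-1/2}$-normalized sum of each of the four terms. For the first, Lemma~\ref{lem_diff}(i) combined with Lemma~\ref{lem_linton} gives $|\Delta_t|\le C\kappa_T u_t^2$ almost surely, uniformly in $t$, so that $T^{-1/2}\sum_t\Delta_t^2\le C\kappa_T^2\sqrt{T}\,(T^{-1}\sum_t u_t^4)=O_p(\sqrt{T}\kappa_T^2)$, which is $o_p(1)$ precisely because Assumption~\ref{ass_bandwidth} forces $\sqrt{T}\kappa_T^2=\sqrt{T}\big(\sqrt{(\log T)/(Th)}+h^2\big)^2\to0$ when $1/4<\lambda_h<1/2$. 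For the second term, $|\widehat{\theta}_T-\theta_0|^2=O_p(1/T)$ by Theorem~\ref{thm_garch}(ii) and $T^{-1}\sum_t u_t^4\sup_\theta\|\partial g_t(\theta)/\partial\theta\|^2=O_p(1)$ by the ergodic theorem (Lemma~\ref{lem_mixing}) under Assumptions~\ref{ass_u}--\ref{ass_eta}, so this term is $O_p(T^{-1/2})$. For the last two terms, H\"older's inequality together with Lemmas~\ref{lem_initial}(i) and \ref{lem_hat}(i) yields $E[u_t^4\sup_\theta|g_t(\theta)-\widetilde{g}_t(\theta)|^2]\le C\rho^{2t}$ and $E[u_t^4\sup_\theta|\widetilde{g}_t(\theta)-\widehat{g}_t(\theta)|^2]\le C\kappa_T^2$, whence the corresponding normalized sums have expectations $O(T^{-1/2})$ and $O(\sqrt{T}\kappa_T^2)=o(1)$; Markov's inequality then finishes.

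The step I expect to be the main obstacle is the moment bookkeeping in the H\"older splits involving $u_t^4$: one must check that the $L^\iota$-orders permitted in Lemmas~\ref{lem_diff}--\ref{lem_hat} (valid for $\iota\le4(1+2\delta)$) are compatible with $E|u_t|^{4(1+\delta_0)}<\infty$ so that every product above is integrable---this is exactly the role played by the strengthened moment conditions in Assumptions~\ref{ass_u} and \ref{ass_eta}(ii). Apart from that, the only genuinely substantive point is the interplay between $\kappa_T$ and the undersmoothing bandwidth: the nonparametric estimation error enters at the exact order $\sqrt{T}\kappa_T^2$, and it vanishes only because Assumption~\ref{ass_bandwidth} restricts $\lambda_h$ to $(1/4,1/2)$. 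The remaining manipulations are routine applications of the stated lemmas.
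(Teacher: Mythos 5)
Your reduction of $R_2$ to $T^{-1/2}\sum_{t}\xi_t^2$ via $|\xi_t\xi_{t-k}|\le(\xi_t^2+\xi_{t-k}^2)/2$ is fine, and so is your treatment of the nonparametric piece: $|\Delta_t|\le C\kappa_T u_t^2$ a.s.\ uniformly in $t$ (Lemmas \ref{lem_linton} and \ref{lem_diff}(i)), together with $\sqrt{T}\kappa_T^2\to0$ under Assumption \ref{ass_bandwidth}, only needs $Eu_t^4<\infty$. The gap is in the other three terms. By bounding $u_t^2\,|g_t(\theta_0)-\widehat{g}_t(\widehat{\theta}_T)|\big/\big(\widehat{g}_t(\widehat{\theta}_T)g_t(\theta_0)\big)$ by $C\,u_t^2\,|g_t(\theta_0)-\widehat{g}_t(\widehat{\theta}_T)|$ you throw away exactly the normalization that makes the moments finite. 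After squaring you need $E\big[u_t^4\sup_{\theta}\|\partial g_t(\theta)/\partial\theta\|^2\big]<\infty$ and $E\big[u_t^4\sup_{\theta}|\widehat{g}_t(\theta)-\widetilde{g}_t(\theta)|^2\big]\le C\kappa_T^2$; but $\sup_{\theta}\|\partial g_t(\theta)/\partial\theta\|$ and $\widehat{g}_t(\theta)-\widetilde{g}_t(\theta)$ are themselves of the size of $\sum_{k}\rho^k u_{t-k}^2$, so these are eighth-moment objects in $u_t$. Assumption \ref{ass_u} only gives $E|u_t|^{4(1+\delta_0)}<\infty$ with $\delta_0$ possibly tiny, and the H\"older split you invoke would require the $g$-differences (or derivatives) in $L^{2(1+\delta_0)/\delta_0}$, far beyond the range $\iota\le 4(1+2\delta)$ covered by Lemmas \ref{lem_hat}--\ref{lem_initial}; the extra moments in Assumption \ref{ass_eta}(ii) are on $\eta_t$, not on $u_t$, and cannot rescue a product of two $u^4$-type factors. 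So the step you dismissed as bookkeeping is where the argument actually breaks.

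The repair is to keep the ratio structure that the paper's lemmas are built around. Write $u_t^2/g_t(\theta_0)=\eta_t^2$ and split $\widehat{g}_t^{-1}(\widehat{\theta}_T)-g_t^{-1}(\theta_0)$ into (a) $\widetilde{S}_t(\widehat{\theta}_T)=\widehat{g}_t^{-1}(\widehat{\theta}_T)-\widetilde{g}_t^{-1}(\widehat{\theta}_T)$, which Lemma \ref{lem_diff}(ii) bounds by $C\kappa_T$ \emph{uniformly}, so that $u_t^2|\widetilde{S}_t(\widehat{\theta}_T)|\le C\kappa_T u_t^2$ needs no extra moments; (b) the initial-value difference $\widetilde{g}_t^{-1}(\widehat{\theta}_T)-g_t^{-1}(\widehat{\theta}_T)$, handled by Lemma \ref{lem_initial} against normalized quantities; and (c) $g_t^{-1}(\widehat{\theta}_T)-g_t^{-1}(\theta_0)$, where the mean value theorem gives a bound by $\sup_{\theta\in V(\theta_0)}\|\psi_t(\theta)/g_t(\theta)\|\,|\widehat{\theta}_T-\theta_0|$, so that $u_t^2$ times it is at most $\eta_t^2\sup_{\theta\in V(\theta_0)}\big(g_t(\theta_0)/g_t(\theta)\big)\sup_{\theta\in V(\theta_0)}\|\psi_t(\theta)\|\,|\widehat{\theta}_T-\theta_0|$; the standard GARCH facts that $\sup_{\theta\in V(\theta_0)}\|\psi_t(\theta)\|$ has all moments and $\sup_{\theta\in V(\theta_0)}g_t(\theta_0)/g_t(\theta)$ has sufficiently high moments for a small neighborhood, together with $\sqrt{T}(\widehat{\theta}_T-\theta_0)=O_p(1)$, then give $o_p(1)$ without eight moments of $u_t$. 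This is the same normalized pattern the paper uses in its score expansion ($U_1$--$U_{12}$, where every factor appears as $\widetilde{g}_t^{-1}$, $\widetilde{S}_t$, or $\psi_t$-type ratios). With that modification your overall scheme (the AM--GM reduction and the rate $\sqrt{T}\kappa_T^2\to0$) goes through; as written, it does not.
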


\begin{lemma}\label{r3}
	Under the conditions in Theorem \ref{thm_port}, $R_3=o_p(1)$.
\end{lemma}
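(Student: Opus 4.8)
The plan is to write $R_3=\frac{1}{\sqrt T}\sum_{t=k+1}^{T}(\eta_t^2-1)\,\xi_{t-k}$, where $\xi_{t-k}:=\widehat u_{t-k}^2/\widehat g_{t-k}(\widehat\theta_T)-u_{t-k}^2/g_{t-k}(\theta_0)$, and to exploit that $\{\eta_t^2-1\}$ is a stationary martingale difference sequence \emph{independent} of the past $\mathcal{F}_{t-1}$, while $u_{t-k}^2/g_{t-k}(\theta_0)=\eta_{t-k}^2$. Note that $R_3$ is \emph{not} a martingale sum, because $\xi_{t-k}$ carries full-sample dependence through $\widehat\tau_{t-k}$ (and through $\widehat\theta_T$); this is what makes the bound nontrivial. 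First I would split $\xi_{t-k}$ into a parametric-error piece $\widehat u_{t-k}^2\{\widehat g_{t-k}(\widehat\theta_T)^{-1}-\widehat g_{t-k}(\theta_0)^{-1}\}$, an initial-value piece carried by $\widetilde g_{t-k}(\theta_0)^{-1}-g_{t-k}(\theta_0)^{-1}$, a piece carried by $\widetilde S_{t-k}(\theta_0)=\widehat g_{t-k}(\theta_0)^{-1}-\widetilde g_{t-k}(\theta_0)^{-1}$, and a piece carried by $\Delta_{t-k}=\widehat u_{t-k}^2-u_{t-k}^2$, and treat each in turn.

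For the parametric-error piece, Taylor-expanding $\widehat g_{t-k}(\cdot)^{-1}$ at $\theta_0$ produces a first-order term $-(\widehat\theta_T-\theta_0)'\cdot\frac{1}{\sqrt T}\sum_t(\eta_t^2-1)\,\widehat u_{t-k}^2\,\widehat g_{t-k}(\theta_0)^{-1}\widehat\psi_{t-k}(\theta_0)$ plus a quadratic remainder of order $\|\widehat\theta_T-\theta_0\|^2\sqrt T=o_p(1)$, using $\widehat\theta_T-\theta_0=O_p(T^{-1/2})$ from Theorem \ref{thm_garch}(ii). Replacing the hatted quantities by $u_{t-k}^2$, $g_{t-k}(\theta_0)^{-1}$ and $\psi_{t-k}$ via Lemmas \ref{lem_diff}--\ref{lem_initial} leaves $(\widehat\theta_T-\theta_0)'\frac{1}{\sqrt T}\sum_t(\eta_t^2-1)\eta_{t-k}^2\psi_{t-k}+o_p(1)$; the surviving sum is a genuine $\mathcal{F}_{t-1}$-martingale with bounded second moment (by $E(\eta_t^2-1)^2=\kappa-1$ and $E\|\psi_t\|^2<\infty$), hence $O_p(1)$, so the whole piece is $O_p(T^{-1/2})=o_p(1)$ (the replacement discrepancy is itself handled by the machinery of the next paragraph and is moreover multiplied by the extra factor $\widehat\theta_T-\theta_0$). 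The initial-value piece is $o_p(1)$ at once from Lemma \ref{lem_initial}, since $\|\widetilde g_{t-k}(\theta_0)^{-1}-g_{t-k}(\theta_0)^{-1}\|_\iota=O(\rho^{t-k})$ and $\frac{1}{\sqrt T}\sum_t\rho^{t-k}=O(T^{-1/2})$.

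The heart of the argument is the $\widetilde S_{t-k}$ and $\Delta_{t-k}$ pieces. Combining Lemma \ref{lem_diff}(i) with Lemma \ref{lem_linton}(i), and invoking Lemma \ref{lem_difftau} to replace the various kernel centers $\tau_{t-k-m}^{-1}(\widehat\tau_{t-k-m}-\tau_{t-k-m})$ that enter $\widetilde S_{t-k}$ through the $B$-recursion by the single center $\tau_{t-k}^{-1}(\widehat\tau_{t-k}-\tau_{t-k})$ (the coefficients of that recursion telescope, just as in the proof of Lemma \ref{pro_u5}, into a quantity depending only on $g_{t-k}$ and $\omega_0/\gamma_0$), the total contribution of these two pieces reduces, up to negligible remainders, to
$$-\frac{1}{\sqrt T}\sum_{t=k+1}^{T}(\eta_t^2-1)\,\Phi_{t-k}\,\Big\{\frac{1}{Th}\sum_{s=1}^{T}K\Big(\frac{(t-k)-s}{Th}\Big)(u_s^2-1)\Big\},$$
where $\Phi_{t-k}$ is a stationary process built from $\eta_{t-k}^2$, $g_{t-k}^{-1}$ and $\omega_0/\gamma_0$, hence a function of $c_{t-k}:=(u_{t-k},g_{t-k},\partial g_{t-k}(\theta_0)/\partial\theta)$ alone. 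The discarded remainders fall in three groups: (i) the $O(h^2)$ bias term of Lemma \ref{lem_linton}(i), which after multiplication by $(\eta_t^2-1)$ times an $\mathcal{F}_{t-1}$-measurable deterministic weight is a martingale of second moment $O(h^4)$; (ii) the $O((\log T)/(Th))$ and $o(h^2)$ terms of Lemma \ref{lem_linton}(i) and the $O(\kappa_T^2)$ term of Lemma \ref{lem_diff}(i), bounded crudely by $\frac{1}{\sqrt T}\sum_t|(\eta_t^2-1)\eta_{t-k}^2|\cdot(\text{bound})=\sqrt T\,(\text{bound})\,O_p(1)$, which is $o_p(1)$ because $\sqrt T h^2\to0$, $(\log T)/(\sqrt T h)\to0$ and $\sqrt T\kappa_T^2\to0$ under Assumption \ref{ass_bandwidth}; and (iii) the geometric tail of the $B$-series. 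For the surviving term I would set $a_s:=u_s^2-1=f(c_s)$ and $b_t:=(\eta_t^2-1)\Phi_{t-k}=g(c_t,c_{t-k})$; these have mean zero ($Eb_t=0$ since $\eta_t$ is independent of $\sigma(c_{t-k})\subseteq\mathcal{F}_{t-1}$), the moment conditions $E|a_s|^{\iota_1(1+2\delta)}<\infty$ and $E|b_t|^{\iota_2(1+2\delta)}<\infty$ with $\iota_1^{-1}+\iota_2^{-1}=1/2$ hold for a suitably small $\delta>0$ under Assumptions \ref{ass_u} and \ref{ass_eta}(ii), and $\{c_t\}$ is strictly stationary and geometrically $\beta$-mixing by Lemma \ref{lem_mixing}. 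After re-indexing $t\mapsto t-k$ so that the kernel is centered at the running index, the term is exactly the statistic $S_T$ of Proposition \ref{keypro} with $n_T=k=\ell$ fixed, so Proposition \ref{keypro} gives $|ES_T|\le Ck/(\sqrt T h)$ and $ES_T^2\le C\max\{k/\sqrt{Th},1/(Th^2)\}$, both of which vanish because $1/4<\lambda_h<1/2$ forces $\sqrt T h\to\infty$ and $Th^2\to\infty$; hence this term is $o_p(1)$. Collecting the pieces gives $R_3=o_p(1)$.

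The main obstacle is precisely the displayed surviving term: it is not a martingale sum, because $\widehat\tau_{t-k}$ averages $y_s^2$ over $s$ lying on both sides of $t$, so the only handle is Proposition \ref{keypro}, and bringing the term into its scope requires the delicate bookkeeping of expanding $\widetilde S_{t-k}$ through the $B$-recursion, aligning the resulting kernel centers by Lemma \ref{lem_difftau}, and noticing that the accumulated weight telescopes into a function of $c_{t-k}$ so that $b_t=g(c_t,c_{t-k})$ has the two-argument form demanded by the proposition; all other contributions collapse to routine martingale second-moment estimates and $\rho^{t}$-tail bounds.
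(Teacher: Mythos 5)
Your proposal is correct and follows essentially the same route as the paper: the same decomposition into parametric-error, initial-value, $\widetilde S_{t-k}$ and $\Delta_{t-k}$ pieces, the telescoping of the $B$-recursion combined with Lemmas \ref{lem_linton}, \ref{lem_diff} and \ref{lem_difftau} to produce a single kernel-averaged term with a centered two-point weight, and Proposition \ref{keypro} (with Assumptions \ref{ass_u} and \ref{ass_eta}(ii) supplying exactly the moment pair $\iota_1,\iota_2$) to kill it, the key observation being that the multiplier $\eta_t^2-1$ makes $Eb_t=0$ so no drift term survives, unlike in Lemma \ref{r4}. One cosmetic correction: rather than re-indexing $t\mapsto t-k$ (which would make $b$ anticipative relative to its index), simply shift the kernel center from $t-k$ to $t$ via Lemma \ref{lem_difftau} or the Lipschitz continuity of $K$, which costs only $O_p(k/(\sqrt{T}h))=o_p(1)$ and puts the term literally in the form required by Proposition \ref{keypro}.
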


\begin{lemma}\label{r4}
	Under the conditions in Theorem \ref{thm_port}, $R_{4}=-D_k\sqrt{T}(\widehat{\theta}_T-\theta_0)-\frac{\omega_0}{\gamma_0}H_k$ $\big(\frac{1}{\sqrt{T}}\sum_{t=1}^{T}z_t\big)+o_p(1)$.
\end{lemma}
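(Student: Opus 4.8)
The plan is to isolate the two mechanisms that generate $R_4$: the sampling error $\widehat{\theta}_T-\theta_0$ of the second‑step QMLE, and the nonparametric error $\widehat{\tau}_t-\tau_t$ of the first‑step estimator (together with the initial‑value slack). Write $\widehat{\eta}_t^2(\widehat{\theta}_T)-\eta_t^2=[\widehat{\eta}_t^2(\widehat{\theta}_T)-\widehat{\eta}_t^2(\theta_0)]+[\widehat{\eta}_t^2(\theta_0)-\eta_t^2]=:T_{1,t}+T_{2,t}$, so $R_4=R_{4,1}+R_{4,2}$ with $R_{4,j}=T^{-1/2}\sum_{t=k+1}^T T_{j,t}(\eta_{t-k}^2-1)$. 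For $R_{4,1}$, a mean‑value expansion of $\theta\mapsto\widehat{\eta}_t^2(\theta)=\widehat{u}_t^2/\widehat{g}_t(\theta)$ about $\theta_0$ gives $T_{1,t}=-\widehat{\eta}_t^2(\bar{\theta}_t)\widehat{\psi}_t(\bar{\theta}_t)'(\widehat{\theta}_T-\theta_0)$ for some $\bar{\theta}_t$ between $\widehat{\theta}_T$ and $\theta_0$, so $R_{4,1}=-\{T^{-1}\sum_{t=k+1}^T\widehat{\eta}_t^2(\bar{\theta}_t)\widehat{\psi}_t(\bar{\theta}_t)'(\eta_{t-k}^2-1)\}\,\sqrt{T}(\widehat{\theta}_T-\theta_0)$. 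Using $\widehat{u}_t^2=u_t^2+\Delta_t$ with $\Delta_t=O(\kappa_T)u_t^2$ (Lemma \ref{lem_diff}(i)), replacing $\widehat{g}_t,\partial_\theta\widehat{g}_t$ by $\widetilde{g}_t,\partial_\theta\widetilde{g}_t$ (Lemma \ref{lem_hat}) and then by $g_t,\partial_\theta g_t$ (Lemma \ref{lem_initial}), together with consistency of $\widehat{\theta}_T$ (hence of $\bar{\theta}_t$) and a uniform ergodic law of large numbers, the bracketed matrix converges in probability to $E[\eta_t^2\psi_t'(\eta_{t-k}^2-1)]$; since $\eta_t$ is independent of $(\psi_t,\eta_{t-k})$ and $E\eta_t^2=1$, this limit is $E[(\eta_{t-k}^2-1)\psi_t']=D_k$, and as $\sqrt{T}(\widehat{\theta}_T-\theta_0)=O_p(1)$ by Theorem \ref{thm_garch}(ii), $R_{4,1}=-D_k\sqrt{T}(\widehat{\theta}_T-\theta_0)+o_p(1)$.

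The substantive work is $R_{4,2}$. Expanding $T_{2,t}=\widehat{u}_t^2/\widehat{g}_t(\theta_0)-u_t^2/g_t$ via $\widehat{g}_t^{-1}=\widetilde{g}_t^{-1}+\widetilde{S}_t$ and $\widehat{u}_t^2=u_t^2+\Delta_t$ gives $T_{2,t}=u_t^2\widetilde{S}_t+\Delta_t\widetilde{g}_t^{-1}+\Delta_t\widetilde{S}_t+(\widetilde{g}_t^{-1}-g_t^{-1})u_t^2$, and the last two terms are $O(\kappa_T^2)u_t^2$ and $O(\rho^t)u_t^2$ (Lemmas \ref{lem_diff}(ii), \ref{lem_initial}), whose contributions to $R_{4,2}$ are $o_p(1)$ because $\sqrt{T}\kappa_T^2\to0$ under Assumption \ref{ass_bandwidth} and $T^{-1/2}\sum_t\rho^t(\cdots)=o_p(1)$. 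In the two leading terms I substitute $\Delta_s=\tau_s^{-1}(\tau_s-\widehat{\tau}_s)u_s^2+O(\kappa_T^2)u_s^2$, write $\widetilde{S}_t=-\widehat{g}_t^{-1}\widetilde{g}_t^{-1}(\widehat{g}_t-\widetilde{g}_t)$ with $\widehat{g}_t-\widetilde{g}_t=\sum_{l\ge1}\pi_l\Delta_{t-l}$ ($\{\pi_l\}$ the geometrically decaying coefficients of $\mathcal{B}_{\theta_0}(z)^{-1}\mathcal{A}_{\theta_0}(z)$), replace $\widehat{g}_t^{-1}\widetilde{g}_t^{-1}$ and $\widetilde{g}_t^{-1}$ by $g_t^{-2}$ and $g_t^{-1}$ (Lemmas \ref{lem_hat}, \ref{lem_initial}), and freeze $\tau_{t-l}^{-1}(\widehat{\tau}_{t-l}-\tau_{t-l})=\tau_t^{-1}(\widehat{\tau}_t-\tau_t)+o(T^{-1/2})$ for $l\le m_T$ with $m_T=O(\log T)$ (Lemma \ref{lem_difftau}), the tails $l>m_T$ being killed by the geometric decay of $\pi_l$ and $\Delta_{t-l}$. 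Since $\sum_{l\ge1}\pi_l u_{t-l}^2=g_t-\omega_0/\gamma_0$, this yields $u_t^2\widetilde{S}_t=-\eta_t^2g_t^{-1}(g_t-\omega_0/\gamma_0)\,\tau_t^{-1}(\tau_t-\widehat{\tau}_t)+(\mathrm{negl.})$ and $\Delta_t\widetilde{g}_t^{-1}=\eta_t^2\,\tau_t^{-1}(\tau_t-\widehat{\tau}_t)+(\mathrm{negl.})$, whose sum is $\frac{\omega_0}{\gamma_0}\,\frac{\eta_t^2}{g_t}\,\frac{\tau_t-\widehat{\tau}_t}{\tau_t}$. Hence $R_{4,2}=\frac{\omega_0}{\gamma_0}\,T^{-1/2}\sum_{t=k+1}^T\frac{\eta_t^2(\eta_{t-k}^2-1)}{g_t}\,\frac{\tau_t-\widehat{\tau}_t}{\tau_t}+o_p(1)$.

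It remains to evaluate this sum. By Lemma \ref{lem_linton}(i), $\tau_t^{-1}(\tau_t-\widehat{\tau}_t)=-\frac1T\sum_{s=1}^T K_h(\frac{t-s}{T})z_s-h^2 b(t/T)/\tau_t+r_t$ uniformly in $t$ with $\sup_t|r_t|=O(\frac{\log T}{Th})+o(h^2)$; the $h^2 b(\cdot)$ and $r_t$ pieces contribute $O_p(h^2\sqrt{T})+O_p(\frac{\log T}{\sqrt{T}h})=o_p(1)$ precisely because $1/4<\lambda_h<1/2$. For the leading kernel piece I center $b_t:=\eta_t^2 g_t^{-1}(\eta_{t-k}^2-1)$, which has mean $E\eta_t^2\cdot E[g_t^{-1}(\eta_{t-k}^2-1)]=H_k$, as $b_t=(b_t-H_k)+H_k$. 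The centered part $T^{-1/2}\sum_t(b_t-H_k)\cdot(Th)^{-1}\sum_s K(\frac{t-s}{Th})z_s$ is an $S_T$ of the form in Proposition \ref{keypro} with $c_t=(u_t,g_t,\partial_\theta g_t(\theta_0))$ (strictly stationary and geometrically $\beta$-mixing by Lemma \ref{lem_mixing}), $a_t=z_t$ (mean zero, moments from Assumption \ref{ass_u}), $b_t-H_k=g(c_t,c_{t-k})$ (mean zero, moments from Assumption \ref{ass_eta}(ii), with $\iota_1,\iota_2$ matching for $\delta$ small), and $n_T\equiv k$; Proposition \ref{keypro}(ii) then gives $ES_T^2=O(\frac1{\sqrt{Th}})+O(\frac1{Th^2})\to0$, so it is $o_p(1)$. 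The $H_k$ part equals $H_k\,T^{-1/2}\sum_s z_s\,(T^{-1}\sum_t K_h(\frac{t-s}{T}))$; since $T^{-1}\sum_t K_h(\frac{t-s}{T})=1+O(\frac1{Th})$ for interior $s$ and is uniformly bounded at the $O(Th)$ boundary indices, and $\{z_s\}$ is mean‑zero and weakly dependent, this differs from $H_k\,T^{-1/2}\sum_{s=1}^T z_s$ by $O_p(\frac1{\sqrt{T}h})+O_p(\sqrt{h})=o_p(1)$. Collecting, $R_{4,2}=-\frac{\omega_0}{\gamma_0}H_k\,T^{-1/2}\sum_{t=1}^T z_t+o_p(1)$, and adding $R_{4,1}$ proves the lemma.

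The main obstacle is this last chain: correctly pinning down the leading term $\frac{\omega_0}{\gamma_0}\frac{\eta_t^2}{g_t}\frac{\tau_t-\widehat{\tau}_t}{\tau_t}$ of $T_{2,t}$ — which hinges on the ARMA identity $\sum_{l}\pi_l u_{t-l}^2=g_t-\omega_0/\gamma_0$ and on the freezing estimate of Lemma \ref{lem_difftau} — and then disposing of the kernel‑weighted cross sum via Proposition \ref{keypro}, where one must verify that the moment exponents in condition (1) can be met under Assumptions \ref{ass_u} and \ref{ass_eta}(ii) and must handle the $O(Th)$ boundary indices of the normalization $T^{-1}\sum_t K_h$ separately. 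All remaining steps are routine bookkeeping of $O(\kappa_T)$, $O(\kappa_T^2)$ and $O(\rho^t)$ errors using Lemmas \ref{lem_diff}--\ref{lem_mixing}.
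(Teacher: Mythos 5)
Your argument is correct and follows essentially the same route the paper takes: you split $R_4$ into the $\widehat{\theta}_T-\theta_0$ part (yielding $-D_k\sqrt{T}(\widehat{\theta}_T-\theta_0)$ via a mean-value expansion and an ergodic argument) and the $\widehat{\tau}_t-\tau_t$ part, which you reduce—using Lemma \ref{lem_diff}, Lemmas \ref{lem_hat}--\ref{lem_initial}, the ARMA/$\pi_l$ identity behind (\ref{z_t}), and the freezing step of Lemma \ref{lem_difftau}—to the leading term $\frac{\omega_0}{\gamma_0}\frac{\eta_t^2}{g_t}\frac{\tau_t-\widehat{\tau}_t}{\tau_t}$, and then dispose of the kernel-weighted cross sum by Proposition \ref{keypro} (with the moment exponents matched exactly as Assumptions \ref{ass_u} and \ref{ass_eta}(ii) were designed for) so that only $-\frac{\omega_0}{\gamma_0}H_k\frac{1}{\sqrt{T}}\sum_{t=1}^{T}z_t$ survives. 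This is the same machinery and decomposition the paper's (supplementary) proof is built on, so no further comparison is needed.
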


The proofs of Lemmas \ref{r2}--\ref{r4} are given in the supplementary material (\citet{JLZ:2019}).

\vspace{4mm}

\textsc{Proof of Theorem \ref{thm_improve}.}  See the supplementary material in \citet{JLZ:2019}.

\vspace{4mm}
	
\textsc{Proof of Theorem \ref{thm_BEKK}.}  See the supplementary material in \citet{JLZ:2019}.

\end{document}